\tikzset{node distance=2cm, auto}
\newtheorem{theorem}{Theorem}
\newtheorem{definition}{Definition}
\newtheorem{proposition}{Proposition}
\newtheorem{lemma}{Lemma}
\newtheorem{corollary}{Corollary}
\newcommand{\g}{\mathfrak{g}}
\newcommand{\hfr}{\mathfrak{h}}
\newcommand{\RR}{\mathbb{R}}
\newcommand{\CC}{\mathbb{C}}
\newcommand{\CH}{\mathcal{H}}
\newcommand{\CL}{\mathcal{L}}
\newcommand{\CO}{\mathcal{O}}
\newcommand{\CR}{\mathcal{R}}
\newcommand{\Diff}{\mathbb{D}}
\newcommand{\Vect}{\mbox{Vect}}
\newcommand{\End}{\textup{End}}
\newcommand{\R}{\mathcal{R}}
\newcommand{\pa}{\partial}
\newenvironment{dedication}
        {\vspace{6ex}\begin{quotation}\begin{center}\begin{em}}
        {\par\end{em}\end{center}\end{quotation}}
\title{Spin Calogero-Moser periodic chains and two dimensional Yang-Mills theory with corners.}
\author{Nicolai Reshetikhin}
\address{N.R.: YMSC, Tsinghua University, Beijing, China,\& BIMSA, Beijing, China \& Department of Mathematics, University of California, Berkeley, CA 94720, USA.}
\email{reshetik@math.berkeley.edu}
\begin{document}

\maketitle

\begin{dedication}
\hspace{4cm}
\vspace*{3cm}{Dedicated to the memory of Vaughan Jones.}
\end{dedication}

\begin{abstract}
Quantum Calogero-Moser spin system is a superintegable system 
with the spectrum of commuting Hamiltonians that can be described 
entirely in terms of representation theory of corresponding simple Lie group.
In this paper the underlying Lie group $G$ is a compact connected, simply connected simple Lie group. It has a natural generalization known as quantum Calogero-Moser spin chain. In the first part of the paper we show that quantum Calogero-Moser spin chain is a quantum
superintegrable systems. Then we show that the Euclidean multi-time propagator 
for this model can be written as 
a partition function of a two-dimensional Yang-Mills theory on a cylinder.
Then we argue that the two-dimensional Yang-Mills theory with Wilson loops 
with "outer ends" should be regarded as the theory on space times with non-removable
corners. Partition functions of such theory satisfy non-stationary Calogero-Moser equations.
\end{abstract}

\section*{Introduction} 

In this paper we consider the two-dimensional Yang-Mills (2D YM) theory on surfaces with 
open Wilson graphs. The partition function in this case is a vector in the tensor product of
finite dimensional representations of a simple compact Lie group. 
We show that it is a solution to $N$-point spin Calogero-Moser (CM) 
evolution with initial conditions 
determined by the structure of the Wilson graph. 

One can draw an analogy between these models and the Chern-Simons theory 
with Wilson lines ending on the two-dimensional boundary of a 3-manifold. 

Quantum spin Calogero-Moser systems have a long history. 
See for example \cite{Feh}\cite{R1} and
references therein. In this paper we will work with so called $N$-spin version of CM systems. 
These models of many-particle quantum systems with internal (spin) degrees of 
freedom are examples of quantum superintegrable systems. For the superintegrability of quantum CM spin systems and for the notion of quantum superintegrability see \cite{R1}\cite{R-ICMP}. For the superintegrability of classical $N$-point spin CM chain see \cite{R4}

Two dimensional model of Yang-Mills (YM) theory is a two dimensional topological 
quantum field theory. A non-perturbative partition function for a disc was introduced in \cite{Mig}. The 2D YM theory was studied as a TQFT in papers
\cite{Witt-1}\cite{Witt-2} where among other things a convincing argument
was given that the non-perturbative formula is a version of the Duijstermaat-Heckman
localization. For the latest study of perturbative aspects of the YM theory on surfaces with boundary
and for more references see \cite{Mn}. To be more precise, the YM theory on surfaces is not exactly topological.
The partition function depends on the total area of the surface (assuming it is finite), and, possibly,
on more parameters which are additive with respect to gluing, when the rank of the gauge group is
greater than one \cite{Witt-2}.

The partition function and expectation values of Wilson graphs (line observables 
in this TQFT) can be interpreted as an analytic continuation of the semiclassical limit
of quantum Chern-Simons invariants \cite{PR}\cite{Witt-2}. If the
corresponding quantum group \cite{RT} has $q=\exp(i\frac{\pi}{r})$ with $r=k+h^\vee$ where $k$ is the level of the quantum Chern-Simons theory \cite{W}, $h^\vee$ is the dual Coxeter number, the semiclassical limit is when $k\to \infty$. The quantum Chern-Simons invariant of
of a circle bundle over $\Sigma$ with the Chern class $m$ in the limit when $k\to\infty, m\to\infty$ and $a=\frac{m}{r}$ is finite becomes the  partition function of the 2D YM with the area $ia$ in the limit . When $m=0$ this was first observed in \cite{Witt-2}. For $a\neq 0$
see \cite{PR}.

In this paper we show that the partition function of the 2D YM theory on surfaces with open Wilson graphs are solutions to non-stationary Calogero-Moser type differential equations. For example the partition function on a cylinder with Wilson lines that are parallel to the cylinder is the multi-time propagator for quantum $N$-spin Calogero-Moser system with Euclidean time being areas of corresponding strips of the cylinder.
We also argue that such partition functions can be interpreted as a $2-1-0$ TQFT with non-removable corners (points at the boundary where Wilson lines "go out"). 

In the first section we define quantum $N$-spin Calogero-Moser model for a
compact simple Lie group $G$. In the second section we explain the relation between quantum $N$-spin Calogero-Moser system and the 2D YM with open Wilson graphs. 

\subsection{Acknowledments} This paper was started as a joint project with Jasper Stokman. The author is grateful to Jasper for many discussions and for the collaboration on this paper.
He also would like to thank S. Gukov, V. Kazakov, P. Mnev and G. Moore for references, discussions and comments and to D. Solovyev for help with figures. 
N.R. was partially supported by 
the Collaboration Grant "Categorical Symmetries" from the Simons 
Foundation, by the NSF grant DMS-1902226 and by the RSF grant 21-11-00141. 
This paper grew out of the talk given by N.R. at the "Physics and Geometry" 
seminar at BIMSA in the Spring 2022. He is grateful to participants of this seminar for 
interesting discussions.

\section{Quantum $N$-spin Calogero-Moser system and its superintegrability}

\subsection{Quantum superintegrable systems}
Let $A_h$ be a family of associative algebras with trivial center. We think of each algebra $A$ from this family as the algebra of quantum observables for a quantum system. 

A commutative algebra has rank $k$ means that it has maximum $k$ algebraically independent elements, i.e. $k$ is the Krull dimension of $I$. 
We will say that an associative algebra $A$ has rank $n$ if $n$ is the rank of maximal commutative subalgebra in $A$. 

\begin{definition} A commutative subalgebra $I\subset A$ of rank $k$ is a quantum superintegrable
system on $A$ if the rank of the centralizer of $I$ in $A$ is $n$ where $n$ is the rank of $A$.
\end{definition}

For a superintegrable quantum system we have embeddings of associative algebras:
\[
I\subset Z(I,A)\subset A
\]
where $Z(I,A)$ is the centralizer of $I$ in $A$ and $\mbox{rank}(Z(I, A))=\mbox{rank}(A)$.

An example of a superintegrable system is when $A$ is the algebra of differential operators in variable $x_1,\dots, x_n$ with polynomial coefficients, $I$ is the algebra of differential operators in $x_1,\dots, x_k$ with constant coefficients. In this case $Z(I,A)$ is the algebra of differential operators in $x_1,\dots, x_n$ with polynomial coefficients in $x_{k+1},\dots, x_n$.

\subsection{Algebraic preliminaries}
In this section we develop an algebraic set up for defining quantum $N$-spin Calogero-Moser system.

\subsubsection{} Throughout this paper $G$ is a compact, simple, simply connected Lie group. We will  denote by $\mathcal{R}(G)$ the algebra of functions on $G$ generated by matrix elements of finite
dimensional representations. It is a span of matrix elements of irreducible representations\footnote{ There is a natural isomorphism of $G\times G$-modules 
\[
\CR(G)\simeq \bigoplus_\lambda V_\lambda\otimes V_\lambda^*
\]
where the sum is taken over all irreducible finite dimensional representation of
$G$ and $V_\lambda$ is the irreducible representation with the highest weight $\lambda$.
Note that space on the right consists of finite direct sums of unbounded length.}.

The space $\CR(G)$ is an open dense subspace in the Hilbert space $L^2(G)$ of square 
integrable functions with respect the Haar measure on $G$. 

We denote by $\Diff(G)$ the algebra of differential operators on $G$ preserving $\CR(G)$. Let $\Vect(G)$ be the subalgebra 
in the Lie algebra of all vector fields on $G$ which correspond to first order differential
operators in  $\Diff(G)$.

The Lie group $G$ acts on itself by left and right translations: $h: g\mapsto hg$ and $h: g\mapsto gh^{-1}$. These actions give two homomorphisms of Lie algebras $\g\to \Vect(G)$ and, as a consequence, two algebra inclusions $\widehat{\mu}_\ell,\widehat{\mu}_r: U\g\rightarrow\Diff(G)$ (quantum mement maps) as right and left $G$-invariant differential operators, respectively. 

Let $Z(\g)$ be the center of $U\g$, and $S$ the antipode for $U\g$, i.e. 
the anti-involution of $U\g$ that acts as $x\to -x$ on elements $\g\subset U\g$.

\subsubsection{} Consider the group $G^N$ acting on itself by left translations :
\[
(h_1,\dots, h_N): (g_1,g_2,\dots, g_N)\mapsto (h_1g_1, h_2g_2, \dots, h_{N}g_{N})
\]
by right translations twisted by the cyclic permutation:
\[
(h_1,\dots, h_N): (g_1,g_2,\dots, g_N)\mapsto (g_1h_2^{-1}, g_2h_3^{-1}, \dots, g_{N}h_1^{-1})
\]

The combination of these two actions
acts as:
\begin{equation}\label{g-action}
(h_1,\dots, h_N): (g_1,g_2,\dots, g_N)\mapsto (h_1g_1h_2^{-1}, h_2g_2h_3^{-1}, \dots, h_{N}g_{N}h_1^{-1})
\end{equation}

Note that the action by twisted conjugations has a natural interpretation as 
the gauge action parallel transports for a connection in a trivial principal $G$-bundle over a circle with with marked points. In this interpretation
$g_i$ is the holonomy along the interval connecting points $i$ and $i+1$, and
$h_i$ is the gauge transformation at the $i$-th point, see Fig. \ref{circle}. This is 
why we {\it call (\ref{g-action}) the gauge action}.

\begin{figure}[t!]\label{circle}
\begin{center}
\includegraphics[width=0.6\textwidth, angle=0.0, scale=0.7]{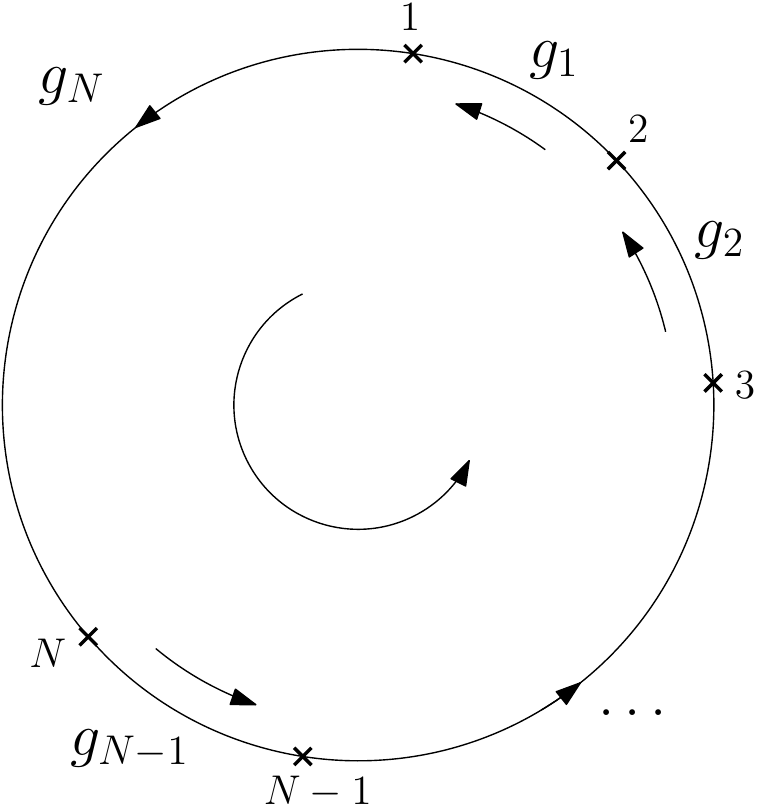}
\caption{\label{circle} Holonomies along integravls and gauge transformations. Here $g_i$ stands for the holonomy between points $i$ and $i+1$. The gauge transformation at points $i-1$ and $i$ acts on $g_i$ as $g_i\mapsto h_{i}g_ih_{i+1}^{-1}$. Arrow show the orientation of the circle and of the disc that it bounds. }
\end{center}
\end{figure}

The left, the right and the gauge action of $G^N$ on itself lift to actions on $\Diff(G^N)$.
Quantum moment maps for these actions give injective homomorphisms of associative algebras
$\widehat{\mu}_L, \widehat{\mu}_R, \widehat{\mu}_{ad} : U\g^{\otimes N}\subset \Diff(G)$.

This gives the following chain of inclusions of associative algebras:
\[
\Diff(G^N)\stackrel{\widehat{\mu}_L\otimes \widehat{\mu}_R}{\hookleftarrow} (U\g)^{\otimes N} \otimes (U\g)^{\otimes N}
\hookleftarrow (U\g)^{\otimes N}
\]
Here the right inclusion is simply mapping to the left factor in the tensor product\footnote{ We could have taken the inclusion to the right factor. At the end it does make a difference.}

This maps filter through algebra homomorphisms:
\[
\Diff(G^N)\stackrel{\widehat{\mu}_L\otimes \widehat{\mu}_R}{\hookleftarrow} (U\g)^{\otimes N} \otimes_{Z(\g)^{\otimes N}} (U\g)^{\otimes N}
\hookleftarrow (U\g)^{\otimes N}
\]
Here $Z(\g)\subset U\g$ is the center of the universal enveloping algebra and 
$\otimes_{Z(\g)^{\otimes N}}$ is the quotient of the algebraic tensor product. In this tensor
product over $Z(\g)^{\otimes N}$ we have:
\begin{eqnarray*}
(a_1z_1\otimes a_2z_2\dots\otimes a_Nz_N)&\otimes_{Z(\g)^{\otimes N}}&(b_1\otimes b_2\dots\otimes b_N)=\\
(a_1\otimes a_2\dots\otimes a_N)&\otimes_{Z(\g)^{\otimes N}}&(S(z_1)b_1\otimes S(z_2)b_2\dots\otimes S(z_{N})b_N).
\end{eqnarray*}
Here $S(z)$ is the antipode of the central element $z$. The antipode in this case is an algebra antiautomorphism of $U(\g)$ which acts as $S(x)-x$ on linear elements $x\in \g\subset U(\g)$.

The last embedding is $a\mapsto a\otimes_{Z(\g)^{\otimes N}} 1$. Note that we have a natural isomorphism 
\[
(U\g)^{\otimes N} \otimes_{Z(\g)^{\otimes N}} (U\g)^{\otimes N}\simeq ((U\g) \otimes_{Z(\g)} (U\g))^{\otimes N}
\]
The action of the gauge group $G^N$ extends naturally to its action on $\Diff(G)$. It acts by the adjoint action on each copy of $U(\g)^{\otimes N}$ and it acts diagonally on $(U\g)^{\otimes N} \otimes_{Z(\g)^{\otimes N}} (U\g)^{\otimes N}$. The $G^N$-invariant part of these embeddings is:

\begin{equation}\label{incl-1}
\Diff(G^N)^{G^N}\stackrel{\widehat{\mu}_L\widehat{\otimes} \widehat{\mu}_R}{\hookleftarrow} (U\g^{\otimes N} \otimes_{Z(\g)^{\otimes N}} U\g^{\otimes N})^{G^N}
\hookleftarrow Z(\g)^{\otimes N}
\end{equation}

Here $Z(\g)^{\otimes N}$ is the $G^N$-invariant part of $U\g^{\otimes N}$ with respect to the adjoint action.

\subsection{Quantum $N$-spin Calogero-Moser system}

\subsubsection{The space of states} Let $V_{\mu_1}, \dots, V_{\mu_N}$ be finite dimensional irreducible representations
of $G$ with highest weights $\mu_i$ and representation maps $\pi^{\mu_i}$. 
We will use the same notation for associated irreducible representations of $U\g$, $\pi^{\mu_i}: U\g\rightarrow\End(V_{\mu_i})$.
Fix Hermitian scalar product $(\cdot,\cdot)_{\mu_i}$
on each $V_{\mu_i}$ turning it into a unitary representation. Denote $A^\dag$ the Hermitian  conjugate to $A\in \End(V_{\mu_i})$ relative to $(\cdot,\cdot)_{\mu_i}$. The unitarity  of the representation $V_{\mu_i}$ can be written as 
\[
\pi_{\mu_i}(g)^\dag=\pi_{\mu_i}(g^{-1}).
\]
Now let us define the representation space for $\Diff(G^N)^{G^N}$ corresponding to representations $\mu_i$,

Consider the following space of gauge group equivariant $V_{\mu_N}\otimes\cdots\otimes V_{\mu_1}$-valued functions of $G^N$:
\[
\R_{\mu}=\R(G^N\to V_{\mu_1}\otimes\cdots\otimes V_{\mu_N}):= \bigl(\R(G^N)\otimes V_{\mu_1}\otimes\cdots\otimes V_{\mu_N}\bigr)^{G^N}.
\]
were $\R(G^N)$ is the span of matrix elements of all finite dimensional representations of $G^N$.
The $G^N$-equivariance condition for $f\in\R_\mu$ is
\begin{equation}\label{eqprop}
f(g^h)=(\pi^{\mu_1}(h_1)\otimes\cdots\otimes\pi^{\mu_N}(h_N))f(g_1,\ldots,g_N)
\end{equation}
where $g^h$ is the gauge action of $h\in G^N$ on $g\in G^N$ .

Clearly this is a module over $\Diff(G^N)$ with natural action of differential 
operators on functions.

This space has a Hermitian scalar product
\[
(f_1,f_2)=\int_{G^N}(f_1(g_1,\dots, g_N),f_2(g_1,\dots, g_N))_\mu dg_1\dots dg_N.
\]
Here $(v_1\otimes\dots\otimes v_N,u_1\otimes\dots\otimes u_N)_\mu=\prod_{i=1}^N(v_i,u_i)_{\mu_i}$ and $dg$ is the Haar measure on $G$ normalized such that $\int_Gdg=1$.

The completion $\CH_\mu$ of $\R_\mu$ with respect to the norm $||f||^2=(f,f)$ is the gauge equivariant subspace in $V_{\mu_1}\otimes\cdots\otimes V_{\mu_N}$-valued square integrable functions
on $G^N$ satisfying \eqref{eqprop}. This is the {\it space of states} of quantum $N$-spin Calogero-Moser system.

\subsubsection{Quantum moment maps} Let $X\in \g\subset U\g$, define $X_j=1^{\otimes (j-1)}\otimes X \otimes 1^{\otimes (N-j-1)}\in (U\g)^{\otimes N}$. The images of $X_j$ in $\Diff(G^N)$ with respect to the left and to the right quantum moment map act on $\R(G^N)\otimes V_{\mu_N}\otimes\cdots\otimes V_{\mu_1}$ as follows
\[
\widehat{\mu}_R(X_j)f(g_1,\dots, g_j, g_{j+1}, \dots, g_N)=\frac{d}{dt}f(g_1,\dots, g_je^{tX}, \dots, g_N)|_{t=0},
\]
where $X\mapsto e^X$ is the exponential map of $G$. 
\[
\widehat{\mu}_L(X_j)f(g_1,\dots, g_j, g_{j+1}, \dots, g_N)=\frac{d}{dt}f(g_1,\dots, e^{-tX}g_j, \dots, g_N)|_{t=0}.
\]
The image of $X_j$ with respect to the quantum moment map for
the gauge action $\widehat{\mu}(X_j)=\widehat{\mu}_L(X_j)+\widehat{\mu}_R(X_{j+1})$ acts as
\[
\widehat{\mu}(X_j)f(g_1,\dots, g_{j-1}, g_{j}, \dots, g_N)=\frac{d}{dt}f(g_1,\dots, g_{ij1}e^{-tX}, e^{tX}g_{j}, \dots, g_N)|_{t=0}
\]

Gauge innvariance (\ref{eqprop}) implies that for each $f\in\R_\mu$, we have
\begin{equation}\label{f-prop}
\widehat{\mu}(X_j)f(g_1,\ldots,g_N)=\pi_j^{\mu_j}(X)f(g_1,\ldots,g_N)
\end{equation}
where $\pi_j^{\mu_j}=\textup{id}_{V_{\mu_N}}\otimes\cdots\otimes\textup{id}_{V_{\mu_{j+1}}}\otimes\pi^{\mu_j}\otimes\textup{id}_{V_{\mu_{j-1}}}\otimes\cdots\otimes
\textup{id}_{V_{\mu_1}}$.

\subsubsection{Quantum Hamiltonians and quantum integrals} Recall that $I_\mu$ is the subalgebra of quantum Hamiltonians and $J_\mu$ is the subalgebra of quantum integrals.

Because $Z(\g)^{\otimes N}$, $\bigl((U\g\otimes_{Z(\g)}\otimes U\g)^{\otimes N}\bigr)^{G^N}$ are subalgebras of $\Diff(G^N)^{G^N}$, they act on the space $\CR_\mu(G^G)$. Denote the images of the subalgebras $Z(\g)^{\otimes N}$, $\bigl((U\g\otimes_{Z(\g)}\otimes U\g)^{\otimes N}\bigr)^{G^N}$ and  in $\textup{End}(\R_\mu)$
by $I_\mu$, $J_\mu$ and $A_\mu$. They form a chain of subalgebras

\begin{equation}\label{s-int-alg}
A_\mu\hookleftarrow J_\mu\hookleftarrow I_\mu
\end{equation}

The center $Z(\g)$ as commutative algebra is a polynomial ring in $r=\textup{rank}(\g)$ homogeneous generators $c_{d_1},\ldots,c_{d_r}$ of degrees $d_1=2\leq d_2\leq\cdots\leq d_r$ respectively\footnote{Numbers $d_i-1, 1\leq i\leq r$ are called the exponents of $\mathfrak{g}$.}.  Note that $c:=c_2\in Z(\g)$ is the quadratic Casimir element determined by the Killing form.

Denote by 
\[
H^{\mu,(j)}_{d_k}\in I_\mu, \qquad\quad 1\leq j\leq N,\, 1\leq k\leq r
\]
the action of $c^{(j)}_k:=1^{\otimes (j-1)}\otimes c_k\otimes 1^{\otimes (N-j-1)}\in Z(\g)^{\otimes N}$ on $\R_\mu$. The Hamiltonians $H^{\mu,(j)}_{d_k}$ commute and generate the algebra $I_\mu$.

We will compute the radial components of the quadratic Hamiltonians $H^{(j)}:=H^{(j)}_2$ ($1\leq j\leq N$) in Subsection \ref{SectionGauge}.

\subsubsection{The spectrum of quantum Hamiltonians} 
Fix $N$ finite dimensional irreducible $G$-modules  $V_{\nu_1},\ldots,V_{\nu_N}$. Denote by $\R^{(\nu)}(G^N)$ the subspace of $\R(G^N)$ spanned by matrix elements of the irreducible $G^N$-representation $V_{\nu_1}\otimes\cdots\otimes V_{\nu_N}$\footnote{It does not matter which basis we use to define matrix elements, the subspace will be the same.}. Then by Peter-Weyl theorem we have the decomposition with respect to the left and the right 
action of $G^N$:
\begin{equation}\label{RG}
\R(G^N)=\bigoplus_\nu\R^{(\nu)}(G^N)
\end{equation}
Here $\nu$ runs over the set of $N$-tuples of dominant integral weights. 

Subspaces $\R^{(\nu)}(G^N)\subset \R(G^N)$ are invariant with respect to the gauge action of
$G^N$, thus we can define subspaces 
\[
\R^{(\nu)}_\mu:=(\R^{(\nu)}(G^N)\otimes V_{\mu_1}\otimes\cdots\otimes V_{\mu_N})^{G^N} \subset \R_\mu,
\]
Note that this subspace is not empty only if $V_{\nu_i}\subset  V_{\nu_{i-1}}\otimes V_{\mu_i}$,  cyclically for $i=1,\dots, N$.

Because $Z(\g)$ acts on an irreducible representation  by the multiplication on the corresponding central character, the subspace $\R^{(\nu)}_\mu$ is an eigensubspace
for $I_\mu$ with 
\[
H_{d_k}^{(j)}f=c_{d_k}(\nu_j)f
\]
for any $f\in \R^{(\nu)}_\mu$. Here $c_{d_k}(\eta)$ is the value of $c_{d_k}\in Z(\g)$
on the irreducible finite dimensional module $V_\eta$.

The Peter-Weyl theorem for $\R(G^N)^{\otimes N}$ gives the decomposition of $\R_\mu$ in simultaneous eigensubpaces for the action of the Hamiltonians $H^{(j)}_{d_k}$:
\begin{equation}\label{decomptwist}
\R_\mu=\bigoplus_\nu\R^{(\nu)}_\mu
\end{equation}
 
For intertwiners $a_i\in\textup{Hom}_G(V_{\nu_i},V_{\nu_{i-1}}\otimes V_{\mu_i})$ (here $\nu_o=\nu_N$), define the {\it trace function} 
$\Psi^{(\nu)}_{a,\mu}\in\R^{(\nu)}(G^N)\otimes V_{\mu_1}\otimes\cdots\otimes V_{\mu_N}$ on $G^N$ as 

\begin{equation}\label{tr-fncn}
\begin{split}
\Psi^{(\nu)}_{a,\mu}(g_1,&\dots, g_N)\\
=&\textup{Tr}_{V_{\nu_1}}\bigl((a_1\pi^{\nu_1}(g_1)\otimes\textup{id}_{\textup{V}_{\mu_2}\otimes\cdots\otimes V_{\mu_N}})
\cdots (a_{N-1}\pi^{\nu_{N-1}}(g_{N-1})\otimes\textup{id}_{\textup{V}_{\mu_{N}}})a_{N}\pi^{\nu_N}(g_N)\bigr).
\end{split}
\end{equation}

It is clear that $\Psi^{(\nu)}_{a,\mu}$ is gauge invariant and that $\Psi^{(\nu)}_{a,\mu}\in\R^{(\nu)}_\mu$. 
The restriction of $\Psi^{(\nu)}_{a,\mu}$ to $1^{\times (N-1)}\times H$ gives generalized  trace functions from \cite{Et}.

\begin{theorem}
Functions $\Psi^{(\nu)}_{a,\mu}$ form a linear basis in $\R_\mu^{(\nu)}$ enumerated by a basis in $\otimes_{i=1}^N\textup{Hom}_G(V_{\nu_i},V_{\nu_{i-1}}\otimes V_{\mu_i})$.
\end{theorem}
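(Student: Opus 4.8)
The plan is to identify $\R^{(\nu)}_\mu$ with a space of $G^N$-invariant tensors via the Peter--Weyl theorem, compute its dimension, and then recognize the trace functions $\Psi^{(\nu)}_{a,\mu}$ as the images of the canonical invariant tensors under the matrix-coefficient isomorphism. First I would record the matrix-coefficient form of Peter--Weyl for a single factor: the span $\R^{(\lambda)}(G)$ of matrix elements of $V_\lambda$ is isomorphic, as a $G\times G$-bimodule, to $\End(V_\lambda)\cong V_\lambda\otimes V_\lambda^*$ through $A\mapsto\bigl(g\mapsto\textup{Tr}_{V_\lambda}(A\,\pi^\lambda(g))\bigr)$, with left and right translations corresponding to right and left multiplication of $A$ by $\pi^\lambda(h)$. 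Taking the $N$-fold tensor product yields a $G^N\times G^N$-equivariant isomorphism $\R^{(\nu)}(G^N)\cong\bigotimes_{i=1}^N\End(V_{\nu_i})=\bigotimes_{i=1}^N(V_{\nu_i}\otimes V_{\nu_i}^*)$.

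Next I would track the gauge action on this model. Since the $j$-th gauge transformation $h_j$ acts on $g_j$ by left translation and on $g_{j-1}$ by right translation, under the identification above $h_j$ acts precisely on the three legs $V_{\nu_{j-1}}$ (the ``right'' leg of the $(j-1)$-st copy), $V_{\nu_j}^*$ (the ``left'' leg of the $j$-th copy) and $V_{\mu_j}$, and on no others. Because $G^N=\prod_j G$ and the legs touched by distinct $h_j$ are disjoint, the invariants factorize:
\[
\R^{(\nu)}_\mu\cong\bigotimes_{j=1}^N\bigl(V_{\nu_{j-1}}\otimes V_{\nu_j}^*\otimes V_{\mu_j}\bigr)^{G}
=\bigotimes_{j=1}^N\textup{Hom}_G(V_{\nu_j},V_{\nu_{j-1}}\otimes V_{\mu_j}),
\]
with $\nu_0=\nu_N$. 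In particular $\dim\R^{(\nu)}_\mu=\prod_{j=1}^N\dim\textup{Hom}_G(V_{\nu_j},V_{\nu_{j-1}}\otimes V_{\mu_j})$, which already equals the number of basis vectors asserted by the theorem.

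It then remains to show that the assignment $a_1\otimes\cdots\otimes a_N\mapsto\Psi^{(\nu)}_{a,\mu}$ is exactly this isomorphism, up to an inessential scalar. This is the main computation: I would feed a pure tensor $a=a_1\otimes\cdots\otimes a_N$ of intertwiners into the matrix-coefficient isomorphism and verify that the resulting function is the cyclic trace in \eqref{tr-fncn}. The cyclic chain $\cdots V_{\nu_j}\xrightarrow{\,a_j\,}V_{\nu_{j-1}}\otimes V_{\mu_j}\cdots$ closing up at $\nu_0=\nu_N$ is precisely the contraction performed by the single global trace $\textup{Tr}_{V_{\nu_N}}$, while the insertions $\pi^{\nu_i}(g_i)$ are the matrix-coefficient factors and the spin legs $V_{\mu_j}$ are carried along untraced, producing the $V_{\mu_1}\otimes\cdots\otimes V_{\mu_N}$-valued output. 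Since the construction is multilinear in the $a_j$ it factors through the tensor product, and it suffices to check the identification on a basis of each $\textup{Hom}$-space. Once this is done, $a\mapsto\Psi^{(\nu)}_{a,\mu}$ is a linear isomorphism onto $\R^{(\nu)}_\mu$, so a basis of the source maps to a basis of the target and the theorem follows.

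The hard part will be the bookkeeping in this last step: correctly matching left versus right translation (hence standard versus contragredient representation on each leg) so that the three factors $V_{\nu_{j-1}},V_{\nu_j}^*,V_{\mu_j}$ genuinely are the arguments of $\textup{Hom}_G(V_{\nu_j},V_{\nu_{j-1}}\otimes V_{\mu_j})$, and confirming that the single trace in \eqref{tr-fncn} implements exactly the cyclic contraction of the $N$ intertwiners around $\nu_0=\nu_N$ rather than a shifted or dualized version of it. Given the dimension count, injectivity and surjectivity are then automatic, so no separate linear-independence argument is needed once the trace-function map has been identified with the Peter--Weyl isomorphism.
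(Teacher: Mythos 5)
Your argument is correct, and it is worth noting that the paper itself does not actually spell out a proof of this theorem: it is stated without proof, and the proof of the neighbouring representation-theoretic statements is deferred to the reference [RSQ]. Your route --- Peter--Weyl identification $\R^{(\nu)}(G^N)\cong\bigotimes_i\End(V_{\nu_i})$, factorization of the $G^N$-invariants into $\bigotimes_j\bigl(V_{\nu_{j-1}}\otimes V_{\nu_j}^*\otimes V_{\mu_j}\bigr)^G\cong\bigotimes_j\textup{Hom}_G(V_{\nu_j},V_{\nu_{j-1}}\otimes V_{\mu_j})$ because each gauge factor $h_j$ touches disjoint tensor legs, and then the observation that $a\mapsto\Psi^{(\nu)}_{a,\mu}$ is exactly this isomorphism written as a cyclic trace --- is the natural one given the paper's setup and is complete once the bookkeeping you flag (left versus right translation acting on the source versus target legs, with the cyclic closure at $\nu_0=\nu_N$) is carried out. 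The paper implicitly offers a complementary route to linear independence: the orthogonality formula $\int_{G^N}(\Psi^{\{\mu\}}_{b,\lambda'},\Psi^{\{\mu\}}_{a,\lambda})\,dg=\prod_i\delta_{\lambda_i,\lambda_i'}\prod_i(b_i,a_i)_{\lambda_i}$, proved from the Haar-measure identities of Appendix~\ref{A}, shows directly that trace functions built from a basis of intertwiners are orthogonal, hence independent; combined with your dimension count this also proves the theorem and in addition yields the normalization. Your structural identification buys the Corollary (the canonical isomorphism with the tensor product of $\textup{Hom}$-spaces) for free, whereas the orthogonality route buys the Hilbert-space statement; either is acceptable, and your write-up needs only the explicit leg-matching computation to be finished.
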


\begin{corollary} We have a linear isomorphism
\[
\R_\mu^{(\nu)}(G^N)\simeq \otimes_{i=1}^N\textup{Hom}_G(V_{\nu_i},V_{\nu_{i-1}}\otimes V_{\mu_i})
\]
\end{corollary}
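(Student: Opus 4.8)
\emph{Approach.} The plan is to split the statement into an abstract identification of the space $\R_\mu^{(\nu)}$ by representation theory (Peter--Weyl together with a computation of gauge invariants) and a check that the explicit trace functions realize that identification. First I would record that
\[
\Phi\colon\bigotimes_{i=1}^N\textup{Hom}_G(V_{\nu_i},V_{\nu_{i-1}}\otimes V_{\mu_i})\longrightarrow\R_\mu^{(\nu)},\qquad (a_1,\dots,a_N)\mapsto\Psi^{(\nu)}_{a,\mu},
\]
is well defined and linear: formula \eqref{tr-fncn} is multilinear in the $a_i$, so it descends from the tensor product, and it is already observed that $\Psi^{(\nu)}_{a,\mu}\in\R_\mu^{(\nu)}$. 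The theorem then amounts to showing that $\Phi$ is a bijection, which I would obtain from a dimension count together with injectivity.

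Next I would set up the Peter--Weyl identification. Write $W=V_{\nu_1}\otimes\cdots\otimes V_{\nu_N}$, viewed as the irreducible $G^N$-module labelled by $\nu$. By the decomposition \eqref{RG} the matrix-coefficient space $\R^{(\nu)}(G^N)$ is isomorphic to $W\otimes W^*$ as a module for the left and right translation actions of $G^N$, where for the $j$-th factor the right translation on $g_j$ acts on the slot $V_{\nu_j}$ and the left translation on $g_j$ acts on the dual slot $V_{\nu_j}^*$ by the contragredient representation. I would fix these conventions once and carry them through; this is pure bookkeeping, but it is exactly where duality and left/right errors are easiest to make.

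The essential step is then to compute $\R_\mu^{(\nu)}=(W\otimes W^*\otimes V_{\mu_1}\otimes\cdots\otimes V_{\mu_N})^{G^N}$ for the gauge action \eqref{g-action}. Reading \eqref{g-action}, the $j$-th gauge parameter $h_j$ enters as a left translation on $g_j$ and as a right translation on $g_{j-1}$; combined with its action on $V_{\mu_j}$ it therefore acts diagonally on exactly the three tensor factors $V_{\nu_{j-1}}$ (non-dual, from $g_{j-1}$), $V_{\nu_j}^*$ (dual, from $g_j$) and $V_{\mu_j}$. As $j$ runs cyclically over $1,\dots,N$ with $\nu_0=\nu_N$, these triples exhaust each of the $3N$ tensor factors exactly once, so the $G^N$-invariants factorize over $j$:
\[
\R_\mu^{(\nu)}\;\cong\;\bigotimes_{j=1}^N\bigl(V_{\nu_{j-1}}\otimes V_{\nu_j}^*\otimes V_{\mu_j}\bigr)^{G}\;\cong\;\bigotimes_{j=1}^N\textup{Hom}_G(V_{\nu_j},V_{\nu_{j-1}}\otimes V_{\mu_j}).
\]
This already yields the Corollary and gives $\dim\R_\mu^{(\nu)}=\prod_j\dim\textup{Hom}_G(V_{\nu_j},V_{\nu_{j-1}}\otimes V_{\mu_j})$, the dimension of the domain of $\Phi$.

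It then remains to identify $\Phi$ with the isomorphism just produced. Unwinding \eqref{tr-fncn}, the cyclic composition of the $a_i$ with the matrices $\pi^{\nu_i}(g_i)$, followed by the trace that closes the chain using $\nu_0=\nu_N$, is precisely the operation sending the invariant tensor $a_1\otimes\cdots\otimes a_N$ to its matrix coefficient under the Peter--Weyl isomorphism of the previous paragraph; hence $\Phi$ is that isomorphism and is bijective. Alternatively, since domain and target have equal finite dimension, it suffices to prove $\Phi$ injective, i.e. that $\Psi^{(\nu)}_{a,\mu}\equiv 0$ forces all $a_i=0$, which follows from the linear independence of the matrix coefficients of $W$ (Schur orthogonality), because the coefficients of $\Psi^{(\nu)}_{a,\mu}$ in a basis of matrix elements are exactly the entries of $a_1\otimes\cdots\otimes a_N$. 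I expect the main obstacle to be not conceptual but the careful bookkeeping: tracking the dualizations in the Peter--Weyl identification and verifying that the cyclic trace in \eqref{tr-fncn} matches the partition of tensor factors $\{V_{\nu_{j-1}},V_{\nu_j}^*,V_{\mu_j}\}$ coupled by each $h_j$, including the index shift $\nu_0=\nu_N$.
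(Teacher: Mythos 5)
Your argument is correct, but it reaches the corollary by a different route than the paper does. In the paper the corollary is an immediate consequence of the preceding theorem (that the trace functions $\Psi^{(\nu)}_{a,\mu}$ form a basis of $\R_\mu^{(\nu)}$ indexed by a basis of $\otimes_{i=1}^N\textup{Hom}_G(V_{\nu_i},V_{\nu_{i-1}}\otimes V_{\mu_i})$), and that theorem is itself stated without proof, with the details deferred to the forthcoming reference on quantum superintegrable systems and graph connections. You instead establish the isomorphism directly: Peter--Weyl gives $\R^{(\nu)}(G^N)\cong W\otimes W^*$ with $W=V_{\nu_1}\otimes\cdots\otimes V_{\nu_N}$, and the key observation that under the gauge action \eqref{g-action} the parameter $h_j$ touches exactly the three factors $V_{\nu_{j-1}}$ (right translation on $g_{j-1}$), $V_{\nu_j}^*$ (left translation on $g_j$) and $V_{\mu_j}$, so the $G^N$-invariants factorize as $\bigotimes_j\bigl(V_{\nu_{j-1}}\otimes V_{\nu_j}^*\otimes V_{\mu_j}\bigr)^G\cong\bigotimes_j\textup{Hom}_G(V_{\nu_j},V_{\nu_{j-1}}\otimes V_{\mu_j})$. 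This factorization is the whole content of the corollary and your bookkeeping of which slot each $h_j$ acts on (including the cyclic convention $\nu_0=\nu_N$) is consistent with the paper's conventions, e.g.\ with the stated nonvacuity condition $V_{\nu_i}\subset V_{\nu_{i-1}}\otimes V_{\mu_i}$. What your approach buys is that it is self-contained and in fact proves more than the corollary: your final step, identifying the map $a_1\otimes\cdots\otimes a_N\mapsto\Psi^{(\nu)}_{a,\mu}$ with this Peter--Weyl isomorphism (or, equivalently, checking injectivity via linear independence of matrix coefficients and matching dimensions), is precisely a proof of the paper's preceding basis theorem, which the paper leaves to a companion work. The paper's route is shorter only because it takes that theorem as given.
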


Let $Z(I_\mu, A_\mu)$ be the centralizer of the commutative subalgebra $I_\mu$ in $A_\mu$. By the construction of $J_\mu\subset A_\mu$ it is a subalgebra in the centralizer of $I_\mu$, i.e. $J_\mu\subset Z(I_\mu, A_\mu)$. Since 
$\R^{(\nu)}_\mu\subset \R_\mu$ is an eigensubspace of $I_\mu$ the centralizer 
$Z(I_\mu, A_\mu)$ acts on it. Thus, $\R^{(\nu)}_\mu$ is an $J_\mu$-module.

\begin{theorem}
The decomposition \eqref{decomptwist} is the multiplicitly free decomposition of $\R_\mu$ in irreducible $J_\mu$-modules.
\end{theorem}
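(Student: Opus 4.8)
The plan is to show that each summand $\R^{(\nu)}_\mu$ in \eqref{decomptwist} is an irreducible $J_\mu$-module and that distinct $\nu$ yield non-isomorphic modules. First note that \eqref{decomptwist} is already a decomposition of $J_\mu$-modules: since $J_\mu\subset Z(I_\mu,A_\mu)$ centralizes $I_\mu$ and $\R^{(\nu)}_\mu$ is an eigensubspace of $I_\mu$, each $\R^{(\nu)}_\mu$ is $J_\mu$-stable. It remains to analyze the $J_\mu$-action on a single block, and then to separate the blocks.

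I would start from the Peter--Weyl identification $\R^{(\nu)}(G^N)\simeq V_\nu\otimes V_\nu^*$ with $V_\nu=V_{\nu_1}\otimes\cdots\otimes V_{\nu_N}$, under which $\widehat{\mu}_L$ acts on the factor $V_\nu^*$ (infinitesimal left translations) and $\widehat{\mu}_R$ on $V_\nu$ (infinitesimal right translations). Each $\pi^{\nu_j}\colon U\g\to\End(V_{\nu_j})$ is surjective by Jacobson density, so $\widehat{\mu}_L\otimes\widehat{\mu}_R$ sends $(U\g)^{\otimes N}\otimes(U\g)^{\otimes N}$ onto $\End(V_\nu^*)\otimes\End(V_\nu)=\End(V_\nu\otimes V_\nu^*)$. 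Passing to the quotient over $Z(\g)^{\otimes N}$ changes nothing on this block, because central elements act there by scalars (their central characters) and the quotient map is a surjection; hence on the block $\nu$ the image of $(U\g\otimes_{Z(\g)}U\g)^{\otimes N}$ is the full matrix algebra $\End(V_\nu\otimes V_\nu^*)$.

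The key step is to track the gauge action and then take invariants. Reading off \eqref{g-action}, the $j$-th gauge copy of $G$ acts on $V_{\nu_j}^*$ (through the left translation $g_j\mapsto h_jg_j$) and on $V_{\nu_{j-1}}$ (through the right translation $g_{j-1}\mapsto g_{j-1}h_j^{-1}$); hence, writing $U_j:=V_{\nu_{j-1}}\otimes V_{\nu_j}^*$, we obtain $V_\nu\otimes V_\nu^*\simeq\bigotimes_j U_j$ as $G^N$-modules with the $j$-th factor of $G^N$ acting only on $U_j$. Because $G^N$ is compact, the invariants functor is exact, so the image of the gauge-invariant subalgebra equals the gauge-invariant part of the image, namely $\End_{G^N}(V_\nu\otimes V_\nu^*)=\bigotimes_j\End_G(U_j)$; this is exactly the image of $J_\mu$ on the block $\nu$, acting through $T\otimes\textup{id}_{V_{\mu_j}}$ on each factor. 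Regrouping the defining invariants the same way gives $\R^{(\nu)}_\mu\simeq\bigotimes_j(U_j\otimes V_{\mu_j})^G$, in agreement with the Corollary. Decomposing $U_j=\bigoplus_\lambda V_\lambda\otimes M^{(j)}_\lambda$ into isotypic components, one finds $(U_j\otimes V_{\mu_j})^G\simeq M^{(j)}_{\mu_j^*}$, the multiplicity space of the summand $V_\lambda\simeq V_{\mu_j}^*$, and the induced map $\End_G(U_j)=\bigoplus_\lambda\End(M^{(j)}_\lambda)\to\End\bigl((U_j\otimes V_{\mu_j})^G\bigr)=\End(M^{(j)}_{\mu_j^*})$ is the projection onto the $\lambda=\mu_j^*$ summand, hence surjective. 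Taking the tensor product over $j$, the image of $J_\mu$ in $\End(\R^{(\nu)}_\mu)$ is all of $\End(\R^{(\nu)}_\mu)$, so by Burnside's theorem $\R^{(\nu)}_\mu$ is an irreducible $J_\mu$-module.

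Finally, multiplicity-freeness follows from the inclusion $I_\mu\subset J_\mu$. On $\R^{(\nu)}_\mu$ the generators $H^{(j)}_{d_k}\in I_\mu$ act by the scalars $c_{d_k}(\nu_j)$, and this collection determines the central character of each $V_{\nu_j}$; since $\nu_j$ is dominant, $\nu_j+\rho$ is regular and Harish-Chandra's description of central characters recovers $\nu_j$, hence $\nu$. Therefore for $\nu\neq\nu'$ the subalgebra $I_\mu\subset J_\mu$ acts by different characters on $\R^{(\nu)}_\mu$ and $\R^{(\nu')}_\mu$, so these irreducible $J_\mu$-modules are non-isomorphic. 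Together with the irreducibility of each block this proves that \eqref{decomptwist} is the multiplicity-free decomposition of $\R_\mu$ into irreducible $J_\mu$-modules. I expect the main obstacle to be the middle step: correctly identifying how the gauge action regroups the tensor factors into the $U_j$, and then combining Jacobson density, exactness of $G^N$-invariants, and the factorized projection to conclude that the gauge-invariant endomorphisms exhaust $\End(\R^{(\nu)}_\mu)$.
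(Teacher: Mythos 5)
The paper does not actually contain a proof of this theorem: immediately after the statement it says the proof ``will be given in \cite{RSQ}'', so there is no argument of the author's to compare yours against. Judged on its own, your proposal reads as a correct and essentially complete proof. The chain of reductions is sound: Peter--Weyl gives $\R^{(\nu)}(G^N)\simeq V_\nu\otimes V_\nu^*$; Jacobson density makes the image of $(U\g)^{\otimes N}\otimes(U\g)^{\otimes N}$ all of $\End(V_\nu\otimes V_\nu^*)$ on each block, and the $\otimes_{Z(\g)^{\otimes N}}$ relation is killed there because central elements act by scalars (this uses $\widehat{\mu}_r(z)=\widehat{\mu}_\ell(S(z))$, the paper's Lemma in Appendix~\ref{M}, to see the defining relation of the quotient holds in the image); the regrouping into $U_j=V_{\nu_{j-1}}\otimes V_{\nu_j}^*$ with the $j$-th gauge copy acting only on $U_j\otimes V_{\mu_j}$ is the right reading of \eqref{g-action} and reproduces the Corollary $\R^{(\nu)}_\mu\simeq\otimes_j\textup{Hom}_G(V_{\nu_j},V_{\nu_{j-1}}\otimes V_{\mu_i})$; and the projection $\End_G(U_j)\to\End\bigl((U_j\otimes V_{\mu_j})^G\bigr)$ onto the $\mu_j^*$-isotypic block is surjective, so Burnside gives irreducibility. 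Three small points are worth making explicit if you write this up: (i) exactness of the invariants functor is being applied to $U\g^{\otimes 2N}$ with the adjoint action, so you should note this module is locally finite (it is, being a union of finite-dimensional $G^N$-submodules) --- compactness of $G^N$ alone is not quite the right hypothesis for an infinite-dimensional module; (ii) some blocks $\R^{(\nu)}_\mu$ are zero (the paper notes the cyclic Clebsch--Gordan condition $V_{\nu_i}\subset V_{\nu_{i-1}}\otimes V_{\mu_i}$), so the decomposition should be read as running over the $\nu$ with nonvanishing blocks; (iii) the separation of blocks does need the Harish-Chandra argument you give, since $I_\mu$ only sees central characters and it is the dominance of the $\nu_j$ that lets you recover $\nu$ from them.
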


The proof of this theorem and its generalization to a more general representation theoretical context will be given in \cite{RSQ}.

Denote by $J_\mu^{(\nu)}$ and $Z(I_\mu, A_\mu)^{(\nu)}$ images of $J_\mu$  and $Z(I_\mu, A_\mu)$
in $\End(\CR_\mu^{(\nu)})$ respectively.

\begin{corollary}
We have 
\[
J_\mu^{(\nu)}=Z(I_\mu, A_\mu)^{(\nu)},
\]
\end{corollary}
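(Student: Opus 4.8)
The plan is to prove the asserted equality by sandwiching the two images between a common lower and upper bound inside $\End(\R^{(\nu)}_\mu)$ and then checking that the bound is attained. First I would record the elementary inclusion: since $J_\mu\subset Z(I_\mu,A_\mu)$ and both algebras preserve the $I_\mu$-eigensubspace $\R^{(\nu)}_\mu$, passing to images in $\End(\R^{(\nu)}_\mu)$ yields
\[
J_\mu^{(\nu)}\subseteq Z(I_\mu,A_\mu)^{(\nu)}\subseteq \End(\R^{(\nu)}_\mu).
\]
It therefore suffices to show that the smallest of these already exhausts the endomorphism algebra, i.e. $J_\mu^{(\nu)}=\End(\R^{(\nu)}_\mu)$; the Corollary then follows by a squeeze.

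To obtain $J_\mu^{(\nu)}=\End(\R^{(\nu)}_\mu)$ I would apply a density argument, for which two facts established above are exactly the needed hypotheses. By the linear isomorphism $\R^{(\nu)}_\mu\simeq\otimes_{i=1}^N\textup{Hom}_G(V_{\nu_i},V_{\nu_{i-1}}\otimes V_{\mu_i})$, the space $\R^{(\nu)}_\mu$ is a finite tensor product of finite-dimensional multiplicity spaces, hence finite-dimensional over $\CC$. By the theorem stating that \eqref{decomptwist} is the multiplicity-free decomposition of $\R_\mu$ into irreducible $J_\mu$-modules, each summand $\R^{(\nu)}_\mu$ is an irreducible $J_\mu$-module. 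Thus $\R^{(\nu)}_\mu$ is a finite-dimensional simple module over the complex associative algebra $J_\mu$, and Burnside's theorem (the Jacobson density theorem over an algebraically closed field) forces $J_\mu\to\End(\R^{(\nu)}_\mu)$ to be surjective, i.e. $J_\mu^{(\nu)}=\End(\R^{(\nu)}_\mu)$. Substituting this into the chain above collapses it to $J_\mu^{(\nu)}=Z(I_\mu,A_\mu)^{(\nu)}=\End(\R^{(\nu)}_\mu)$.

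The genuine content sits entirely upstream: once irreducibility of $\R^{(\nu)}_\mu$ as a $J_\mu$-module is granted, the Corollary is a soft formal consequence of Burnside and requires no further computation. The only points I would verify with care are the hypotheses of the density theorem, namely that the ground field is $\CC$, that $\R^{(\nu)}_\mu$ is genuinely finite-dimensional, and that it is genuinely simple rather than merely indecomposable; the first two are immediate and the third is precisely the assertion of the irreducibility theorem. I would also note that full multiplicity-freeness of \eqref{decomptwist}, although emphasized there, is not actually used here, since the statement concerns a single fixed $\nu$, for which irreducibility of that one summand suffices. The main obstacle, such as it is, therefore lies not in this Corollary but in the deferred proof of the irreducibility theorem, which the paper postpones to \cite{RSQ}.
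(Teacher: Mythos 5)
Your argument is correct and is essentially the one the paper intends: the Corollary is stated as an immediate consequence of the preceding irreducibility theorem, and the only natural derivation is exactly your squeeze $J_\mu^{(\nu)}\subseteq Z(I_\mu,A_\mu)^{(\nu)}\subseteq \End(\R^{(\nu)}_\mu)$ combined with Burnside's theorem applied to the finite-dimensional simple $J_\mu$-module $\R^{(\nu)}_\mu$. Your observation that only irreducibility of the single summand $\R^{(\nu)}_\mu$ (not multiplicity-freeness of the whole decomposition) is needed is accurate, and the real content is indeed deferred to the irreducibility theorem proved in \cite{RSQ}.
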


Thus, the embeddings (\ref{s-int-alg}) define a simple quantum superintegrable system in the sense of  \cite{R-ICMP}.

Using well known identities for integrals of matrix elements of irreducible representations
with respect to the Haar measure\footnote{All necessary identities are summarized in the 
Appendix \ref{A}} it is easy to prove the following formula for the scalar product of trace functions:
\[
\int_{G^N} (\Psi^{\{\mu\}}_{b,\lambda'}(g_1,\dots, g_N), \Psi^{\{\mu\}}_{a,\lambda}(g_1,\dots, g_N))  dg_1\dots dg_N=\prod_{i=1}^N \delta_{\lambda_i, \lambda_i'}\prod_{i=1}^N (b_i,a_i)_{\lambda_i}
\]
Here $(b, a)_{\lambda_i}$ is a Hermitian scalar product on $Hom_\g(V_{\lambda_i},  V_{\lambda_{i-1}}\otimes V_{\mu_i})$ defined as follows. Because all representations are equipped with Hermitian
structure for each $a\in Hom_\g(V_{\lambda_i}, V_{\lambda_{i-1}}\otimes V_{\mu_i})$ we have
its Hermitian conjugate $a^\dag: V_{\lambda_{i-1}}\otimes V_{\mu_i}\to V_{\lambda_i}$. For any such $a,b$, by Schur's lemma the composition $ab^\dag$ is a multiple of the identity $id_{V_{\lambda_i}}$. This defines the scalar product $(b,a)_{\lambda_i}$ as $ab^\dag=(b,a)_{\lambda_i}id_{V_{\lambda_i}}$.

\subsection{The gauge fixing}\label{SectionGauge}

There $N$ natural ways to identify the coset $G^N/G^N$ for the gauge action (\ref{g-action}) with the coset
$G/G$ with respect to the conjugation. 

Indeed, fix $i\in1,\dots, N$ and for $j\neq i+1$ set:
\begin{eqnarray}\label{gauge}
h_j=h_{i+1}g_{i+1}g_{i+2}\dots g_{j-2}g_{j-1}
\end{eqnarray} 
Here here the product is cyclic. Denote such $N$-tuple $(h_1,\dots, h_N)$ by $h_g$.

The gauge transformation by the element $h_g$ brings $g=(g_1,\dots, g_N)$ to \\
$g^{h_g}=(1, \dots, h_{i+1}(g_{i+1}\dots g_Ng_1g_2\dots g_{i-1}g_i)h_{i+1}^{-1},\dots, 1)$. This identifies the $G^N$ gauge orbit through $(g_1,\dots, g_N)$ with the $G$ conjugation orbit through $g_1\dots g_N$.
Thus, we constructed the mapping 
\[
G^N/G^N\simeq G/G
\]
which is easy to prove to be an isomorphism. 

We can chose $h_{i+1}$ in such a way that 
$h_{i+1}(g_{i+1}\dots g_Ng_1g_2\dots g_{i-1}g_i)h_{i+1}^{-1}\in H$ is an element of the Cartan subgroup in $G$. This gives an isomorphism
\[
G^N/G^N\simeq G/G\simeq H/W
\]
for each $i=1,\dots, N$.

For each $i=1,\dots, N$, this gives an isomorphism of Hilbert spaces
\begin{equation}\label{funct-Cartan}
\phi_i: \CH_{\mu}\simeq \CH_\mu^H=L_2(H\to (V_{\mu_1}\otimes \dots\otimes V_{\mu_N})[0])^W
\end{equation}
Here the Weyl group $W$ acts as $w(f)(h)=wf(w^{-1}(h))$. We took into account that the Weyl group acts naturally on the zero weight subspace of any $G$-module. 

This isomorphism acts on functions as 
\[
\phi_i(f)(h)=\pi_\mu(h_g)f(g)=f(1,\dots, h,\dots, 1),
\]
where $h_g$ is as above, $\pi_\mu(h)$ is as in (\ref{eqprop}) and $h=h_i(g_{i+1}\dots g_Ng_1g_2\dots g_{i-1}g_i)h_i^{-1}$. 

For the scalar product on $\CH^H_\mu$ we have:
\[
(f,g)=\int_H (f(h),g(h)) |\delta(h)|^2 dh
\]
Here $H\subset G$ is the Cartan subgroup and $\delta(h)$ is the denominator 
in the Weyl character formula.

\subsubsection{Quantum Hamiltonians} 
I the theorem below we summarized computations of the action of
quantum Hamiltonians on the space $\CH_\mu^H$ with the gauge fixing map $\phi_N: \CH_\mu\to \CH_\mu^H$. We show that
operators $D_k=H^{(k+1)}-H^{(k)}$ are topological Knizhnik-Zamilodchikov-Bernard operators
and that $H^{(N)}$ is the spin Calogero-Moser Hamiltonian.

\begin{theorem} Operators $H^{(N)}$ and $D_j=H^{(j+1)}-H^{(j)}$ for $j=1,\dots, N-1$ act as
\[
H^{(N)}_2=\Delta +2\sum_{\alpha>0}\frac{\pi^V(f_\alpha e_\alpha)}{(1-h_\alpha)(1-h_{-\alpha})}-||\rho||^2
\]
and
\[
D_j=(h^{(j)}, \frac{\pa}{\pa\lambda})-\sum_{k=1}^{j-1}r_{kj}(\lambda)+\sum_{k=j+1}^n r_{jk}(\lambda)
\]
where $\Delta=(\frac{\pa}{\pa \lambda}, \frac{\pa}{\pa \lambda})$ is the Laplacian on $\hfr\subset \g$ determined by the Killing form, $h_\alpha=e^{(\alpha,\lambda)}$ and $r(\lambda)$ is Felder's dynamical $r$-matrix \cite{Felder}:
\[
r(\lambda)=-\frac{1}{2}\sum_{i=1}^r h_i\otimes h_i-\sum_\alpha \frac{e_{-\alpha}\otimes e_\alpha}{1-h_{-\alpha}}
\]
\end{theorem}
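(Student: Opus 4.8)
The plan is to recognize each quadratic Hamiltonian $H^{(j)}$ as the radial part of a Laplacian and to compute it relative to the gauge slice used by $\phi_N$. Since $c=c_2$ lies in $Z(\g)$, its images under $\widehat{\mu}_L$ and $\widehat{\mu}_R$ coincide, and $H^{(j)}$ is the restriction to $\CH_\mu$ of the bi-invariant Laplace--Beltrami operator $\Delta_{G_j}$ on the $j$-th copy of $G$. I would fix a Killing-orthonormal Cartan--Weyl basis and write
\[
c=\sum_{i=1}^r h_i^2+\sum_{\alpha>0}\bigl(e_\alpha f_\alpha+f_\alpha e_\alpha\bigr),\qquad f_\alpha=e_{-\alpha},
\]
and then split $\g=\hfr\oplus\bigoplus_\alpha\g_\alpha$ at a generic point $h=e^\lambda$ of the slice $\{(1,\dots,1,h):h\in H\}$, expressing the invariant vector fields produced by the moment map in terms of directions tangent to $H$, directions tangent to the gauge orbit, and the remaining transverse directions.

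For $H^{(N)}$ the Cartan part $\sum_i h_i^2$ is tangent to $H$ and becomes the flat Laplacian $\Delta=(\pa/\pa\lambda,\pa/\pa\lambda)$; carrying the measure factor $|\delta(h)|^2$ through the computation and conjugating the radial Laplacian by $\delta$ produces the additive constant $-\|\rho\|^2$, exactly as in the Harish-Chandra radial-part formula. The root part is where the potential is born: at $h\in H$ the adjoint action $\Ad(h)$ acts on $\g_{\pm\alpha}$ by the scalar $h_{\pm\alpha}=e^{\pm(\alpha,\lambda)}$, so the operator $1-\Ad(h^{\mp1})$ that measures the failure of the invariant fields $R_{e_\alpha},R_{f_\alpha}$ to be gauge-orbit-tangent is invertible with eigenvalues $1-h_{\mp\alpha}$; inverting it contributes the denominators $(1-h_\alpha)(1-h_{-\alpha})$. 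The genuinely orbit-tangent part of each derivative is then replaced, by the gauge-equivariance identity (\ref{f-prop}), with the spin operators $\pi^V(e_\alpha),\pi^V(f_\alpha)$, and reassembling the normal-ordered product yields the potential $2\sum_{\alpha>0}\pi^V(f_\alpha e_\alpha)/\bigl((1-h_\alpha)(1-h_{-\alpha})\bigr)$.

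For the differences $D_j$ the key structural fact is that they are only first order, which I would justify both spectrally and directly. Spectrally: in the gauge fixed by $\phi_N$ the variable $\lambda$ enters the trace functions $\Psi^{(\nu)}$ only through the $N$-th holonomy $\pi^{\nu_N}(h)$, while $H^{(j)}$ acts on $\R^{(\nu)}_\mu$ by the scalar $\|\nu_j+\rho\|^2-\|\rho\|^2$; the fusion constraints $V_{\nu_i}\subset V_{\nu_{i-1}}\otimes V_{\mu_i}$ force $\nu_j=\nu_N+O(1)$, so all $H^{(j)}$ share the quadratic leading symbol of $\Delta$ and every $D_j=H^{(j+1)}-H^{(j)}$ has a symbol linear in $\nu_N$, hence is a first-order operator. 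Directly: repeating the slice computation for $H^{(j)}$ with $j<N$ (where $g_j=1$) and subtracting, the second-order transverse contributions of adjacent factors cancel and only a first-order operator survives.

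It then remains to identify that first-order operator. Its leading term is the directional derivative $(h^{(j)},\pa/\pa\lambda)$ along the Cartan weighted by the $j$-th spin, and its root-space part organizes, after matching the simple poles at $h_{\pm\alpha}=1$ and the diagonal term $-\tfrac12\sum_i h_i\otimes h_i$, into copies of Felder's dynamical $r$-matrix $r(\lambda)$ acting in the appropriate pairs of tensor factors, giving $D_j=(h^{(j)},\pa/\pa\lambda)-\sum_{k<j}r_{kj}(\lambda)+\sum_{k>j}r_{jk}(\lambda)$. I expect the main obstacle to be the bookkeeping of the root part: cleanly separating the orbit-tangent and transverse components, tracking the $\Ad(h)$-eigenvalues so that the denominators come out as $(1-h_\alpha)(1-h_{-\alpha})$ rather than some reciprocal, and pinning down the normal-ordering $f_\alpha e_\alpha$ together with the signs in the two $r$-matrix sums. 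Verifying that the second-order pieces cancel in $D_j$ and that the residues reproduce Felder's $r$-matrix exactly, rather than a gauge-equivalent one, is the delicate, computation-heavy heart of the argument.
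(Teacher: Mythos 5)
Your strategy is viable but it is not the route the paper takes, and for the $D_j$ part it stops short of an actual proof. What you describe --- computing the radial component of the bi-invariant Laplacian on the slice $(1,\dots,1,h)$, using $\widehat{\mu}_r(X)f(g)=-\widehat{\mu}_\ell(\mathrm{Ad}_g X)f(g)$ together with the gauge-equivariance identity (\ref{f-prop}) to trade orbit-tangent derivatives for spin operators, and inverting $1-\mathrm{Ad}(h^{\mp 1})$ on the root spaces to produce the denominators $(1-h_\alpha)(1-h_{-\alpha})$ --- is exactly what the paper does, but only for $N=1$ (Appendix \ref{H1}), with the gauge-fixed moment-map formulas of Appendix \ref{M} providing the required ingredients. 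For $N>1$ the paper explicitly abandons this direct computation as harder and instead evaluates the Hamiltonians on the basis of trace functions $\Psi^{(\nu)}_{b,\mu}(\lambda)=\mathrm{Tr}_a(b_1\cdots b_N(e^\lambda)_a)$ (Appendix \ref{H}). There the Casimir identity for intertwiners, $\tfrac12\bigl(c_2(\nu)-c_2(\mu)-c_2(\eta)\bigr)b=(r_{12}(\lambda)+r_{21}(\lambda))b$, applied to the $i$-th factor and combined with the cyclicity of the trace and the listed properties of $r(\lambda)$, produces the first-order KZB operators with Felder's $r$-matrix appearing \emph{automatically} and with the correct placement of $r_{ki}$ versus $r_{ik}$; the CM Hamiltonian and the $-\|\rho\|^2$ shift then come from the trace identities (\ref{1})--(\ref{b}) and conjugation by the Weyl denominator $\delta$. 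What the paper's route buys is precisely the step you flag as the ``computation-heavy heart'': the resummation of the root-space contributions into $r(\lambda)$ is replaced by an algebraic identity.

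The genuine gap in your proposal is therefore the $D_j$ formula. Your spectral argument only shows that the eigenvalues of $D_j$ on $\R^{(\nu)}_\mu$ grow linearly in $\nu$, which is consistent with $D_j$ being first order but does not establish it, and it certainly does not identify the operator; and your direct argument (``the second-order transverse contributions of adjacent factors cancel and the residues reproduce Felder's $r$-matrix'') is asserted rather than derived. Since the whole content of the theorem is the exact form of these operators --- the normal ordering $f_\alpha e_\alpha$, the specific denominators, and the signs and positions of $r_{kj}(\lambda)$ versus $r_{jk}(\lambda)$ --- the proposal as written identifies the correct target and a workable framework but does not yet constitute a proof of the second half of the statement. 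Carrying out your slice computation for general $j$ (starting from the formulas $\phi_i(\widehat{\mu}_L)(e_{\alpha,j})=\tfrac{1}{1-h_\alpha}S^j_\alpha$ of Appendix \ref{M}) would close it, but at the cost of exactly the bookkeeping the paper's trace-function argument avoids.
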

Note that we consider compact simple $G$ which correspond to $h_\alpha=e^{iq_\alpha}$ with $q_\alpha\in \RR$.

As always explicit formulae for higher Hamiltonians $H_k^{(j)}$ are more complicated.

\subsection{The multi-time propagator}

\subsubsection{The propagator and the graphical calculus.} 

\begin{figure}[t!]\label{T-function}
\begin{center}
\includegraphics[width=1.0\textwidth, angle=0.0, scale=0.7]{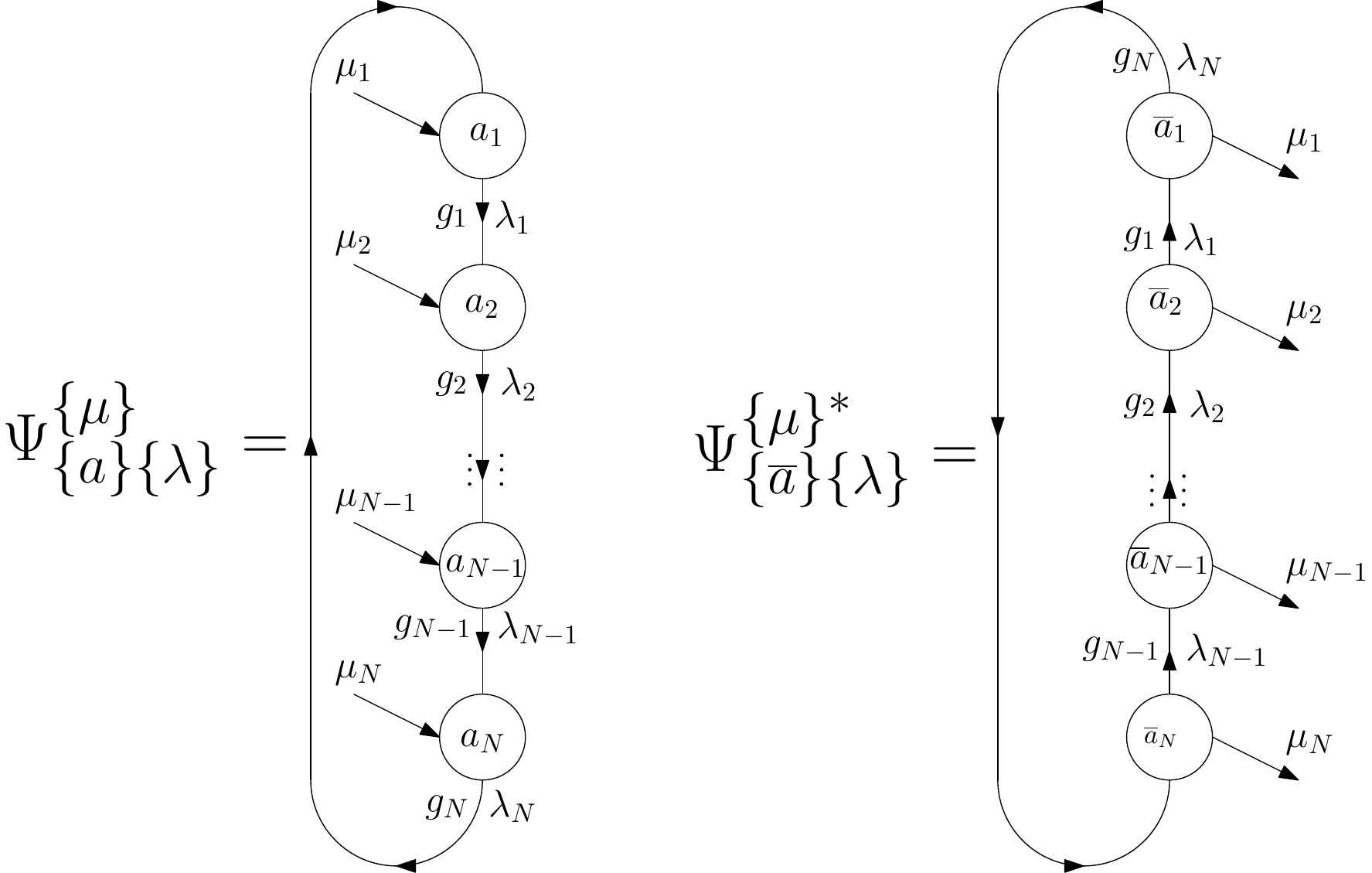}
\caption{\label{circle} The $N$-point trace function and the dual trace function. Holonomies $g_1,\dots, g_N$ are arguments of the trace function.}
\end{center}
\end{figure}

The multi-time propagator is simply an element of the semigroup generated by all commuting 
Hamiltonians:
\[
U^{\{\mu\}}_{\{A\}}=\exp(-\sum_{k,j} H_k^{(j)}A_{k,j})
\]
It has $Nr$ independent times $A_{k,j}$ and satisfies natural composition property
\[
U^{\{\mu\}}_{\{A\}}U^{\{\mu\}}_{\{B\}}=U^{\{\mu\}}_{\{A+B\}}
\]
where $(A+B)_{k,j}=A_{k,j}+B_{k,j}$.

The propagator is an  integral operator acting on the space $\CH_{\mu}$. In the usual way, from the spectral decomposition we derive its kernel:

\begin{equation}\label{prop}
U^{\{\mu\}}_{\{A\}}(\{g\}, \{g'\})=\sum_{\{\lambda\},\{a\}} \exp(-\sum_{i=1}^N c_2(\lambda_i)A_i)\Psi^{\{\mu\}}_{\{a\},\{\lambda\}}(g_1,\dots, g_N)
\Psi^{\{\mu\}}_{\{\overline{a}\},\{\lambda\}}(g_1',\dots, g_N')^*
\end{equation}

Here $\Psi^{\{\mu\}}_{\{\overline{a}\},\{\lambda\}}(g_1,\dots, g_N)^*\in V_{\mu_1}^*\otimes \dots \otimes V_{\mu_N}^*$ are dual trace functions
defined as 
\begin{align}\label{dula-t-fncn}
\Psi^{\{\mu\}}_{\{\overline{a}\},\{\lambda\}}(g_1,\dots, g_N)^*=Tr_{V_{\lambda_N}^*}(\pi_{\lambda_N}^*(g_N)b_1\dots (\pi_{\lambda_1}^*(g_1)\otimes id_{V_{\mu_1}})  (b_2\otimes id_{V_{\mu_2}} \otimes id_{V_{\mu_1}})\dots \nonumber \\ (b_{N-1}\otimes id_{V_{\mu_{N-1}}} \otimes\dots\otimes  id_{V_{\mu_1}})(\pi_{\lambda_1}^*(g_{N-1})id_{V_{\mu_1}}\otimes \dots\otimes id_{V_{\mu_{N-1}}}))
\end{align}
For $a_i: V_{\lambda_i}\to V_{\mu_i}\otimes V_{\lambda_{i-1}}$ the morphism $\overline{a}_i: V_{\lambda_i}^*\otimes V_{\mu_i}\to V_{\lambda_{i-1}}^*$ is a $\g$ linear map 
obtained by partial dualizing of $a^\dag:  V_{\mu_i}\otimes V_{\lambda_{i-1}}\to V_{\lambda_i}$. 

Note that this kernel pointwise is a linear operator in $End(V_{\mu_N}\otimes \dots \otimes V_{\mu_1})\simeq (V_{\mu_N}\otimes \dots \otimes V_{\mu_1})\otimes (V_{\mu_1}^*\otimes \dots \otimes V_{\mu_N}^*)$.

From now on we will focus on multi-time propagators with non-zero times
for quadratic Hamiltonians only. We will also be using graphical calculus to visualize algebraic 
operations. It is a version of Penrose's
graphical calculus. 
For trace functions $\Psi$ we will write a circle marked points corresponding to vertex operators $\{a_i\}$ andwith outgoing semi-open intervals colored by representations $V_{m_i}$.
For an edge colored by a highest weight $\lambda$ and an element $g\in G$ we assign 
$\pi^\lambda(g)$.
To a vertex colored by a $G$-invariant linear map $a:V_\lambda \to V_\mu\otimes V_\nu$ where
$\lambda$ is the color of the incoming edge and $\mu$ and $\nu$ are colors of outgoing edges, we assign the linear map $a$. Trace function is the contraction of these maps, which correspond to 
assembling (gluing)  the graph on Fig. \ref{prop-spec} from edges and vertices. 

\begin{figure}[t!]\label{prop-spec}
\begin{center}
\includegraphics[width=0.7\textwidth, angle=0.0, scale=0.7]{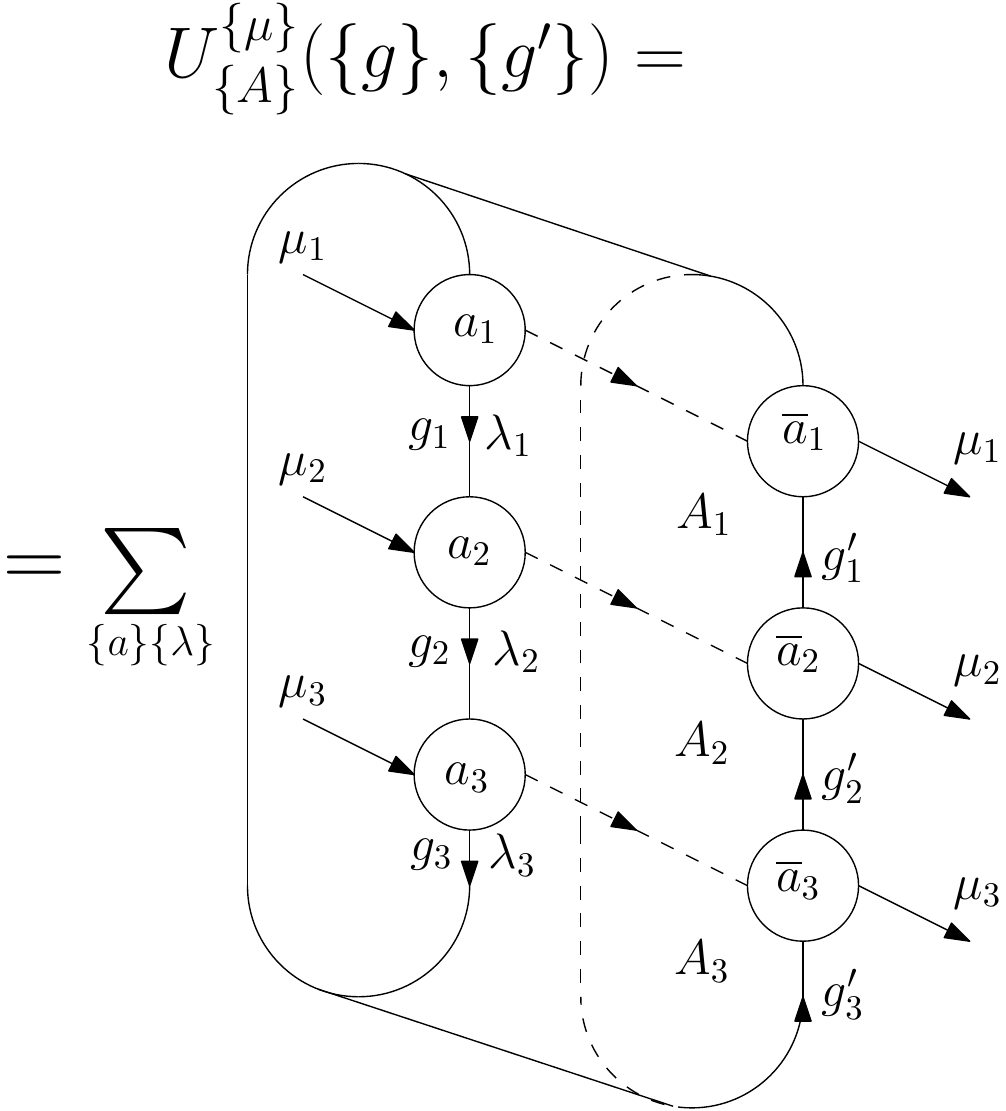}
\caption{Here is the graphical representation of the propagator (\ref{prop}) for $N=3$.
We put the "color" $\lambda_i$ on the $i$-th region. Both segments of the boundary 
of this region which also belong to the boundary of the cylinder inherit this color. Dashed lines 
are colored by representations $\mu_i$. Colors $\lambda_i$  they satisfy the Clebsch-Gordan rules
at each dashed line, as it is shown below. There $V_\lambda\subset V_\mu\otimes V_{\lambda'}$ Here we use
the orientation of the cylinder to distinguish regions that are left and right to the dashed line. }
\end{center}
\end{figure}

\begin{figure}[t!]\label{CG-rules}
\begin{center}
\includegraphics[width=0.8\textwidth, angle=0.0, scale=0.4]{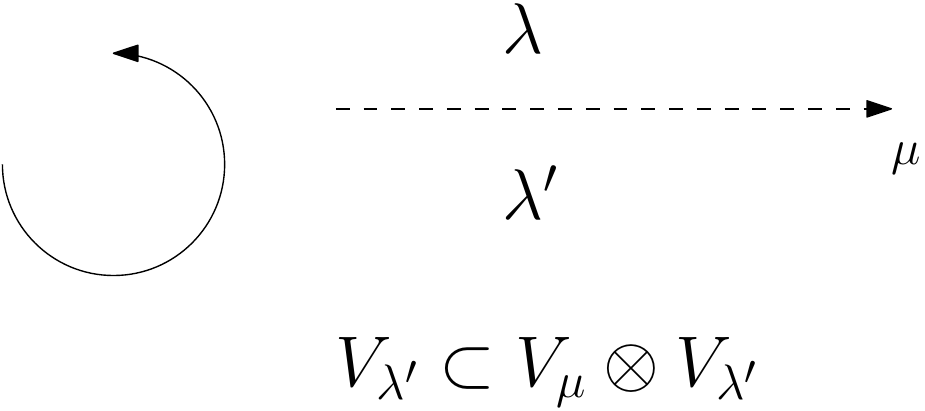}
\caption{Coloring of faces and edges are related by Clebsch-Gordan rules and uses the orientation of the surface. }
\end{center}
\end{figure}

\subsubsection{The integral formula}

\begin{theorem} The kernel (\ref{prop}) can be written as the following integral
\begin{eqnarray}\label{prop-int-cyl}
U^{\{\mu\}}_{\{A\}}(\{g\}, \{g'\})=\int_{G^N} Z_{A_1}({g_1'}^{-1}h_N^{-1}g_1h_1)Z_{A_2}({g_2'}^{-1}h_1^{-1}g_2h_2)
\dots \\ Z_{A_N}({g_{N-1}'}^{-1}h_{N-1}^{-1}g_Nh_N) \pi^{\mu_N}(h_N)\otimes \dots \otimes \pi^{\mu_1}(h_1) dh_1\dots dh_N
\end{eqnarray}
Here 
\[
Z_A(g)=\sum_\lambda \chi_\lambda(g)
dim(\lambda) \exp(-Ac_2(\lambda))
\]
is the partition function of the two-dimensional Yang-Mills theory with the 
boundary holonomy $g\in G$ \cite{Mig}\cite{Witt-1}.
\end{theorem}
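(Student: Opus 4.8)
The plan is to prove \eqref{prop-int-cyl} by direct computation: expand the right-hand side, carry out the Haar integrals over $h_1,\dots,h_N$ one at a time, and recover the spectral sum \eqref{prop}.

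First I would substitute the Migdal partition function $Z_{A_i}(x)=\sum_{\lambda_i}\dim(\lambda_i)\,\chi_{\lambda_i}(x)\exp(-A_i c_2(\lambda_i))$ into each of the $N$ factors. This turns the integral into a sum over an $N$-tuple $\{\lambda\}=(\lambda_1,\dots,\lambda_N)$ of dominant weights, carrying the weight $\prod_{i=1}^N\dim(\lambda_i)\exp(-A_i c_2(\lambda_i))$. Writing each character as a trace, $\chi_{\lambda_i}({g_i'}^{-1}h_{i-1}^{-1}g_ih_i)=\textup{Tr}_{V_{\lambda_i}}\bigl(\pi^{\lambda_i}({g_i'}^{-1})\,\pi^{\lambda_i}(h_{i-1})^\dag\,\pi^{\lambda_i}(g_i)\,\pi^{\lambda_i}(h_i)\bigr)$ (indices cyclic, $h_0=h_N$), one sees that each gauge variable $h_i$ occurs in exactly three places: in $\chi_{\lambda_i}$ through $\pi^{\lambda_i}(h_i)$, in $\chi_{\lambda_{i+1}}$ through $\pi^{\lambda_{i+1}}(h_i)^\dag$, and in the tensor factor $\pi^{\mu_i}(h_i)$.

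Second, I would perform the $N$ integrals over the $h_i$ independently. By the orthogonality relations for matrix elements with respect to the Haar measure (the identities collected in the appendix), the integral
\[
\int_G \pi^{\lambda_i}(h)\otimes\pi^{\mu_i}(h)\otimes\pi^{\lambda_{i+1}}(h)^\dag\,dh=\frac{1}{\dim(\lambda_{i+1})}\sum_{a_i}a_i\,a_i^\dag
\]
is the orthogonal projection of $V_{\lambda_i}\otimes V_{\mu_i}$ onto its $V_{\lambda_{i+1}}$-isotypic component, expressed through an orthonormal basis $\{a_i\}$ of $\textup{Hom}_G(V_{\lambda_{i+1}},V_{\lambda_i}\otimes V_{\mu_i})$ with respect to the scalar product $(\,\cdot\,,\cdot\,)_{\lambda_{i+1}}$ introduced above. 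After a cyclic relabelling of the summation weights these are exactly the Clebsch-Gordan intertwiners entering the trace function \eqref{tr-fncn}. The key bookkeeping observation is that the factors $1/\dim(\lambda_{i+1})$, one from each integral, multiply to $\prod_i\dim(\lambda_i)^{-1}$ and cancel precisely the product $\prod_i\dim(\lambda_i)$ supplied by the $Z_{A_i}$; this explains why the $2$D Yang--Mills weight must carry the factor $\dim(\lambda)$.

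Third, once the $h_i$ are integrated out, what remains is the sum over $\{\lambda\}$ and over the intertwiner bases $\{a_i\}$, weighted by $\exp(-\sum_i c_2(\lambda_i)A_i)$, of a cyclic contraction of the fixed matrices $\pi^{\lambda_i}(g_i)$ and $\pi^{\lambda_i}({g_i'}^{-1})$ against the projections $\sum_{a_i}a_i a_i^\dag$. Threading the $\pi^{\lambda_i}(g_i)$-string through the intertwiners $a_i$ reassembles, by definition, into the trace function $\Psi^{\{\mu\}}_{\{a\},\{\lambda\}}(g)$ of \eqref{tr-fncn}; threading the $\pi^{\lambda_i}({g_i'}^{-1})$-string through the conjugates $a_i^\dag$, after the partial dualization $a_i\mapsto\bar a_i$, reassembles into the dual trace function $\Psi^{\{\mu\}}_{\{\bar a\},\{\lambda\}}(g')^*$ of \eqref{dula-t-fncn}. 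Matching termwise in $\{\lambda\}$ and $\{a\}$ reproduces \eqref{prop}, completing the proof. The step I expect to be the main obstacle is this last one: tracking the Clebsch--Gordan contractions, the Hermitian conjugations, and the partial dualization $a_i\mapsto\bar a_i$ so that the cyclic product of the $\pi^{\lambda_i}(g_i)$ and $\pi^{\lambda_i}({g_i'}^{-1})$ factors, woven through the projections, collapses onto the trace and dual-trace functions with the representations $V_{\mu_i}$ inserted at the correct marked points and with the correct cyclic identification of the weights. Everything else reduces to the Haar orthogonality identities. As a consistency check, the identity is the spin-refined analogue of the heat-kernel gluing property $\int_G Z_A(xh)\,Z_B(h^{-1}y)\,dh=Z_{A+B}(xy)$ of the Migdal partition function, now decorated by insertions $\pi^{\mu_i}(h_i)$ at the gauge points.
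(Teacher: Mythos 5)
Your proposal is correct and follows exactly the route the paper intends: the paper's "proof" is the one-line remark that one should use the Haar-measure identities of Appendix A (in particular \eqref{integ-3} and the projector formula), and your computation is precisely the detailed execution of that — expand each $Z_{A_i}$ in characters, integrate out each $h_i$ using the three-fold orthogonality relation to produce the sums over intertwiner bases with the $1/\dim(\lambda)$ factors cancelling the $\dim(\lambda)$ weights, and reassemble the residual contractions into the trace and dual trace functions of \eqref{prop}.
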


To prove this statement one should use standard integral identities from Appendix \ref{A} This integral formula also has a natural graphical representation, see Fig. \ref{prop-int}.

\begin{figure}[t!]\label{prop-int}
\begin{center}
\includegraphics[width=1.0\textwidth, angle=0.0, scale=0.7]{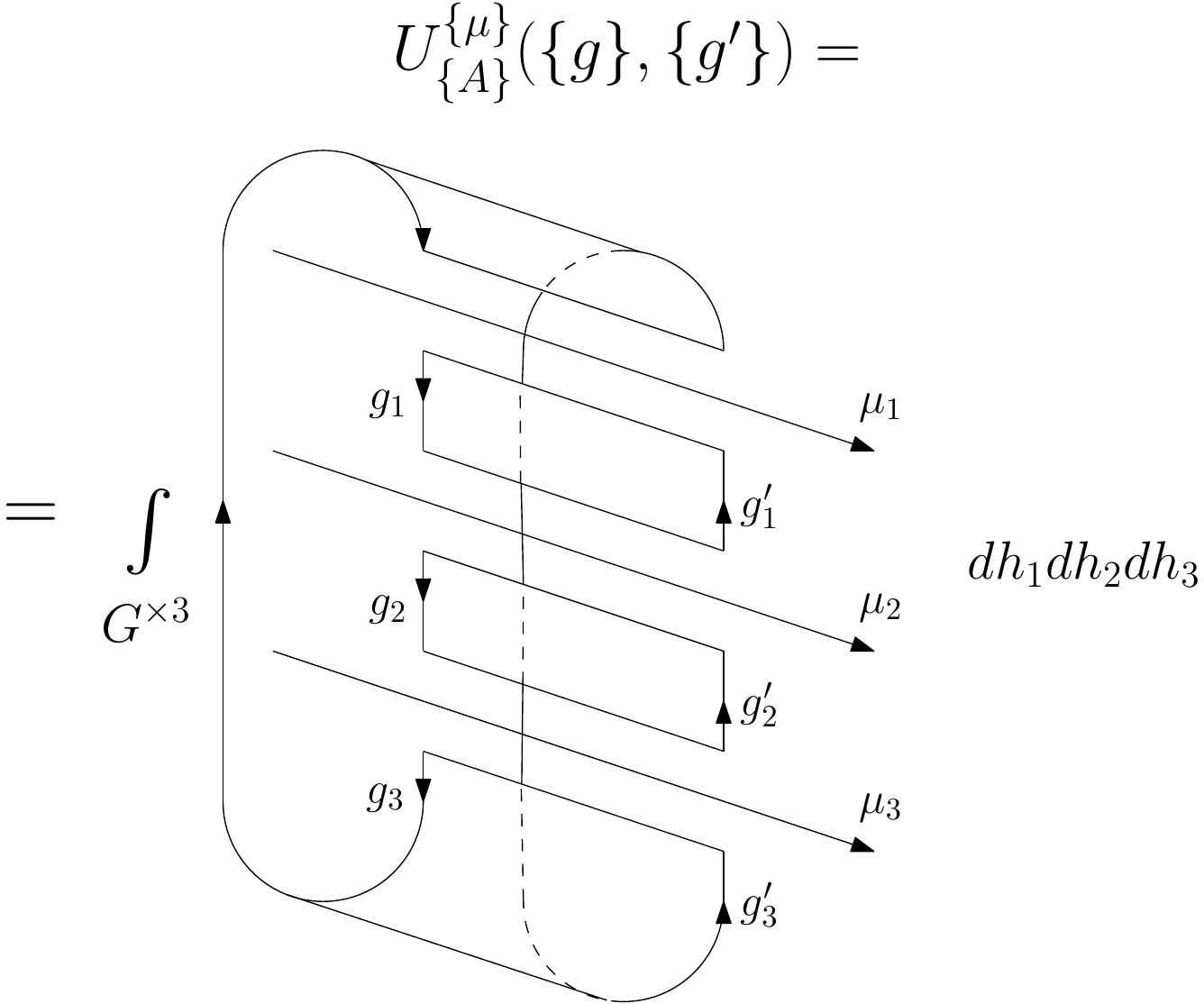}
\caption{Here is the graphical representation of the propagator (\ref{prop}) in 
the integral form (\ref{prop-int-cyl}) for $N=3$. 
Lines colored by representations $\mu_i$ should be regarded as open Wilson lines
and each region which is homeomorphic to a disc contributes as the corresponding factor
in (\ref{prop-int-cyl}). }
\end{center}
\end{figure}

\begin{figure}[t!]\label{disc}
\begin{center}
\includegraphics[width=0.6\textwidth, angle=0.0, scale=0.7]{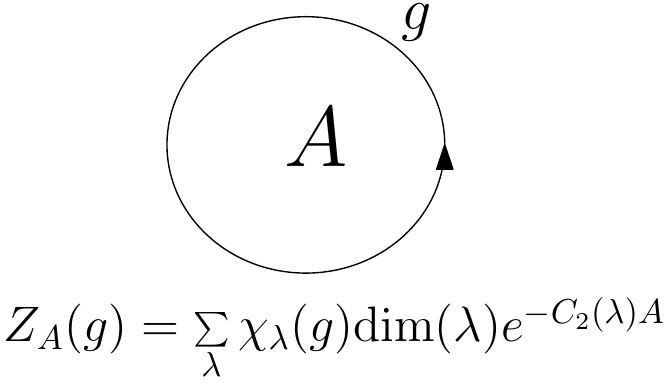}
\caption{The partition function for a disc. }
\end{center}
\end{figure}

The integral formula for the propagator reminds a lot the partition function of pure 2-dimensional Yang-Mills theory \cite{Mig}\cite{Witt-1}. An unusual feature of this formula are "open" Wilson lines which go 
to the outside of the cylinder as outer edges. We 
build on this analogy in the next section where will define 2-dimensional 
Yang-Mills theory of surfaces with corners and embedded open graphs.

\section{Quantum two dimensional Yang-Mills with corners}

In this section we define two dimensional Yang-Mills theory on space time manifolds with non-removable corners. We will not use the language of 2-categories, though it is natural in such setting.

\subsection{Surfaces with open graphs.}

Let $\Sigma$ be a topological oriented compact surface, possibly with a boundary $\pa \Sigma$.
Let $\Gamma$ be a graph, possibly with 1-valent vertices. 
We assume that $\Gamma$ is partially embedded in $\Sigma$ in the following sense: 
all vertices and all edges connecting vertices of valency greater than one are embedded to $\Sigma$\footnote{In the most general setting we can allow to have edges of $\Gamma$ which also belong to the
boundary of $\Sigma$.}. Some of the vertices possibly to $\pa \Gamma$, we call them boundary vertices. 
One valent vertices are connected only to boundary vertices. Both, one valent vertices and edges connecting them to boundary vertices do not belong to $\Sigma$, we will call them outer vertices and other edges.

Now let us remove outer vertices. We will be left with $\Gamma$ partially embedded
to $\Sigma$ with outer edges being open "away from $\Sigma$". 
We will call such pairs $(\Gamma,\Sigma)$ {\it open graphs on $\Sigma$}. An example of
an open graph on a surface is shown of Fig. \ref{graphs}

Note that each such graph defines a cell decomposition of $\Sigma$ with cells that are
not necessary contractible.

\begin{figure}[t!]\label{graphs}
\begin{center}
\includegraphics[width=0.5\textwidth, angle=0.0, scale=0.6]{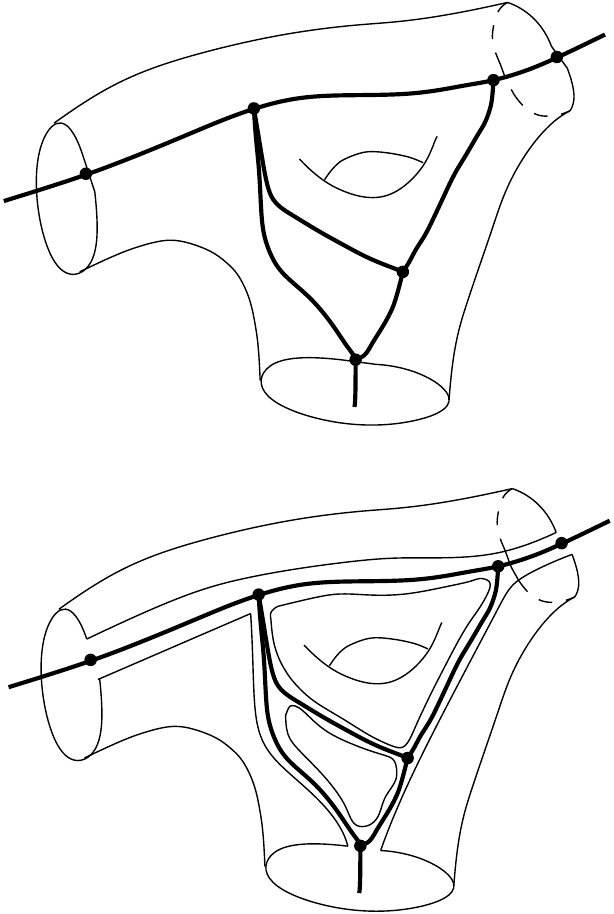}
\caption{An example of an open graph on a surface, upper figure. The same graph but enriched 
is shown on a lower figure. All edges of the graph are oriented and all 
connected components of the boundary of the surface are oriented
by the orientation of the surface ("counterclockwise") }
\end{center}
\end{figure}

Now for each open graph $\Gamma$ on $\Sigma$ we will construct {\it enriched graph} $\widehat{\Gamma}$ as follows:

\begin{itemize}

\item Add a circle to each connected component of $\Sigma\backslash \Gamma$ which follows
the boundary of this connected component.The orientation of this circle is induced by the orientation of 
$\Sigma$.

\item Replace intervals of $\pa \Sigma$ between boundary vertices of $\Gamma$ by segments of the corresponding inserted circle.

\end{itemize}
 
 \begin{figure}[t!]\label{gluing}
\begin{center}
\includegraphics[width=1.0\textwidth, angle=0.0, scale=0.6]{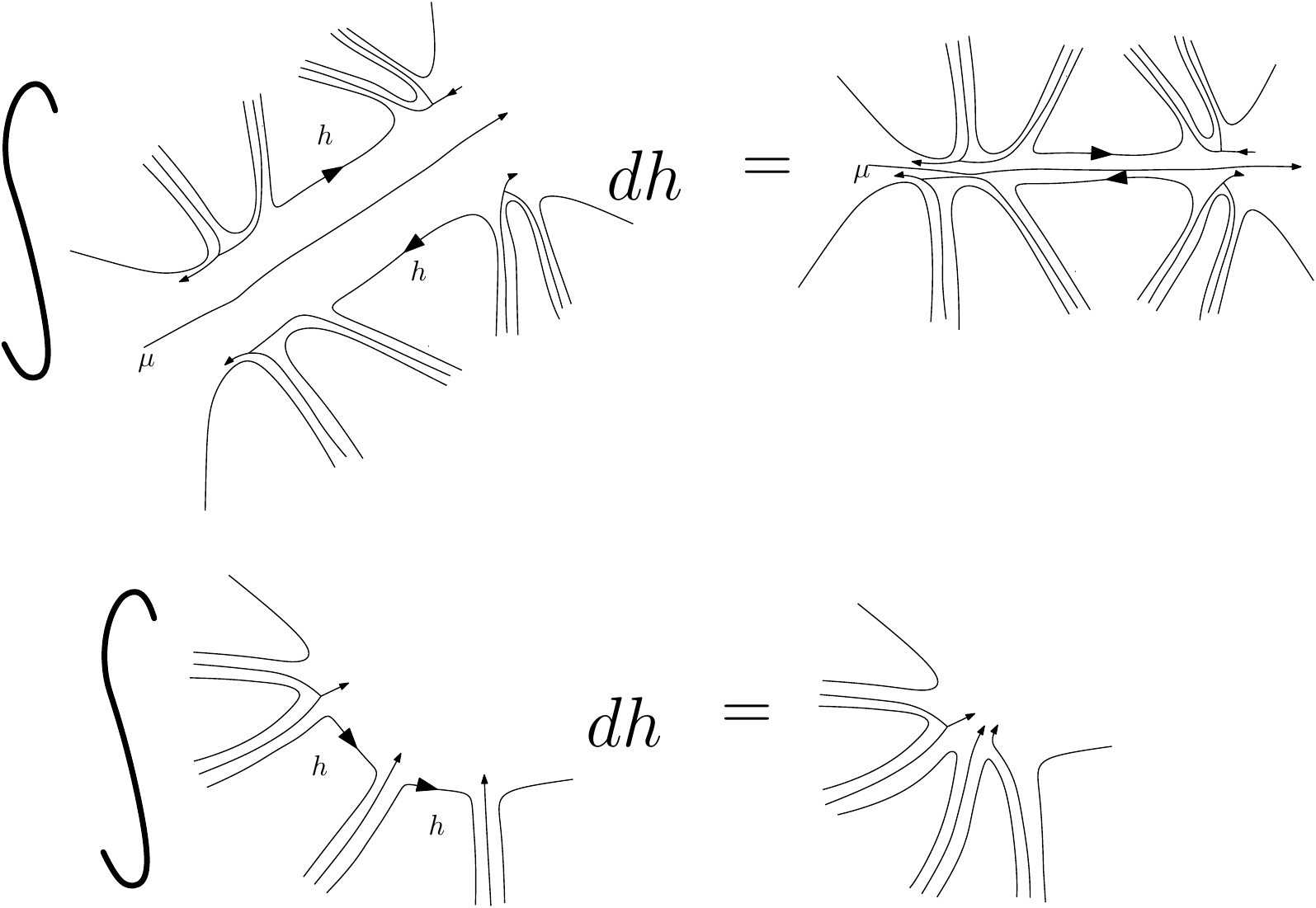}
\caption{An illustration to how partition surfaces are glued along intervals connecting 
non-removable corners. This may create multiple outer edges which can still be regarded as a singe 
outer edge colored by the tensor product of representations.}
\end{center}
\end{figure}

 Finally, let us define the gluing of surfaces with enriched graphs.
 
 A surface with an enriched graph in it can be glued
 to itself  by identifying two intervals at the boundary as it is shown on Fig. \ref{gluing}. 
 The upper figure corresponds to the case when the intervals do not shape a vertex.
 The lower figure corresponds to the case when the intervals share a vertex. In the first case
 they can belong to different connected components of the surface. 
 Such gluing of surfaces  correspond to compositions of partition functions
 which is also illustrated on Fig. \ref{gluing} and will be discussed in the next section.

\subsection{Quantum two-dimensional Yang-Mills theory on surfaces with open Wilson graphs.}

\begin{figure}[t!]\label{part-fncn}
\begin{center}
\includegraphics[width=0.6\textwidth, angle=0.0, scale=0.6]{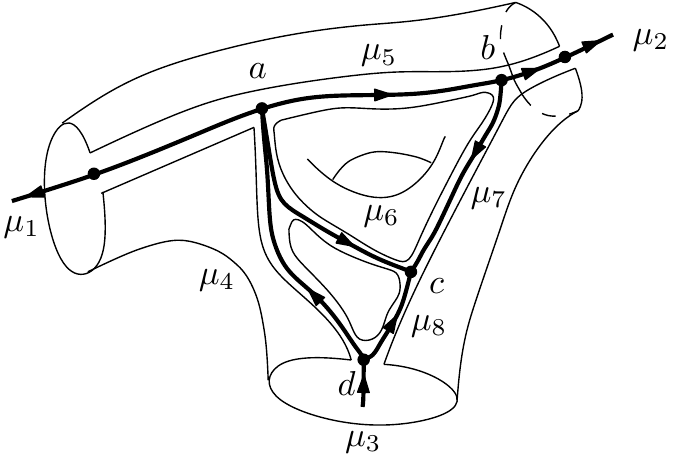}
\caption{Enriched colored open graph on a surface. Here $\mu_i$ are highest weights of representations assigned to edges and $a_, b, \dots$ are 
invariant vectors assigned to vertices. }
\end{center}
\end{figure}

First fix a simple compact Lie group $G$ and define a $G$-coloring of
a surface with an open graph as follows:
\begin{itemize}

\item We assign a finite dimensional representation of $G$ to each edge of the graph $\Gamma$. 

\item For each vertex of $\Gamma$ fix a total ordering of adjacent 
edges which agrees with the cyclic ordering induced by the orientation 
of $\Sigma$. Enumerate adjacent edges $e_1,\dots, e_m$ according to this total ordering. Let $V_{e_1}, \dots , V_{e_m}$ be colors of representations assigned to adjacent edges. To the
corresponding vertex we assign a $G$-invariant vector in the tensor product
\[
a_v\in (V_{e_1}^{\epsilon_1}\otimes \dots \otimes V_{e_m}^{\epsilon_m})^G
\]
Here $V^+=V$ and $V^-=V^*$ and $+$ is for edges oriented outward the
vertex and $-$ is for edges oriented inward. 
\end{itemize}

We will call colored graphs Wilson graphs by analogy with Wilson loops
\cite{Witt-2}\cite{PR}.

Now let us assign a variable to each connected component of $\Sigma\backslash \Gamma$. We assume that this variable is additive with respect to gluing. For example if a surface would be equipped with a volume
form, an area would be such additive variable. In the previous section
when the surfaces are cylinders, Euclidean times evolution for spin Calogero-Moser model Hamiltonians are such variables.

Now let us describe the space of boundary states. Let $\pa_\alpha\Sigma$ be
connected components of the boundary of $\Sigma$, $\pa \Sigma=\sqcup_\alpha \pa_\alpha \Sigma$. If a connected component
does not have boundary vertices of $\Gamma$, it is a circle. If it does,
it is a union of intervals to which boundary vertices of $\Gamma$ partition 
this connected component. We will call these intervals and circles {\it parts} of $\pa \Sigma$.

To  connected component $\pa_\alpha\Sigma$ we assign the following
spaces of boundary states:
\begin{itemize}

\item If $\pa_\alpha\Sigma$ is a circle we assign to it
\[
\CH(\pa_\alpha\Sigma)=L_2(G)^G
\]
It is a Hilbert space where characters of finite dimensional 
irreducible representations of $G$ form an orthonormal basis.

\item If the connected component $\pa_\alpha\Sigma$ has 
boundary vertices of $\Gamma$ on it, we assign the following space
to this boundary component:
\[
\CH(\pa_\alpha\Sigma)=L_2(G^{N_\alpha}, \otimes_{i=1}^{N_\alpha}(V^{\epsilon_{1,i}}_{e_{1,i}}\otimes \dots \otimes V^{\epsilon_{m_i,i}}_{e_{m_i,i}}))^{G^{N_\alpha}}
\]
Here $V_{e_{1,i}}^{\epsilon_{1,i}}\otimes \dots \otimes V_{e_{m_i,i}}^{\epsilon_{m_i,i}}$ is the tensor
product of colors of outer edges adjacent to boundary vertex $i$, $G^{N_\alpha}$ acts by gauge transformations on $G^{N_\alpha}$. On the tensor product it acts as 
\[
(g_1,\dots, g_{N_\alpha})(\otimes_{i=1}^{N_\alpha}(x_{e_{1,i}}\otimes \dots \otimes x_{e_{m_i,i}}))=\otimes_{i=1}^{N_\alpha}(g_ix_{e_{1,i}}\otimes 
\dots \otimes g_ix_{e_{m_i,i}})
\]
 
\end{itemize}

The partition function of the 2D Yang-Mills with for a surface with an open Wilson graphs is defined as a vector the following vector in the space of boundary states:
\begin{equation}\label{part-fncn}
Z_{\Gamma, A}(\{g\})=\int_{G^{E_{int}}} \prod_{D\in \Sigma\backslash \Gamma}
Z_{D,A_D}(\{g, h\}_{\pa D}) W_\Gamma(\{g, h\}) \prod_{e\in E_{int}} dh_e
\end{equation}
Here $D$ are parts of $\Sigma$ bounded by circles in the enriched 
open graph $\Gamma$, $hol_{\pa D}$ is the product of group elements $h_e$ assigned to internal edges in $\pa D$ and group elements $g_e$ corresponding to boundary edges of $\pa D$ in the counter clock-wise order starting from any edge. 
\begin{equation}\label{D-part-f}
Z_{D,A}(hol_{\pa D})=\sum_{\lambda}\chi_\lambda(hol_{\pa D}){\dim(\lambda)}^{1-2g}e^{-c_2(\lambda)A}
\end{equation}
where $g$ is the genus of the surface obtained from $D$ by gluing a disc
to $\pa D$, i.e. by "closing" the surface.

For example $hol_{\pa D}=1$ can be interpreted as if the boundary of $D$ does
exists, i.e. shrinks to a point. Thus  $Z_{D,A}(1)$ is the partition function of
2D Yang-Mills on the "closure" of $D$ \cite{Witt-2}. If $D$ is a disk
\begin{equation}\label{disk-part-f}
Z_{D,A}(g)=\sum_{\lambda}\chi_\lambda(g){\dim(\lambda)}e^{-c_2(\lambda)A}
\end{equation}
is the partition function of the 2D Yang-Mills on $D$ with boundary holonomy $g$. It is easy to see that for a triangulated surface $D$ the partition function (\ref{D-part-f}) can be composed from elementary partition functions (\ref{disk-part-f}) \cite{Witt-1}. This is the manifestation of locality of 2D Yang-Mills theory.

The function $W_\Gamma(\{g, h\})$ is the contraction of 
holonomies along internal edges evaluated in the representation corresponding to this edge and invariant vectors corresponding to 
vertices of $\Gamma$. Such function on $G^{E_{int}}$ is know as 
Wilson graph evaluated on a graph connection see for example \cite{Witt-2}(?)\cite{PR}. Wilson loop is a particular case of such a function. 

Partition functions (\ref{part-fncn}) satisfy the gluing property. 

\begin{theorem} Let $\Sigma_1$ and $\Sigma_2$ be two surfaces with boundary edges 
$e_1\in \pa \Sigma_1$ and $e_2\in \Sigma_2$ being identified as it is shown of Fig. \ref{gluing}.
Let $h$ be the holonomy along identified  edges then 
\[
\int_G Z(\Sigma_1;h)\otimes \otimes \pi_\mu(h)Z(\Sigma_2;h)dh=Z(\Sigma_1\cup_{e_1\simeq e_2}\Sigma_2)
\]
where in the partition function on the right there is an extra Wilson line colored by $V_\mu$
"sandwiching" between $\Sigma_1$ and $\Sigma_2$. And a similar gluing identity we have for the 
partition functions in case we glue two neighboring  intervals on $\pa \Sigma$.
\end{theorem}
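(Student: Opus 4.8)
The plan is to deduce the gluing identity directly from the defining integral (\ref{part-fncn}), exploiting the locality built into that formula: its integrand is a product of the disc factors (\ref{D-part-f}) and the Wilson contraction $W_\Gamma$, and all coupling between the two pieces is carried by the single holonomy along the glued edge.

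First I would record how the enriched open graph of $\Sigma := \Sigma_1 \cup_{e_1 \simeq e_2} \Sigma_2$ is assembled from those of $\Sigma_1$ and $\Sigma_2$. Under the identification $e_1 \simeq e_2$ the two boundary intervals become a single edge $e$ of the combined graph, colored by $V_\mu$; it is now an \emph{internal} Wilson edge of $\Sigma$ carrying a holonomy $h$ to be integrated. Hence $E_{int}(\Sigma) = E_{int}(\Sigma_1) \sqcup E_{int}(\Sigma_2) \sqcup \{e\}$, whereas the set of regions of $\Sigma$ is exactly the disjoint union of the regions of $\Sigma_1$ and of $\Sigma_2$: since $e$ lies on $\Gamma$, the regions $D_1 \subset \Sigma_1$ and $D_2 \subset \Sigma_2$ that abut the glued edge stay separated by this Wilson wall and do \emph{not} fuse. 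Each such region keeps the very boundary it had before, so its disc factor $Z_{D, A_D}(hol_{\partial D})$ is literally unchanged; by the assumed additivity of the area variable it also keeps its area. The only new datum is the edge $e$, its holonomy $h$, and the integration $dh$.

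With this bookkeeping the proof is a factorization. Substituting into (\ref{part-fncn}) for $\Sigma$, the product of disc factors splits as $\prod_{D \subset \Sigma_1} Z_{D, A_D} \cdot \prod_{D \subset \Sigma_2} Z_{D, A_D}$, in which $h$ enters only through $hol_{\partial D_1}$ and $hol_{\partial D_2}$, while the Wilson contraction factorizes as $W_\Gamma = W_{\Gamma_1}\, \pi_\mu(h)\, W_{\Gamma_2}$, the middle factor being precisely the holonomy of the new internal edge $e$ in $V_\mu$. Performing the $\Sigma_1$-internal integrations first, at fixed $h$, reproduces term by term the definition of the boundary state $Z(\Sigma_1; h)$; the $\Sigma_2$-internal integrations likewise yield $Z(\Sigma_2; h)$. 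What remains is the single integration over $h \in G$ with $\pi_\mu(h)$ sitting between the two factors, which is exactly $\int_G Z(\Sigma_1; h) \otimes \pi_\mu(h) Z(\Sigma_2; h)\, dh$. The self-gluing case, where two neighboring intervals of a single $\partial\Sigma$ are identified, is handled by the identical computation with $\Sigma_1 = \Sigma_2$.

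I expect the main obstacle to be the orientation and combinatorial bookkeeping rather than any estimate. One must check, in both configurations of Fig. \ref{gluing}, that the cyclic orderings of the boundary holonomies around $D_1$ and $D_2$ dictated by the induced orientation of the inserted circles place $h$ and $h^{-1}$ so that $\pi_\mu(h)$ contracts the $V_\mu$-index leaving $\Sigma_1$ against the $V_\mu^*$-index leaving $\Sigma_2$ in the correct order. When the glued intervals share a boundary vertex (lower configuration of Fig. \ref{gluing}) there is the extra step of contracting the invariant tensor $a_v$ at that vertex and recognizing, as in the caption, that several coincident outer edges assemble into one edge colored by the tensor product of their representations; this is a regrouping inside $W_\Gamma$ and does not disturb the factorization. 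Finally, as a consistency check, in the degenerate case $V_\mu = \mathbb{C}$ the wall disappears, $D_1$ and $D_2$ merge, and the character orthogonality of Appendix \ref{A} collapses the $h$-integral to the standard 2D Yang-Mills convolution $\int_G \chi_\lambda(u_1 h)\chi_{\lambda}(h^{-1} u_2)\, dh = \dim(\lambda)^{-1}\chi_\lambda(u_1 u_2)$, recovering a single disc factor of the expected genus.
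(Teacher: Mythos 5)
Your factorization argument is correct and is essentially the paper's own (one-sentence) proof spelled out: the paper simply asserts that the identity "follows immediately from integral identities in the Appendix," and your Fubini-style reorganization of the defining integral (\ref{part-fncn}) — integrating the $\Sigma_1$- and $\Sigma_2$-internal holonomies first at fixed $h$, with $W_\Gamma = W_{\Gamma_1}\pi_\mu(h)W_{\Gamma_2}$ and the disc factors unaffected because the glued edge is a Wilson wall — is exactly the bookkeeping that assertion leaves implicit. Your remarks on orientation conventions and on the degenerate case $V_\mu=\CC$, where the Haar-measure identities of Appendix \ref{A} are actually needed to merge the adjacent regions, correctly identify the only places where those appendix identities enter.
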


The proof follows immediately from integral identities in the Appendix.

\subsection{Point observables}

Wilson graphs can be regarded as line observables in 2D Yang-Mills. They has been studies 
since many years ago, see for example \cite{Mig}\cite{Witt-1}\cite{Witt-2}\cite{CMR}\cite{PR}.

Let $D\subset \Sigma$ be a disc. Let us assign to $D$ the following function on $G/G$:

\[
\CO_{A}^F=\sum_\lambda \chi_\lambda (g) F_\lambda \dim(\lambda)e^{-Ac_2(\lambda)}
\]
When $F_\lambda=1$ for all $\lambda$ this is the partition function of 2D Yang-Mills on a disc.
We will call such function the partition function of a disc $D$ with {\it topological point observables}
inserted in the disc. The position of the point observable on $D$ is not given since it is a topological 
observable.  One should think of the function
$F$ as "sitting" at a point in the interior of $D$, confined to $D$ by the Wilson loop at the boundary.

Naturally, if we glue a surface to $D$ along a connected component $\pa_\alpha \Sigma$ of the 
boundary $\pa \Sigma$ of the surface $\Sigma$ the resulting partition function is the partition function with the point observable $F$ inserted in $\Sigma$. 

The insertion of such observable in the partition function for a surface with an open Wilson
graph $\Gamma$ on it depends on into which region of $\Sigma\backslash \Gamma$
it is inserted. But it does not depend at which point of the region it is inserted.  

If we put such point observable on a cylinder, the corresponding integral operators 
\[
U^F_A(g,h)=\sum_\lambda \chi_\lambda(g) \overline{\chi_\lambda(h)}F_\lambda \exp(-c_2(\lambda)A)
\]
form commutative algebra
\[
U_A^F\ast U_B^G=U^{FG}_{A+B}
\]
where $(FG)_\lambda=F_\lambda G_\lambda$. 

An example of such observable was introduced in \cite{Witt-2} as the 
insertion making the partition function independent of the orientation. 
If $\lambda^*=-w_0(\lambda)$\footnote{Here $w_0$ is the longest element of the Weyl group.} is the highest weight if the dual representation to $V_\lambda$, such observable corresponds to $F_\lambda=\delta_{\lambda, \lambda^*}$.

\section{Conclusion}

Here we will list some open problems and future directions.

\begin{enumerate}

\item The study of large $n$ limit in the $SU_n$ 2D YM theory,  is one of the 
most interesting directions for a number of well known reasons. When $\Sigma$ 
is a sphere there is a phase transition in this limit \cite{DK} at $A=\pi^2$. Similar behavior for 
the cylinder and for a sphere  with three holes was studied in \cite{GM}. For the relation 
between the 2D YM in large $n$ limit and the string theory see \cite{GT}.
It would be interesting to extend these results to the case of open
Wilson graphs.

\item In the follow up paper we will
describe the 2D YM theory with mixed gauge groups. Particular cases of such models are 
related to open spin Calogero-Moser spin chains, see \cite{SR} and for its classical counterpart \cite{R4}.

\item  There is a natural $q$-deformation of the $N$-spin CM systems described in \cite{dCRS}.
When $q$ is a root of unity this system is closely related to the quantum Chern-Simons theory \cite{W} for $U_q(\g)$
at roots of unity \cite{RT}. Corresponding deformations of 2D YM theory are invariants of twisted circle bundles over surfaces. Quadratic Casimirs are special in this setting 
because the ribbon structure in the corresponding modular tensor category is given by the exponents of quadratic Casimirs. 
Classical counterparts of these deformations are "relativistic" versions of 
spin Calogero-Moser models on moduli space of flat connections constructed in \cite{AR}.

\item As it was already mentioned partition functions of 2D YM theory can be regarded as a semiclassical limit of invariants of twisted circle bundles over surfaces corresponding to the quantum Chern-Simons theory with a compact Lie group. This makes the study of a quantum 2D YM theory for a non-compact Lie group even more interesting. It may be a toy model for
quantum complex Chern-Simons theory of a Chern-Simons theory for a non-compact real
Lie group. Quantum $N$-spin systems for noncompact real forms of simple complex Lie groups
were studied in \cite{SR}. 

\end{enumerate}

\appendix 

\section{On quantum moment maps.}\label{M}

\subsection{Quantum moment maps for $G$-actions}
The left and the right quantum moment maps act as:
\[
\widehat{\mu}_\ell(X)f(g)=\frac{d}{dt}f(e^{-tX}g)|_{t=0}, \ \ 
\widehat{\mu}_r(X)f(g)=\frac{d}{dt}f(ge^{tX})|_{t=0}
\]

For the left moment map acting on monomials in $U(\g)$ we have
\[
\widehat{\mu}_\ell(x)=\widehat{\mu}_\ell(X_1)\dots \widehat{\mu}_\ell(X_k)f(g)
=\frac{\pa^k}{\pa t_1\dots \pa t_k}f(e^{-t_kX_k}\dots e^{-t_1X_1}g)
\]
and a similar formula for the right action.

For $X\in \g\subset U(\g)$ the left and the right quantum moment maps are related as:
\begin{equation}\label{LR-relation}
\widehat{\mu}_r(X)f(g)=-\widehat{\mu}_\ell(Ad_g(X))f(g)
\end{equation}
This relation extends to monomials $x=X_1\dots X_k$ as
\[
\widehat{\mu}_r(x)=\widehat{\mu}_\ell(Ad_g(S(x))
\]

Indeed
\begin{eqnarray}\label{non-mon}
\widehat{\mu}_r(x)f(g)=\widehat{\mu}_r(X_1)\dots\widehat{\mu}_r(X_k)f(g)=
\frac{\pa^k}{\pa t_1 \dots \pa t_k}f(ge^{t_1X_1}\dots e^{t_kX_k})\\=(-1)^k\widehat{\mu}_\ell(Ad_g(X_k))\dots \widehat{\mu}_\ell(Ad_g(X_1))f(g)=\widehat{\mu}_\ell(Ad_g(S(x)))f(g)
\end{eqnarray}

Lie group $G$ acts on its Lie algebra $\g$ via adjoint action $g: X\mapsto Ad_g(X)$.
This action lifts to the action on $U(\g)$ which will also denote by $Ad_g$. The Lie group $G$ also acts naturally on differential
operators on $G$, $h: D\mapsto D^h$
\[
D^h f(g)={h_L^*}^{-1}(Df(hg))
\]
Here ${h_L^*}^{-1}f(g)=f(h^{-1}g)$.

Quantum moment maps are $G$-equivariant:
\[
\widehat{\mu}_\ell^h(x)=\widehat{\mu}_\ell(Ad_h(x))
\]
Let $\{e_i\}$ be a basis in $\g$ and $z=\sum_{i_1,\dots,i_k}f^{i_1,\dots, i_k}e_{i_1}\dots e_{i_k}$It is clear that differential operator $D=\widehat{\mu}_\ell(z)$ is $G$-invariant iff tensor $f$ is  $G$-invariant, i.e. iff $z\in Z(\g)$.

\begin{lemma}\label{z} Let $z\in Z(\g)$, then
\[
\widehat{\mu}_r(z)=\widehat{\mu}_\ell(S(z)
\]
\end{lemma}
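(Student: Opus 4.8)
The plan is to deduce the statement for central $z$ directly from the already-established equivariance relation \eqref{LR-relation} together with the $G$-invariance characterization of central elements. The key observation is that for a general monomial $x=X_1\cdots X_k$ the relation extends to $\widehat{\mu}_r(x)=\widehat{\mu}_\ell(\Ad_g(S(x)))$, as shown in \eqref{non-mon}. This formula is already proved for arbitrary $x\in U(\g)$; the only thing special about $z\in Z(\g)$ is that the $g$-dependent twist $\Ad_g$ can be removed.

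First I would write $z=\sum f^{i_1,\dots,i_k}e_{i_1}\cdots e_{i_k}$ in a fixed basis $\{e_i\}$ of $\g$ and apply the extended relation to get
\[
\widehat{\mu}_r(z)=\widehat{\mu}_\ell(\Ad_g(S(z))).
\]
Next I would use the fact, recorded just before the lemma, that $z\in Z(\g)$ is equivalent to the coefficient tensor $f$ being $\Ad$-invariant, so that $\Ad_g(z)=z$ for every $g\in G$; since the antipode $S$ is an anti-automorphism commuting with the $G$-action (as $\Ad_g$ is an algebra automorphism while $S$ reverses products, and $Z(\g)$ is commutative), we also have $\Ad_g(S(z))=S(\Ad_g(z))=S(z)$. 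Substituting this into the displayed identity collapses the $g$-dependence and yields
\[
\widehat{\mu}_r(z)=\widehat{\mu}_\ell(S(z)),
\]
which is exactly the claim.

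The one point requiring care is the interplay between $S$ and $\Ad_g$ on a product: because $S$ reverses the order of factors while $\Ad_g$ preserves it, one must check that $\Ad_g\circ S=S\circ \Ad_g$ as maps on $U(\g)$. This holds because $\Ad_g$ acts diagonally on each tensor factor $e_i$ and the reversal introduced by $S$ is compatible with applying $\Ad_g$ factor by factor. I expect this compatibility to be the only genuine obstacle, and it is mild: it follows from the fact that $\Ad_g$ is a Hopf-algebra automorphism, hence commutes with the antipode. Everything else is a direct substitution using invariance of $z$, so the proof is short.
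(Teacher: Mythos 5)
Your proof is correct and follows essentially the same route as the paper's: both reduce to the monomial identity $\widehat{\mu}_r(x)=\widehat{\mu}_\ell(\Ad_g(S(x)))$ and then use $\Ad$-invariance of a central element to strip the $g$-dependent twist. If anything, you make explicit the step (namely $\Ad_g\circ S=S\circ\Ad_g$ and $\Ad_g(z)=z$) that the paper's displayed computation passes over silently.
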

\begin{proof}
Indeed,  we have:
\begin{eqnarray}
\widehat{\mu}_rf(g)=\sum_{\{i\}}f^{\{i\}}\widehat{\mu}_r(e_{i_1})\dots\widehat{\mu}_r(e_{i_k})f(g)=
\sum_{\{i\}}f^{\{i\}}\frac{\pa^k}{\pa t_1 \dots \pa t_k}f(ge^{t_1e_{i_1}}\dots e^{t_ke_{i_k}})\\=\sum_{\{i\}}f^{\{i\}}\widehat{\mu}_\ell(e_{i_k})\dots \widehat{\mu}_\ell(e_{i_1})f(g)=\widehat{\mu}_\ell(S(z))f(g)
\end{eqnarray}
\end{proof}

\subsection{Quantum moment maps for $G^N$ actions } For the left and right action of $G^N$ on itself:
\[
\widehat{\mu}_L(X_1,\dots, X_N)=\widehat{\mu}^{(1)}_\ell(X_1)+\dots +\widehat{\mu}^{(N)}_\ell(x_N), 
\ \ \widehat{\mu}_R(X_1,\dots, x_N)=\widehat{\mu}^{(1)}_r(X_1)+\dots +\widehat{\mu}^{(N)}_r(X_N)
\]
where $\widehat{\mu}^{(i)}_{\ell,r}$ are left and right moment maps for the $i$-th factor.

For the gauge action of $G^N$ on $G^N$ by gauge transformations we have:
\[
\widehat{\mu}(X_1,\dots, X_N)=\widehat{\mu}_L(X_1,X_2,\dots, X_{N})+\widehat{\mu}_R(X_2, X_3\dots, X_N, X_1)
\]
These Lie algebra homomorphisms extend uniquely to algebra homomorphisms
$\widehat{\mu},\widehat{\mu}_L,\widehat{\mu}_R: U(\g)^{\otimes N}\to \Diff(G^N)$

Lemma \ref{z} implies that
\[
\widehat{\mu}_R(z_1\otimes \dots \otimes z_N)=\widehat{\mu}_L(S(z_1)\otimes \dots \otimes S(z_N))
\]
Because of this the natural algebra homomorphism
\[
\widehat{\mu}_L\otimes \widehat{\mu}_R: U(\g)^{\otimes N}\otimes U(\g)^{\otimes N}\to \Diff(G^N)
\]
filters through the algebra homomorphism
\[
\widehat{\mu}_L\otimes \widehat{\mu}_R: U(\g)^{\otimes N}\widetilde{\otimes}_{Z(\g)^{\otimes N}} U(\g)^{\otimes N}\to \Diff(G^N)
\]
Here  $U(\g)^{\otimes N}\widetilde{\otimes}_{Z(\g)^{\otimes N}} U(\g)^{\otimes N}$ is the tensor product over $Z(\g)^{\otimes N}$ where $(a_1z_1\otimes \dots \otimes a_Nz_N) \otimes_{Z(\g)^{\otimes N}} (b_1\otimes \dots \otimes b_N)=
(a_1\otimes \dots \otimes a_N) \otimes_{Z(\g)^{\otimes N}}(S(z_1)b_1\otimes \dots \otimes S(z_N)b_N)$.

The mapping $\phi_i$ induces a linear isomorphism $\End(\CH_\mu)\to \End(\CH^H_\mu)$ for
which we will use the same name $\phi_i$. 
Let us first compute how $\widehat{\mu}_L(U(\g))$ and $\widehat{\mu}_R(U(\g))$
act on $\CH^H_\mu$ and then the action of Hamiltonians $H^{(j)}_{d_k}$. 

\subsection{Quantum moment maps after gauge fixing}
Let $X\in \g\subset U(\g)$ and 
$X_j=1^{\otimes (j-1)}\otimes X \otimes 1^{\otimes (N-j)}$.
The following is the results of an elementary computation:
\[
\widehat{\mu_R}(X_j)f(g)=-\widehat{\mu_L}(Ad_{g_j}X_j)f(g), \ \   .
\]
\[
\widehat{\mu_R}(X_{j+1})+\widehat{\mu_L}(X_j)=\pi_j^{\mu_j}(X)f(g)
\]
for functions in $\CR_\mu$.
Applying gauge fixing at $i$ to these identities we obtain the left action of $X^{(j)}$ on $\CH^H_\mu$:
\begin{equation}\label{E1}
\phi_i(-\widehat{\mu_L}(X_{j+1})+\widehat{\mu_L}(X_{j}))f(h)=\pi_j^{\mu_j}(X) f(h), \ \ j\neq i
\end{equation}
\begin{equation}\label{E2}
\phi_i(-\widehat{\mu_L}(Ad_h(X_i))+\widehat{\mu_L}(X_{i-1}))f(h)=\pi_{i-1}^{\mu_{i-1}}(X) f(h), 
\end{equation}
where $f\in \CH^H_\mu)$.

Choose a basis $e_\alpha$ in each root subspace $\g_\alpha\subset \g$. 
From the equations above we can express the action of  
$e_{\alpha,j}$ in terms of $\pi_j^{\mu_j}(X)$and $h\in H$:
\[
\phi_i(\widehat{\mu_L})(e_{\alpha,j})=\frac{1}{1-h_\alpha}S^{j}_\alpha, 
\]
where 
\begin{equation}\label{E-i}
S^{i}_\alpha=\sum_{j=1}^N \pi_j^{\mu_j}(e_\alpha)
\end{equation}
\begin{equation}\label{E-j}
S^{j}_\alpha=h_\alpha(\pi_{i}^{\mu_{i}}(e_\alpha)+\dots+\pi_{j-1}^{\mu_{j-1}}(e_\alpha))+
\pi_{j}^{\mu_{j}}(e_\alpha)+\dots+\pi_{i-1}^{\mu_{i-1}}(e_\alpha), \ \ j\neq i
\end{equation}
Here $h_\alpha$ is the coordinate function on $H$ corresponding to the root $\alpha$,
$he_\alpha h^{-1}=h_\alpha e_\alpha$ and the summation is cyclic. 

For the left action of the Cartan subalgebra we have
\begin{equation}\label{l-act-cart}
\phi_i(\widehat{\mu}_L(H_{\alpha,j})-\widehat{\mu}_L(H_{\alpha,j+1}))f(g)=\pi_j(h)f(g)
\end{equation}
Here $j+1$ is defined cyclically $\mod N$.

After gauge fixing we have
\[
\phi_i(\widehat{\mu}_L(H_{\alpha,i})f(1,\dots, h,\dots, 1)=-i\frac{\pa}{\pa q_\alpha}f(1,\dots, h,\dots, 1)
\]
The action of $\phi_i(\widehat{\mu}_L(H_{\alpha,j})$ with $j\neq i$ can be computed from this formula
and from equations (\ref{E1})(\ref{E2}):
\[
\phi_i(\widehat{\mu}_L(H_{\alpha,j})=-\pi_{i}^{\mu_i}(H_\alpha)- \pi_{i+1}^{\mu_{i+1}}(H_\alpha)-\dots -\pi_{j}^{\mu_j}(H_\alpha)-i\frac{\pa}{\pa q_\alpha}
\]
Here the sum is cyclic.

\section{Quadratic Hamiltonian for $N=1$}\label{H1}

For$N=1$ the gauge invariance of functions 
from the space $\R_\mu$ is $\psi(hgh^{-1})=\pi(h)\psi(g)$.
The transformation property of functions from $\R_\mu$ imply:
\[
(\widehat{\mu}_\ell(X)+\widehat{\mu}_r(X))f(g)=\pi(X)f(g)
\]
Combining this with (\ref{LR-relation}) , for functions $f$ from $\R_\mu$ we have:
\[
(\widehat{\mu}_\ell(X)-\widehat{\mu}_r(Ad_g(X)))f(g)=\pi(g)f(g)
\]

Now let us compute $\widehat{\mu}_\ell(Y)\widehat{\mu}_\ell(X)f(g)$. From the definition (\ref{non-mon}) we have:
\[
\widehat{\mu}_\ell(Y)(\widehat{\mu}_\ell(X)-\widehat{\mu}_\ell(Ad_g(X)))f(g)=\pi(X)\widehat{\mu}_\ell(Y)f(g)
\]
similarly
\[
-\widehat{\mu}_\ell(Ad_g(Y))(\widehat{\mu}_\ell(X)-\widehat{\mu}_\ell(Ad_g(X)))f(g)=-\pi(X)\widehat{\mu}_\ell(Ad_g(Y)f(g)
\]
Combining these identities we have:
\begin{theorem} 
\begin{eqnarray}
\pi(X)\pi(Y)f(g)=(\widehat{\mu}_\ell(Y)-\widehat{\mu}_\ell(\widetilde{Z}))(\widehat{\mu}_\ell(X)-\widehat{\mu}_\ell(Z))\psi|_{Z=Ad_g(X), \widetilde{Z}=Ad_g(Y)}+\\
\widehat{\mu}_\ell([Y, Ad_g(X)])\psi-\widehat{\mu}_\ell(Ad_g([Y,X]))f(g)
\end{eqnarray}
\end{theorem}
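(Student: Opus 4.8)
The plan is to iterate the first-order identity
\[
(\widehat{\mu}_\ell(X)-\widehat{\mu}_\ell(Ad_g(X)))f(g)=\pi(X)f(g),
\]
already obtained from gauge equivariance together with (\ref{LR-relation}), and then to carefully separate the genuine operator composition from the ``frozen'' composition in which the adjoint arguments $Ad_g(X)$ and $Ad_g(Y)$ are held constant until the very end. The key structural observation I would exploit is that in the displayed identity the symbol $\widehat{\mu}_\ell(Ad_g(X))$ stands for the honest differential operator $-\widehat{\mu}_r(X)$, so that the combination $\widehat{\mu}_\ell(\cdot)-\widehat{\mu}_\ell(Ad_g(\cdot))$ is nothing but the quantum moment map $\widehat{\mu}(\cdot)=\widehat{\mu}_\ell(\cdot)+\widehat{\mu}_r(\cdot)$ of the conjugation action for $N=1$.

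First I would apply the outer operator $\widehat{\mu}_\ell(Y)$ to the identity above and use that the constant endomorphism $\pi(X)$ commutes with every differential operator on $G$; this yields the first intermediate identity. Applying instead $-\widehat{\mu}_\ell(Ad_g(Y))=\widehat{\mu}_r(Y)$ yields the second. Adding the two and feeding the right-hand side back through the first-order identity applied to $Y$ produces the \emph{honest} composition identity
\[
(\widehat{\mu}_\ell(Y)-\widehat{\mu}_\ell(Ad_g(Y)))(\widehat{\mu}_\ell(X)-\widehat{\mu}_\ell(Ad_g(X)))f(g)=\pi(X)\pi(Y)f(g),
\]
which at this stage is essentially the tautology $\widehat{\mu}(Y)\widehat{\mu}(X)f=\pi(X)\pi(Y)f$.

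The heart of the argument is then to rewrite the left-hand side as the frozen composition $(\widehat{\mu}_\ell(Y)-\widehat{\mu}_\ell(\widetilde Z))(\widehat{\mu}_\ell(X)-\widehat{\mu}_\ell(Z))f|_{Z=Ad_g(X),\widetilde Z=Ad_g(Y)}$ plus corrections. The only discrepancy between the two readings comes from the outer operator differentiating the $g$-dependence of the inner argument $Z=Ad_g(X)$, which the frozen prescription suppresses. Reducing every operator to the explicit iterated derivatives of (\ref{non-mon}) and applying Leibniz, the $\widehat{\mu}_\ell(Y)$-part of the outer operator contributes $\tfrac{d}{ds}Ad_{e^{-sY}g}(X)|_{s=0}=-[Y,Ad_g(X)]$, while the $\widehat{\mu}_r(Y)$-part contributes $\tfrac{d}{ds}Ad_{ge^{sY}}(X)|_{s=0}=Ad_g([Y,X])$. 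Since the inner bracket is linear in $Z$ with $Z$-derivative $-\widehat{\mu}_\ell(\cdot)f$, these two chain-rule terms produce exactly $+\widehat{\mu}_\ell([Y,Ad_g(X)])f$ and $-\widehat{\mu}_\ell(Ad_g([Y,X]))f$. No correction arises from $\widetilde Z=Ad_g(Y)$, since it occupies the outermost factor and is never differentiated. Collecting everything gives the stated formula.

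The hard part will be precisely the bookkeeping in this last step: the symbol $\widehat{\mu}_\ell(Ad_g(X))$ admits two legitimate readings — the fixed operator $-\widehat{\mu}_r(X)$ and the freeze-then-substitute operator $\widehat{\mu}_\ell(Z)|_{Z=Ad_g(X)}$ — which agree to first order on $f$ but diverge under composition, and one must keep track of which reading is in force at each step and verify that their difference is captured exactly by the two commutator terms with the correct signs. Handling this cleanly by reducing all moment maps to the explicit one-parameter derivatives of (\ref{non-mon}), so that the $s$- and $t$-differentiations and the $Z$-substitution are performed in an unambiguous order, is what makes the signs of the two correction terms unambiguous.
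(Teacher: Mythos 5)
Your proposal is correct and follows essentially the same route as the paper: apply $\widehat{\mu}_\ell(Y)$ and $-\widehat{\mu}_\ell(Ad_g(Y))$ to the first-order identity $(\widehat{\mu}_\ell(X)-\widehat{\mu}_\ell(Ad_g(X)))f=\pi(X)f$, add, and then account for the discrepancy between the honest and the ``frozen'' composition by the two chain-rule commutator terms. The only difference is one of exposition --- the paper compresses the last bookkeeping step into ``combining these identities,'' whereas you make the $Z$-substitution and the signs of $+\widehat{\mu}_\ell([Y,Ad_g(X)])$ and $-\widehat{\mu}_\ell(Ad_g([Y,X]))$ explicit, which is a welcome clarification rather than a departure.
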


 From here, setting $X=e_\alpha, Y=e_{-\alpha}$ and specializing $g$ to $h\in H$ we obtain:
 \[
 \pi(e_\alpha)\pi(e_{-\alpha})f(h)=(1-h_\alpha)(1-h_{-\alpha})\widehat{\mu}_\ell(e_{-\alpha})\widehat{\mu}_\ell(e_\alpha)f(h)+(1-h_\alpha) \widehat{\mu}_\ell(H_\alpha)f(h)
 \]
or
\[
\widehat{\mu}_\ell(e_{-\alpha})\widehat{\mu}_\ell(e_\alpha)f(h)=\frac{\pi(e_\alpha)\pi(e_{-\alpha})}{(1-h_\alpha)(1-h_{-\alpha})}f(h)+\frac{\widehat{\mu}_\ell(H_\alpha)}{(1-h_\alpha)}f(h)
\]
Setting $h=e^\lambda$ we can write $\widehat{\mu}_\ell(H_\alpha)=(\alpha, \frac{\pa}{\pa \lambda})$ and for the second Casimir we obtain
\[
\widehat{\mu}_\ell(c_2)f(h)=(\Delta +2\sum_{\alpha>0}\frac{\pi(e_\alpha)\pi(e_{-\alpha})}{(1-h_\alpha)(1-h_{-\alpha})}-D)f(h)
\]
where $D=\sum_{\alpha>0} \frac{1+h_\alpha}{1-h_\alpha}(\alpha, \frac{\pa}{\pa \lambda})$.

\section{Quantum quadratic Hamiltonians for $N>1$.}\label{H}

Here we will compute quadratic Hamiltonians from the action of $U(\g)^{\otimes N}$ on trace functions by left and right quantum moment map. Since trace functions form a basis in the space $\R_\mu$ this is equivalent to computing quantum Hamiltonians directly from the action on functions from $\R_\mu$ as it was done above for $N=1$, but it is easier.
 
\subsection{Dynamical $r$-matrices} \label{dyn-r} For $\lambda\in \hfr$ define $\xi_\alpha=e^{\frac{(\lambda, \alpha)}{2}}$,
$h_\alpha=\xi_\alpha^2$.
The dynamical $r$-matrix \cite{Felder} is the following $\g\otimes \g$-valued function on $\hfr$
\[
r(\lambda)=-\frac{1}{2}\sum_{i=1}^r h_i\otimes h_i-\sum_\alpha \frac{e_{-\alpha}\otimes e_\alpha}{1-h_{-\alpha}}
\]
or
\[
r(\lambda)=-\frac{1}{2}\sum_{i=1}^r h_i\otimes h_i+\sum_{\alpha>0}( \frac{e_{\alpha}\otimes f_\alpha}{h_{\alpha}-1}-\frac{h_\alpha f_{\alpha}\otimes e_\alpha}{h_{\alpha}-1}
\]

Here are some basic properties of $r(\lambda)$:

\begin{itemize}

\item $r(-\lambda)=r^{21}(\lambda)$,
\item $r(\lambda)+r(-\lambda)=-\Omega$,
\item $(e^{-\lambda})_2r_{12}(\lambda)(e^\lambda)_2=-\sum_{i=1}^r h_i\otimes h_i-r_{21}(\lambda)$,
\item $D_1D_2r(\lambda)D_1^{-1}D_2^{-1}=r(\lambda)$
\item The dynamical Yang-Baxter equation.

\end{itemize}

\subsection{Quantum topological Knizhnik-Zamolodchikov-Bernard equation}

All the computations from this sections can be done from the definition of
the space $\R_\mu$. However, it is more convenient to do them using the basis
of trace functions after the gauge fixing for $i=N$.

Choose a basis $e_\alpha$ is root subspaces of $\g$ and orthonormal 
(with respect to the Killing form) basis in $\hfr$. Let $c_2$ be the quadratic Casimir for $\g$:
\[
c_2=\sum_{i=1}^r h_i^2+\sum_{\alpha>0}(e_\alpha f_\alpha+f_\alpha e_\alpha)
\]
where $f_\alpha=e_{-\alpha}$. If $h_{\alpha_i}$ is a basis in $\hfr$ corresponding to simple roots,
$[e_\alpha, f_\alpha]=H_\alpha$, the Cartan part of the Casimir element can be written as
\[
\sum_{i=1}^r h_i^2=\sum_{i,j=1}^r H_{\alpha_i}(B^{-1})_{ij}H_{\alpha_j}
\]
where $B$ is the symmetrized Cartan matrix.

Denote by $\Omega$ the "mixed Casimir":
\[
\Omega=\sum_{i=1}^r h_i\otimes h_i+\sum_{\alpha>0}(e_\alpha \otimes f_\alpha+f_\alpha\otimes e_\alpha)
\]
If $\Delta$ is the compultiplication for $U(\g)$, we have
\[
\Delta c_2=c_2\otimes 1 +1\otimes c_2+2\Omega
\]

For $b\in Hom_\g(M_\nu, M_\mu\otimes V_\eta)$ where $M_\mu, M_\nu$ are Verma modules and $V_\lambda$ is an irreducible finite dimensional module with the highest weight $\eta$. We have
\[
bc_2(\nu)=(c_2(\mu)+c_2(\eta)+2\Omega)b
\]
This identity can be written in terms of dynamical $r$-matrices as
\begin{equation}\label{intertw-id}
\frac{c_2(\nu)-c_2(\mu)-c_2(\eta)}{2}b=(r_{12}(\lambda)+r_{21}(\lambda))b
\end{equation}
Here $\lambda\in \hfr$.

For $b_i\in \textup{Hom}_G(V_{\nu_i},V_{\nu_{i-1}}\otimes V_{\mu_i})$, with
$0=N$, consider the  trace function (\ref{tr-fncn}) evaluated at $g=(1,\dots, 1, e^\lambda)$,
i.e. for the gauge fixing with $i=N$:
\begin{equation}\label{N-t-fncn}
\Psi^{(\nu)}_{b,\mu}(\lambda)=Tr_a (b_1\dots b_N (e^\lambda)_a)
\end{equation}
Here we denote representation $V_{\mu_N}$ by index $a$ (auxiliary space)
and take trace over this space.

Denote
\[
d(\lambda)=\frac{1}{2}\sum_{\alpha>0}\frac{\xi_\alpha+\xi_{-\alpha}}{\xi_\alpha-\xi_{-\alpha}}H_\alpha
\]
\begin{theorem}
Trace function $F$ satisfies differential equations
\begin{equation}\label{kzb}
((h^{(i)}, \frac{\pa}{\pa\lambda})-\sum_{k=1}^{i-1}r_{ki}(\lambda)+\sum_{k=i+1}^n r_{ik}(\lambda)-d(\lambda)_i)\Psi^{(\nu)}_{b,\mu}(\lambda)=\frac{c(\nu_{i+1})-c(\nu_i)}{2}\Psi^{(\nu)}_{b,\mu}(\lambda)
\end{equation}
\end{theorem}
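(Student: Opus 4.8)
The plan is to derive (\ref{kzb}) directly from the definition (\ref{N-t-fncn}) of the trace function, using only two inputs: the intertwiner identity (\ref{intertw-id}) written in dynamical $r$-matrix form, and the weight-transport of the $b_k$ across the cyclic trace. Recall that each $b_k\in\textup{Hom}_G(V_{\nu_k},V_{\nu_{k-1}}\otimes V_{\mu_k})$ intertwines the $U\g$-action, so for $x\in U\g$ we have $b_k\,\pi^{\nu_k}(x)=(\pi^{\nu_{k-1}}\otimes\pi^{\mu_k})(\Delta x)\,b_k$. This is the tool that lets me move a Lie algebra element off the internal line $V_{\nu_k}$ onto the internal line $V_{\nu_{k-1}}$ together with the external leg $V_{\mu_k}$, one crossing at a time.

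First I would handle the differential term. Differentiating $\Psi^{(\nu)}_{b,\mu}(\lambda)=\textup{Tr}_a(b_1\cdots b_N(e^\lambda)_a)$ along $\lambda$ inserts a Cartan factor into the auxiliary (trace) line carrying $(e^\lambda)_a$, so that $(h^{(i)},\pa_\lambda)\Psi^{(\nu)}_{b,\mu}$ is the trace with the mixed Cartan tensor $\sum_s\pi^{\mu_i}(h_s)\otimes\pi^{a}(h_s)$ inserted, one factor on leg $i$ and the other on the internal line. This is the Cartan (diagonal) coupling that, in $r(\lambda)=-\tfrac12\sum_i h_i\otimes h_i-\sum_\alpha\tfrac{e_{-\alpha}\otimes e_\alpha}{1-h_{-\alpha}}$, joins leg $i$ to its neighbours; here it is attached to the internal line instead. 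The core of the argument is to trade this internal insertion for insertions on the external legs.

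The conversion is powered by the intertwiner identity. Applying (\ref{intertw-id}) to $b_{i+1}\in\textup{Hom}_G(V_{\nu_{i+1}},V_{\nu_i}\otimes V_{\mu_{i+1}})$ produces on the right of (\ref{kzb}) the scalar $\tfrac12\bigl(c(\nu_{i+1})-c(\nu_i)\bigr)$, at the cost of inserting $r_{12}(\lambda)+r_{21}(\lambda)$ across the pair $(V_{\nu_i},V_{\mu_{i+1}})$, i.e. across one internal line and the adjacent external leg. I would then transport the factors living on the internal line around the cyclic trace, pushing the root vectors $e_\alpha$ through the intertwiners $b_{i+2},b_{i+3},\dots$ in one direction and $b_i,b_{i-1},\dots$ in the other. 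Each time such a factor is carried past the holonomy $(e^\lambda)_a$ it is rescaled by $h_\alpha=e^{(\alpha,\lambda)}$, so cyclicity of the trace yields a resolvent equation of the form $X=h_\alpha X+(\text{deposits on the legs})$, whose solution supplies the coefficient $\tfrac{1}{1-h_\alpha}$ — exactly the weighting recorded in the moment-map formula (\ref{E-j}) of Appendix \ref{M}. The two transport directions, together with the cyclic position of leg $i$, assemble into the asymmetric sum $-\sum_{k<i}r_{ki}(\lambda)+\sum_{k>i}r_{ik}(\lambda)$, the split between $r$ and $r_{21}$ being reconciled by the symmetries $r(-\lambda)=r^{21}(\lambda)$ and $r(\lambda)+r(-\lambda)=-\Omega$ from Subsection \ref{dyn-r}. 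I expect the genuine obstacle to lie precisely in this bookkeeping: controlling the order of factors as they cycle through the trace, keeping the two transport directions and their signs straight, and isolating the self-contraction at leg $i$.

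That self-contraction is the origin of the term $-d(\lambda)_i$: when a transported $e_\alpha$ returns through the same leg $i$ and contracts with an $e_{-\alpha}$, the commutator $[e_\alpha,e_{-\alpha}]=H_\alpha$ leaves a Cartan element on leg $i$ weighted by $\tfrac{1}{1-h_{-\alpha}}$; since $\tfrac{1}{1-h_{-\alpha}}=\tfrac12+\tfrac12\tfrac{\xi_\alpha+\xi_{-\alpha}}{\xi_\alpha-\xi_{-\alpha}}$, summing over positive roots produces $\tfrac12\sum_{\alpha>0}\tfrac{\xi_\alpha+\xi_{-\alpha}}{\xi_\alpha-\xi_{-\alpha}}\pi^{\mu_i}(H_\alpha)=d(\lambda)_i$. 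Collecting the differential term, the propagated $r$-matrices, and this diagonal remainder, and equating the internal-line expression with the external-leg expression, yields (\ref{kzb}). As a consistency check that fixes the overall normalization and the factor $\tfrac12$, I would note that $\Psi^{(\nu)}_{b,\mu}\in\R^{(\nu)}_\mu$ is a joint eigenfunction of the quadratic Hamiltonians with $H^{(j)}\Psi^{(\nu)}_{b,\mu}=c_2(\nu_j)\Psi^{(\nu)}_{b,\mu}$ by the central-character argument of the earlier subsections, so that $H^{(i+1)}-H^{(i)}$ must act on $\Psi^{(\nu)}_{b,\mu}$ by the scalar $c(\nu_{i+1})-c(\nu_i)$, in agreement with the eigenvalue appearing on the right of (\ref{kzb}).
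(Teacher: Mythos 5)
Your proposal is correct and follows essentially the same route as the paper's proof in Appendix \ref{H}: apply the intertwiner identity (\ref{intertw-id}) at one site to produce the Casimir difference, transport the resulting auxiliary-line $r$-matrix insertions through the $b_k$'s in both directions around the cyclic trace, use cyclicity together with the conjugation property $(e^{-\lambda})_2r_{12}(\lambda)(e^{\lambda})_2=-\sum_k h_k\otimes h_k-r_{21}(\lambda)$ to generate the derivative term $(h^{(i)},\frac{\pa}{\pa\lambda})$, and absorb the self-contraction at the distinguished leg via $m(r(\lambda))=-\tfrac12 c_2+d(\lambda)$. The only divergence is a harmless relabeling (you apply the identity to $b_{i+1}$, depositing the $r$-matrices on leg $i+1$, where the paper works at the $i$-th factor), which shifts the leg index by one but yields the same family of equations.
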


\begin{proof}
Let us apply the identity (\ref{intertw-id}) to the i-th factor in the definition of the trace
function. We have:
\[
\frac{c_2(\nu_i)+c_2(\mu_i)-c_2(\nu_{i+1})}{2}\Psi^{(\nu)}_{b,\mu}(\lambda)=Tr_a(b_1\dots b_{i-1}(r_{ia}(\lambda)+r_{ai}(\lambda))b_i\dots b_N (e^\lambda)_a)
\]

From the intertwining property of $b_i$ we have:
\[
b_1\dots b_{i-1}r_{ai}(\lambda)=(r_{ai}(\lambda)+r_{1i}(\lambda) + \dots +r_{i-1,i}(\lambda)b_1\dots b_{i-1}
\]

Now let us use the cyclicity of the trace:
\begin{eqnarray}
Tr_a(r_{ai}(\lambda)b_1\dots b_N (e^\lambda)_a)=Tr_a (b_1\dots b_N (e^\lambda)_ar_{ai}(\lambda)(e^{-\lambda})_a(e^\lambda)_a)= \\ \nonumber Tr_a (b_1\dots b_N (-\sum_kh^{(i)}_kh^{(a)}_k- r_{ia}(\lambda))(e^\lambda)_a)
=\\    \nonumber
-(h^{(i)}, \frac{\pa}{\pa \lambda}) Tr_a (b_1\dots b_N(e^\lambda)_a)-Tr_a (b_1\dots b_N r_{ia}(\lambda)(e^\lambda)_a))
\end{eqnarray}
Here we used properties of the dynamical $r$-matrix listed in section \ref{dyn-r}.

Now let us use the intertwining properties of $b_j$'s:
\[
b_{i+1}\dots b_N r_{ia}(\lambda)=(r_{i,i+1}(\lambda)+\dots + r_{iN}(\lambda)+r_{ia}(\lambda))b_{i+1}\dots b_N
\]
and
\[
b_ir_{ia}(\lambda)=(r_{ia}(\lambda)+m(r(\lambda))_i)b_i
\]
Here $m(a\otimes b)=ab$ for $a,b\in U(\g)$. 
\[
m(r(\lambda))=-\frac{1}{2} \sum_i h_i^2-\sum_\alpha \frac{e_{-\alpha}e_\alpha}{1-h_{-\alpha}}=-\frac{1}{2}c_2+d(\lambda)
\]
where $d(\lambda)$ is as above.

Thus,
\[
b_ib_{i+1}\dots b_N r_{ia}(\lambda)=(r_{ia}(\lambda)-\frac{c_2(\mu_i)}{2}+d(\lambda)_i+r_{i,i+1}(\lambda)+\dots + r_{iN}(\lambda))b_i\dots b_N
\]
Combining all identities we obtain (\ref{kzb}).

\end{proof}

\subsection{The quantum spin Calogero-Moser Hamiltonian}

For $b\in Hom_\g(M_\nu, M_\nu\otimes V)$, where $V$ is a finite dimensional representation of $\g$ and $M_\nu$ is the Verma module with the highest weight $\nu$, we define the trace function
\[
\Phi_{b,V}^{(\nu)}(\lambda)=Tr_a(b(e^\lambda)_a)
\]
Here the trace is takes over the space $M_\nu$ which we denote by index $a$ (auxiliary). For our purposes it is enough to consider this trace function as a Laurent power series in $e^\lambda$. 

Let $\Delta$ be the Laplacian on $\hfr$:
\[
\Delta=(\frac{\pa}{\pa \lambda}, \frac{\pa}{\pa \lambda})
\]
where $(.,.)$ is the bilinear form on $\hfr^*$ which is dual to the Killing from on $\hfr$.
We have:
\[
\sum_i h^2_i e^\lambda=\Delta e^\lambda
\]
We halso have the identity
\[
c_2(\nu) \Psi_{b,V}^{(\nu)}(\lambda)=Tr_a(b(c_2)_a(e^\lambda)_a)
\]

\begin{lemma} Following identities hold:
\begin{equation}\label{1}
Tr_{M_\nu}(be_\alpha e^\lambda)=\frac{\pi^V(e_\alpha)}{1-h_{\alpha}}\Psi_{b,V}^{(\nu)}(\lambda)
\end{equation}

\begin{equation}\label{2}
Tr_{M_\nu}(bf_\alpha e^\lambda)=\frac{\pi^V(f_\alpha)}{1-h_{-\alpha}}\Psi_{b,V}^{(\nu)}(\lambda)
\end{equation}
\end{lemma}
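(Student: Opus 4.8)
The plan is to prove both identities \eqref{1} and \eqref{2} by the same mechanism that was used to derive the quantum topological Knizhnik--Zamolodchikov--Bernard equation, namely the interplay between cyclicity of the trace, the intertwining property of $b$, and the action of Cartan elements as differentiations in $\lambda$. Since $V$ is finite dimensional while $M_\nu$ is a Verma module, all traces should be understood as Laurent power series in $e^\lambda$, as stated just before the lemma, so convergence questions are formal and the manipulations are justified coefficient by coefficient.

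First I would treat \eqref{1}. The key observation is that $b \in \textup{Hom}_\g(M_\nu, M_\nu \otimes V)$ intertwines the $\g$-action, so $b\, e_\alpha = (\Delta e_\alpha)\, b = (e_\alpha \otimes 1 + h_\alpha \otimes \pi^V(e_\alpha))\, b$ on the relevant weight spaces, where I use the coproduct structure already recorded in the excerpt. Here the weight shift of $e_\alpha$ produces exactly the factor $h_\alpha = e^{(\alpha,\lambda)}$ when the root-space generator is slid past $(e^\lambda)_a$. Concretely, I would insert $e_\alpha$ inside the trace $\textup{Tr}_{M_\nu}(b\, e_\alpha\, e^\lambda)$, use the intertwining relation to move $e_\alpha$ to the auxiliary factor, then apply cyclicity of the trace together with the conjugation identity $(e^\lambda)_a\, e_\alpha\, (e^{-\lambda})_a = h_\alpha\, e_\alpha$ (the same conjugation step used in the KZB proof) to produce a linear equation in which $\textup{Tr}_{M_\nu}(b\, e_\alpha\, e^\lambda)$ appears on both sides. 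Collecting the two occurrences yields the scalar coefficient $(1-h_\alpha)^{-1}$ multiplying $\pi^V(e_\alpha)\,\Psi^{(\nu)}_{b,V}(\lambda)$, which is precisely \eqref{1}.

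The identity \eqref{2} follows by the identical argument with $e_\alpha$ replaced by $f_\alpha = e_{-\alpha}$; the only change is that conjugation by $(e^\lambda)_a$ now produces $h_{-\alpha}$ rather than $h_\alpha$, since $f_\alpha$ has weight $-\alpha$, giving the denominator $1 - h_{-\alpha}$. I would therefore state \eqref{2} as a corollary of the computation for \eqref{1}, noting only the sign change in the exponent.

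The main obstacle I anticipate is bookkeeping rather than conceptual: one must be careful about the placement of $b$ relative to $e_\alpha$ inside the trace and about whether the intertwining relation pushes $e_\alpha$ past $b$ or the conjugation by $e^\lambda$ is applied first, since these determine whether the surviving scalar is $h_\alpha$ or $h_\alpha^{-1}$ and hence whether the denominator comes out as $1-h_\alpha$ or $1-h_{-\alpha}$. A secondary subtlety is the legitimacy of cyclicity of the trace over the infinite-dimensional module $M_\nu$; I would handle this by working grade-by-grade in the weight decomposition of $M_\nu$, where each graded piece is finite dimensional and $e_\alpha$, $f_\alpha$ shift the grading so that only finitely many terms contribute to each Laurent coefficient of $e^\lambda$, making every trace manipulation rigorous at the level of formal power series.
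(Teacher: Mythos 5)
Your proposal is correct and follows essentially the same route as the paper: apply the intertwining property of $b$ to split off the $\pi^V(e_\alpha)\Psi^{(\nu)}_{b,V}(\lambda)$ term, then use cyclicity of the trace and the conjugation $e^\lambda e_{\pm\alpha}e^{-\lambda}=h_{\pm\alpha}e_{\pm\alpha}$ to recover $h_{\pm\alpha}$ times the original trace, and solve the resulting linear equation. The only cosmetic issue is that your displayed relation $b\,e_\alpha=(e_\alpha\otimes 1+h_\alpha\otimes\pi^V(e_\alpha))b$ conflates the classical coproduct $e_\alpha\otimes 1+1\otimes e_\alpha$ with the $h_\alpha$ that only appears after the cyclicity/conjugation step, but your prose makes clear you intend the correct mechanism.
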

Indeed, the first identity follows from the intertwining property of $b$ and from the cyclicity of the trace:
\[
Tr(be_\alpha e^\lambda)=\pi^V(e_\alpha)\Psi_{b,V}^{(\nu)}(\lambda)+Tr(be^\lambda e_\alpha)=
\pi^V(e_\alpha)\Psi_{b,V}^{(\nu)}(\lambda)+h_\alpha Tr(b e_\alpha e^\lambda)
\]
The proof of (\ref{2}) is similar.

\begin{proposition}
\begin{equation}\label{a}
Tr_{M_\nu}(be_\alpha f_\alpha e^\lambda)=\frac{\pi^V(e_\alpha f_\alpha)}{(1-h_\alpha)(1-h_{-\alpha})}\Psi_{b,V}^{(\nu)}(\lambda)-\frac{h_\alpha}{1-h_\alpha}(\alpha, \frac{\pa}{\pa \lambda} \Psi_{b,V}^{(\nu)}(\lambda)
\end{equation}
\begin{equation}\label{b}
Tr_{M_\nu}(bf_\alpha e_\alpha e^\lambda)=\frac{\pi^V(f_\alpha e_\alpha)}{(1-h_\alpha)(1-h_{-\alpha})}\Psi_{b,V}^{(\nu)}(\lambda)-\frac{1}{1-h_\alpha}(\alpha, \frac{\pa}{\pa \lambda} \Psi_{b,V}^{(\nu)}(\lambda)
\end{equation}
\end{proposition}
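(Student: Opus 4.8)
The plan is to reduce both identities to the single--root--vector computations of the preceding Lemma by peeling off the root generator adjacent to $b$, and then to use cyclicity of the trace together with the conjugation rule $e^\lambda e_{\pm\alpha}e^{-\lambda}=h_{\pm\alpha}e_{\pm\alpha}$. Concretely, set $T=\textup{Tr}_{M_\nu}(be_\alpha f_\alpha e^\lambda)$ and $T'=\textup{Tr}_{M_\nu}(bf_\alpha e_\alpha e^\lambda)$. I would first apply the intertwining property of $b$ in the form $be_\alpha=(e_\alpha\otimes 1+1\otimes\pi^V(e_\alpha))b$: the $1\otimes\pi^V(e_\alpha)$ piece pulls $\pi^V(e_\alpha)$ out of the trace and leaves $\textup{Tr}_{M_\nu}(bf_\alpha e^\lambda)$, which is evaluated by identity (\ref{2}); the $e_\alpha\otimes 1$ piece is moved by cyclicity to the right of $e^\lambda$, where $e^\lambda e_\alpha=h_\alpha e_\alpha e^\lambda$ converts it into $h_\alpha\textup{Tr}_{M_\nu}(bf_\alpha e_\alpha e^\lambda)=h_\alpha T'$. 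This yields the exchange relation
\[
T=h_\alpha T'+\frac{\pi^V(e_\alpha f_\alpha)}{1-h_{-\alpha}}\Psi_{b,V}^{(\nu)}(\lambda),
\]
and the symmetric manipulation (peeling off $f_\alpha$ and using (\ref{1})) gives $T'=h_{-\alpha}T+\frac{\pi^V(f_\alpha e_\alpha)}{1-h_\alpha}\Psi_{b,V}^{(\nu)}(\lambda)$.

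These two relations are not independent: since $h_\alpha h_{-\alpha}=1$ the associated linear system in $(T,T')$ is degenerate, so it pins down $T$ and $T'$ only up to a common one--parameter freedom and must be supplemented by one further equation. I would obtain this from the commutator $[e_\alpha,f_\alpha]=H_\alpha$, which gives $T-T'=\textup{Tr}_{M_\nu}(bH_\alpha e^\lambda)$. This last trace is computed directly, not by cyclicity: because $H_\alpha$ commutes with $e^\lambda$ and acts on a weight--$\beta$ vector by the scalar produced by differentiating $e^\lambda$, one finds $\textup{Tr}_{M_\nu}(bH_\alpha e^\lambda)=(\alpha,\frac{\pa}{\pa\lambda})\Psi_{b,V}^{(\nu)}(\lambda)$, exactly as in the $N=1$ discussion of Appendix \ref{H1}. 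Hence
\[
T-T'=(\alpha,\tfrac{\pa}{\pa\lambda})\Psi_{b,V}^{(\nu)}(\lambda).
\]

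It then remains to solve the first exchange relation together with this difference relation. Substituting $T'=T-(\alpha,\frac{\pa}{\pa\lambda})\Psi_{b,V}^{(\nu)}$ into $T=h_\alpha T'+\cdots$ and dividing by $1-h_\alpha$ produces (\ref{a}); feeding this back gives $T'$, where the derivative coefficient collapses via $-\frac{h_\alpha}{1-h_\alpha}-1=-\frac{1}{1-h_\alpha}$ to yield (\ref{b}). Throughout I would use the structural fact that $\Psi_{b,V}^{(\nu)}(\lambda)$ lies in the zero weight space $V[0]$, so that $\pi^V(H_\alpha)\Psi_{b,V}^{(\nu)}=0$; this is precisely what makes the two exchange relations mutually consistent and, at the same time, identifies $\pi^V(e_\alpha f_\alpha)\Psi_{b,V}^{(\nu)}=\pi^V(f_\alpha e_\alpha)\Psi_{b,V}^{(\nu)}$, reconciling the potential terms appearing in (\ref{a}) and (\ref{b}).

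The main obstacle is conceptual rather than computational: one must resist cycling the block $e_\alpha f_\alpha$ as a unit (which merely reproduces $T$, since $h_\alpha h_{-\alpha}=1$, and loses all information), and instead recognize that the two naturally arising exchange relations are degenerate, so the genuinely new content is the Cartan trace $\textup{Tr}_{M_\nu}(bH_\alpha e^\lambda)$. Its evaluation rests on differentiating $e^\lambda$ and on the vanishing of $\pi^V(H_\alpha)$ on the zero--weight vector $\Psi_{b,V}^{(\nu)}$.
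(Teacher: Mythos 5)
Your argument is correct and follows essentially the same route as the paper's proof: peel off the generator adjacent to $b$ via the intertwining property, use cyclicity of the trace together with $e^\lambda e_{\pm\alpha}e^{-\lambda}=h_{\pm\alpha}e_{\pm\alpha}$ and the Lemma to get the exchange relation $T=h_\alpha T'+\frac{\pi^V(e_\alpha f_\alpha)}{1-h_{-\alpha}}\Psi$, and then close the system with $T-T'=\textup{Tr}_{M_\nu}(bH_\alpha e^\lambda)=(\alpha,\frac{\pa}{\pa\lambda})\Psi$ coming from $[e_\alpha,f_\alpha]=H_\alpha$. Your additional remark that the two exchange relations are degenerate (consistent precisely because $\pi^V(H_\alpha)$ kills the zero-weight vector $\Psi$) is a correct observation the paper leaves implicit, but it does not change the proof.
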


\begin{proof}

\begin{eqnarray}
Tr_{M_\nu}(be_\alpha f_\alpha e^\lambda)=\pi^V(e_\alpha)Tr_{M_\nu}(b f_\alpha e^\lambda)+Tr_{M_\nu}(b f_\alpha e^\lambda e_\alpha)=\frac{\pi^V(f_\alpha)}{1-h_{-\alpha}}\Psi_{b,V}^{(\nu)}(\lambda)+\\
\nonumber
h_\alpha Tr_{M_\nu}(bf_\alpha e_\alpha e^\lambda)=
\frac{\pi^V(f_\alpha)}{1-h_{-\alpha}}\Psi_{b,V}^{(\mu)}(\lambda)+h_\alpha Tr_{M_\nu}(be_\alpha f_\alpha e^\lambda)-h_\alpha (\alpha, \frac{\pa}{\pa \lambda}\Psi_{b,V}^{(\mu)}(\lambda)
\end{eqnarray}

The proof of (\ref{b}) is similar.

\end{proof}

\begin{corollary}
\[
Tr_{M_\nu}(b(e_\alpha f_\alpha +f_\alpha e_\alpha) e^\lambda)=\frac{2\pi^V(f_\alpha e_\alpha)}{(1-h_\alpha)(1-h_{-\alpha})}\Psi_{b,V}^{(\nu)}(\lambda)-\frac{1+h_\alpha}{1-h_\alpha}(\alpha, \frac{\pa}{\pa \lambda} \Psi_{b,V}^{(\nu)}(\lambda)
\]
\end{corollary}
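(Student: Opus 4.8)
The plan is to read off the corollary directly from the two identities \eqref{a} and \eqref{b} of the preceding Proposition: by linearity of the trace the claimed left-hand side $Tr_{M_\nu}(b(e_\alpha f_\alpha+f_\alpha e_\alpha)e^\lambda)$ is exactly the sum of the two left-hand sides there, so no new trace manipulation is needed and the whole task is to simplify the sum of the two right-hand sides into the stated closed form.

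First I would add the two first-order terms: the coefficients $-\frac{h_\alpha}{1-h_\alpha}$ from \eqref{a} and $-\frac{1}{1-h_\alpha}$ from \eqref{b} combine to $-\frac{1+h_\alpha}{1-h_\alpha}$, which reproduces the derivative term $-\frac{1+h_\alpha}{1-h_\alpha}(\alpha,\frac{\pa}{\pa\lambda})\Psi_{b,V}^{(\nu)}(\lambda)$ verbatim. Next I would add the zeroth-order terms, obtaining $\frac{\pi^V(e_\alpha f_\alpha)+\pi^V(f_\alpha e_\alpha)}{(1-h_\alpha)(1-h_{-\alpha})}\Psi_{b,V}^{(\nu)}(\lambda)$, and here I would invoke the single relation $[e_\alpha,f_\alpha]=H_\alpha$, so that $e_\alpha f_\alpha+f_\alpha e_\alpha=2f_\alpha e_\alpha+H_\alpha$ and hence $\pi^V(e_\alpha f_\alpha)+\pi^V(f_\alpha e_\alpha)=2\pi^V(f_\alpha e_\alpha)+\pi^V(H_\alpha)$. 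The point that makes the extra summand vanish is that $\Psi_{b,V}^{(\nu)}(\lambda)$ takes values in the zero weight subspace $V[0]$: since $b$ is $\g$-equivariant it commutes with the Cartan and preserves weights, and tracing out $M_\nu$ pairs equal weights, so the surviving $V$-component necessarily has weight $0$. As $H_\alpha$ acts by the scalar $0$ on $V[0]$, one has $\pi^V(H_\alpha)\Psi_{b,V}^{(\nu)}(\lambda)=0$, and the numerator collapses to $2\pi^V(f_\alpha e_\alpha)$, giving the corollary.

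The only genuinely substantive step --- and hence the main, though mild, obstacle --- is this vanishing of the Cartan contribution $\pi^V(H_\alpha)$ on the trace function. I would make the argument airtight by recording the zero-weight property of $\Psi_{b,V}^{(\nu)}$ as a short preliminary observation (it is immediate from the weight-preservation of the intertwiner $b$ together with the cyclicity of the trace already used to derive \eqref{a} and \eqref{b}), and would then note that the replacement of $e_\alpha f_\alpha+f_\alpha e_\alpha$ by its symmetric representative $2f_\alpha e_\alpha$ is legitimate precisely because of it; everything else is the termwise addition of \eqref{a} and \eqref{b} and the elementary simplification of the two first-order coefficients.
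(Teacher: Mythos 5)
Your proof is correct and is exactly the paper's argument: the corollary is obtained by adding identities \eqref{a} and \eqref{b}, combining the first-order coefficients into $-\frac{1+h_\alpha}{1-h_\alpha}$, and using the zero-weight property of $\Psi_{b,V}^{(\nu)}$ to replace $\pi^V(e_\alpha f_\alpha)+\pi^V(f_\alpha e_\alpha)=2\pi^V(f_\alpha e_\alpha)+\pi^V(H_\alpha)$ by $2\pi^V(f_\alpha e_\alpha)$ --- precisely the point the paper records in the remark following the corollary. Your explicit justification of why the trace function lands in $V[0]$ is a welcome elaboration of what the paper leaves implicit.
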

Here we use that $\Psi_{b,V}^{(\mu)}(\lambda)
$ is an element of the zero weight subspace in $V$.

Thus, we proved the following theorem.
\begin{theorem}
\begin{equation}\label{eigen}
\Delta \Psi_{b,V}^{(\nu)}(\lambda)
+2\sum_{\alpha>0}\frac{\pi^V(f_\alpha e_\alpha)}{(1-h_\alpha)(1-h_{-\alpha})}\Psi_{b,V}^{(\nu)}(\lambda)-
\frac{1+h_\alpha}{1-h_\alpha}(\alpha, \frac{\pa}{\pa \lambda}) \Psi_{b,V}^{(\nu)}=c_2(\mu)\Psi_{b,V}^{(\mu)}
\end{equation}
\end{theorem}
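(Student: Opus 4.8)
The plan is to evaluate a single trace, namely $Tr_{M_\nu}(b\,c_2\,e^\lambda)$ with the quadratic Casimir $c_2$ inserted as an operator on the auxiliary space $M_\nu$, in two different ways and equate the results. On one hand, since $c_2$ is central it acts on the Verma module $M_\nu$ by the scalar $c_2(\nu)$, so this trace equals $c_2(\nu)\,\Psi_{b,V}^{(\nu)}(\lambda)$; this is exactly the identity recorded just above the statement. On the other hand I would substitute the explicit decomposition $c_2=\sum_i h_i^2+\sum_{\alpha>0}(e_\alpha f_\alpha+f_\alpha e_\alpha)$ and compute the contribution of each summand to the trace separately, using that the individual pieces do \emph{not} act as scalars even though their sum does. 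Equating the two computations will produce (\ref{eigen}), with eigenvalue $c_2(\nu)$ on the right-hand side.

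For the Cartan contribution I would use that left insertion of a Cartan element differentiates the matrix $(e^\lambda)_a=\pi^{M_\nu}(e^\lambda)$ in $\lambda$; concretely $\sum_i (h_i^2)_a (e^\lambda)_a=\Delta (e^\lambda)_a$, the identity noted just before the lemma. Because the intertwiner $b$ is independent of $\lambda$, the Laplacian commutes with $b$ and with the trace, so this summand contributes exactly $\Delta\,\Psi_{b,V}^{(\nu)}(\lambda)$.

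For the root contributions I would invoke the corollary above, applied to each positive root $\alpha$: adding the two Proposition identities (\ref{a}) and (\ref{b}) and using that $\Psi_{b,V}^{(\nu)}(\lambda)$ lies in the zero weight subspace $V[0]$ yields
\[
Tr_{M_\nu}\bigl(b(e_\alpha f_\alpha+f_\alpha e_\alpha)e^\lambda\bigr)=\frac{2\pi^V(f_\alpha e_\alpha)}{(1-h_\alpha)(1-h_{-\alpha})}\Psi_{b,V}^{(\nu)}(\lambda)-\frac{1+h_\alpha}{1-h_\alpha}\Bigl(\alpha,\frac{\pa}{\pa\lambda}\Bigr)\Psi_{b,V}^{(\nu)}(\lambda).
\]
The zero weight condition is essential here: it is precisely what lets me replace $\pi^V(e_\alpha f_\alpha)+\pi^V(f_\alpha e_\alpha)$ by $2\pi^V(f_\alpha e_\alpha)$, since their difference is $\pi^V(H_\alpha)$, which annihilates $V[0]$. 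Summing over all $\alpha>0$ and adding the Cartan term assembles the entire left-hand side of (\ref{eigen}); equating with $c_2(\nu)\Psi_{b,V}^{(\nu)}$ then finishes the argument.

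The genuinely substantive work lives inside the lemma and proposition, which are already established: the cyclicity-of-trace manipulations that move $e^\lambda$ past $e_\alpha$ and $f_\alpha$ and thereby generate the factors $1/(1-h_{\pm\alpha})$, together with treating the trace over the infinite-dimensional module $M_\nu$ as a convergent Laurent series in $e^\lambda$. Granting those, the present theorem is a bookkeeping assembly, and the only point demanding real care is the zero weight reduction in the root term, so that is where I would concentrate attention.
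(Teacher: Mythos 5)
Your proposal matches the paper's own argument essentially step for step: insert $c_2$ into the trace, use centrality to get $c_2(\nu)\Psi_{b,V}^{(\nu)}$ on one side, and on the other side split $c_2$ into its Cartan part (giving $\Delta$ via $\sum_i h_i^2\,e^\lambda=\Delta e^\lambda$) plus the root terms handled by the Proposition and its Corollary, with the zero-weight condition converting $\pi^V(e_\alpha f_\alpha+f_\alpha e_\alpha)$ into $2\pi^V(f_\alpha e_\alpha)$. You even correctly identify that the eigenvalue should read $c_2(\nu)$, consistent with the identity $c_2(\nu)\Psi_{b,V}^{(\nu)}=Tr_a(b(c_2)_a(e^\lambda)_a)$ stated just before the theorem (the $\mu$ on the right of \eqref{eigen} is a typo in the source).
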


It is easy to see that if we replace Verma module $M_\nu$ by an irreducible 
representation $V_\nu$ with the highest weight $\mu$, the theorem and its proof hold. Note that when $V=V_{\mu_1}\otimes \dots \otimes V_{\mu_N}$  with the diagonal  action of $\g$, and $b=b_1\dots b_N$, where $b_i$ are the same intertwines as in (\ref{N-t-fncn}), the trace function $\Phi_{b,V}^{(\nu)}$ becomes the trace function $\Psi_{a,\mu}^{(\nu)}$ (\ref{N-t-fncn}) with $\nu=\nu_N$.

This proves that trace functions (\ref{N-t-fncn}) satisfy (\ref{eigen}).

\subsection{Normalization}

\begin{lemma}
\begin{equation}\label{delta-id}
\Delta(\delta^{-1})=-(\rho,\rho)-D(\delta^{-1}
\end{equation}
where $D=\sum_{\alpha>0}\frac{\xi_\alpha+\xi_{-\alpha}}{\xi_\alpha-\xi_{-\alpha}}(\alpha, \frac{\pa}{\pa \lambda})$
and $\delta$ is the denominator in the Weyl character formula:
\[
\delta=\prod_{\alpha>0}(\xi_\alpha-\xi_{-\alpha})
\]
\end{lemma}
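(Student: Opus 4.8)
The plan is to reduce the identity to two ingredients: that $\delta$ is an eigenfunction of the Laplacian $\Delta=(\frac{\pa}{\pa\lambda},\frac{\pa}{\pa\lambda})$, and that the first-order operator $D$ is precisely twice the derivation along $\nabla\log\delta$. Write $\nabla f=\frac{\pa f}{\pa\lambda}$ for the gradient on $\hfr$, so that $(\alpha,\frac{\pa}{\pa\lambda})f=(\alpha,\nabla f)$ and $\Delta f=(\nabla f,\nabla f)$ when applied to products. The whole computation is then just the chain rule together with one structural fact about $\delta$.

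First I would invoke the Weyl denominator formula to rewrite
\[
\delta=\prod_{\alpha>0}(\xi_\alpha-\xi_{-\alpha})=\sum_{w\in W}\textup{sgn}(w)\,e^{(w\rho,\lambda)},
\qquad \rho=\tfrac12\sum_{\alpha>0}\alpha .
\]
Since $\Delta$ acts on a pure exponential $e^{(\eta,\lambda)}$ by multiplication by $(\eta,\eta)$, and since the Weyl group acts by isometries so that $(w\rho,w\rho)=(\rho,\rho)$ for every $w\in W$, each summand is an eigenfunction with the \emph{same} eigenvalue. Hence
\[
\Delta\delta=(\rho,\rho)\,\delta .
\]
This is the only genuine content of the lemma.

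Next I would record the elementary Leibniz identity, valid for any nonvanishing $f$ on $\hfr$:
\[
\Delta(f^{-1})=2f^{-3}(\nabla f,\nabla f)-f^{-2}\Delta f
=2f^{-1}(\nabla\log f,\nabla\log f)-f^{-2}\Delta f .
\]
Specializing to $f=\delta$ and feeding in $\Delta\delta=(\rho,\rho)\delta$ turns the second term into $-(\rho,\rho)\,\delta^{-1}$, which is the constant-type term of the claimed formula (to be read against $\delta^{-1}$). It then remains to match the first term with $-D(\delta^{-1})$. Differentiating $\log\delta=\sum_{\alpha>0}\log(\xi_\alpha-\xi_{-\alpha})$ and using $\nabla(\xi_\alpha-\xi_{-\alpha})=\tfrac12(\xi_\alpha+\xi_{-\alpha})\,\alpha$ gives
\[
\nabla\log\delta=\tfrac12\sum_{\alpha>0}\frac{\xi_\alpha+\xi_{-\alpha}}{\xi_\alpha-\xi_{-\alpha}}\,\alpha ,
\]
so that for any $f$ one has $Df=\sum_{\alpha>0}\frac{\xi_\alpha+\xi_{-\alpha}}{\xi_\alpha-\xi_{-\alpha}}(\alpha,\nabla f)=2(\nabla\log\delta,\nabla f)$.

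Applying the last identity to $f=\delta^{-1}$ and using $\nabla(\delta^{-1})=-\delta^{-1}\nabla\log\delta$ yields
\[
D(\delta^{-1})=2(\nabla\log\delta,\nabla\delta^{-1})=-2\,\delta^{-1}(\nabla\log\delta,\nabla\log\delta),
\]
which is exactly the negative of the first term in the Leibniz expansion of $\Delta(\delta^{-1})$. Adding the two contributions gives $\Delta(\delta^{-1})=-(\rho,\rho)\,\delta^{-1}-D(\delta^{-1})$, as asserted. The step requiring the most care is the identification $Df=2(\nabla\log\delta,\nabla\,\cdot\,)$: it is precisely this that forces the quadratic gradient term produced by the chain rule to reassemble into $-D(\delta^{-1})$ rather than into some other $\coth$-weighted first-order expression, and getting the factor of $2$ and the overall sign right there is where a careless computation would go astray.
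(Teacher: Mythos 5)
Your proof is correct: the eigenvalue computation $\Delta\delta=(\rho,\rho)\delta$ via the Weyl denominator formula, the identification $Df=2(\nabla\log\delta,\nabla f)$, and the Leibniz expansion of $\Delta(\delta^{-1})$ all check out, and together they yield $\Delta(\delta^{-1})=-(\rho,\rho)\delta^{-1}-D(\delta^{-1})$, which is the intended reading of the (typographically garbled) display in the lemma. However, your route is genuinely different from the paper's. The paper does not compute anything directly: it specializes the already-established eigenvalue equation \eqref{eigen} for trace functions to the case where $V$ is the trivial representation and $b=\textup{id}:M_\nu\to M_\nu$, so that $\Psi^{(\nu)}_{b,V}(\lambda)=e^{(\nu+\rho,\lambda)}/\delta$ is the Verma module character; since $\pi^V(f_\alpha e_\alpha)=0$ for trivial $V$, equation \eqref{eigen} collapses to exactly \eqref{delta-id}. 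The paper's argument is a one-line corollary of the preceding radial-part computation and makes transparent why the normalization by $\delta$ removes the first-order $\coth$-type term from the Hamiltonian; the cost is that it inherits whatever sign conventions and possible slips are present in \eqref{eigen}. Your argument is self-contained and independently verifiable --- it uses only the Weyl denominator formula and the chain rule --- and in particular it pins down unambiguously the factor of $2$ and the signs, which is a useful cross-check on the conventions for $D$ (note the paper itself writes $D$ with opposite sign in Appendix B versus the lemma). Either proof is acceptable; yours serves as an independent confirmation rather than a reconstruction of the paper's.
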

\begin{proof}
Consider the trace function for the trivial representation $V$
with $b=id: M_\nu\to M_nu$. In this case $\Psi_{b,V}^{(\nu)}=\frac{e^{(\nu+\rho, \lambda)}}{\delta}$ is the character of $M_\nu$. In this case the identity (\ref{eigen})
immediately implies (\ref{delta-id}).
\end{proof}

\begin{proposition}
\[
\delta\circ (\Delta+D)\circ \delta^{-1}=\Delta-||\rho||^2
\]
\[
\delta\circ (h, \frac{\pa}{\pa \lambda})\circ \delta^{-1}=-\sum_{\alpha>0}\frac{\xi_\alpha+\xi_{-\alpha}}{\xi_\alpha-\xi_{-\alpha}}H_\alpha
\]
\end{proposition}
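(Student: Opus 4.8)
The plan is to read both formulas as \emph{conjugations of a differential operator on $\hfr$ by the multiplication operator $\delta$}, and to reduce them to two inputs. The first is the Lemma just proved, which I will use in the form $(\Delta+D)\delta^{-1}=-||\rho||^2\,\delta^{-1}$, i.e. $\delta^{-1}$ is an eigenfunction of $\Delta+D$. The second is the elementary logarithmic derivative of the Weyl denominator: from $\delta=\prod_{\alpha>0}(\xi_\alpha-\xi_{-\alpha})$ together with $\frac{\pa}{\pa\lambda}\xi_{\pm\alpha}=\pm\frac12 H_\alpha\,\xi_{\pm\alpha}$ one obtains
\[
\frac{\pa}{\pa\lambda}\log\delta=\frac12\sum_{\alpha>0}\frac{\xi_\alpha+\xi_{-\alpha}}{\xi_\alpha-\xi_{-\alpha}}H_\alpha=d(\lambda).
\]
The structural observation that drives every cancellation is that the first-order operator $D$ is precisely $2\bigl(\frac{\pa}{\pa\lambda}\log\delta,\frac{\pa}{\pa\lambda}\bigr)$, which one checks using $(H_\alpha,\frac{\pa}{\pa\lambda})=(\alpha,\frac{\pa}{\pa\lambda})$.

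For the first identity I would apply $\delta\circ(\Delta+D)\circ\delta^{-1}$ to an arbitrary test function $f$ and expand by the Leibniz rule in an orthonormal basis $\{h_i\}$ of $\hfr$. Three kinds of terms appear: the pure second-order term rebuilds $\Delta f$; the pure potential term is $\delta\,[(\Delta+D)\delta^{-1}]\,f$, which the Lemma collapses to $-||\rho||^2 f$; and the remaining first-order contributions are $2\delta(\frac{\pa}{\pa\lambda}\delta^{-1},\frac{\pa}{\pa\lambda}f)+Df=-2(\frac{\pa}{\pa\lambda}\log\delta,\frac{\pa}{\pa\lambda}f)+Df$, which vanish exactly because $D=2(\frac{\pa}{\pa\lambda}\log\delta,\frac{\pa}{\pa\lambda})$. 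Collecting the three contributions yields $\Delta-||\rho||^2$.

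For the second identity the computation is purely first order. Since the Cartan operators $h_i$ do not depend on $\lambda$, they commute with multiplication by $\delta^{\pm1}$, so the conjugation acts only on the derivatives via $\delta\,\pa_i\,\delta^{-1}=\pa_i-\pa_i\log\delta$. Summing against the $h_i$ and using the expansion $\sum_i h_i\,(h_i,H_\alpha)=H_\alpha$ of a Cartan element in the orthonormal basis turns the correction $\sum_i h_i\,\pa_i\log\delta$ into $\frac12\sum_{\alpha>0}\frac{\xi_\alpha+\xi_{-\alpha}}{\xi_\alpha-\xi_{-\alpha}}H_\alpha$. Combined with the moment-map identification of the derivative term from Appendix \ref{M}, this gives the stated right-hand side; it is also exactly the step that deletes the $-d(\lambda)_i$ term from the bare operator \eqref{kzb} and produces the normalized KZB operators $D_j$ of the Theorem.

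The routine parts are the Leibniz expansion and the orthonormal-basis contractions; the step that deserves care is matching conventions, since the precise coefficient and sign of the correction depend on the normalization $\xi_\alpha=e^{(\lambda,\alpha)/2}$, on the direction of conjugation ($\delta$ versus $\delta^{-1}$), and on the identity $\sum_i h_i(h_i,\cdot)=\mathrm{id}$. Getting these right is what makes the first-order cross terms in the first identity cancel exactly and the correction in the second come out with the factor recorded above. The genuinely non-trivial content is contained entirely in the preceding Lemma; granting it, both identities are consequences of the single computation $\frac{\pa}{\pa\lambda}\log\delta=d(\lambda)$.
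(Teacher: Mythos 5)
Your argument is correct and is exactly the computation the paper intends (the Proposition is stated without proof, as an immediate consequence of the preceding Lemma): conjugate, expand by the Leibniz rule, use $(\Delta+D)\delta^{-1}=-||\rho||^2\,\delta^{-1}$ for the zeroth-order piece, and cancel the first-order cross terms via $\frac{\pa}{\pa\lambda}\log\delta=d(\lambda)$, i.e.\ $D=2\bigl(\frac{\pa}{\pa\lambda}\log\delta,\frac{\pa}{\pa\lambda}\bigr)$. You are also right that the second displayed identity is the one requiring care with conventions: your computation gives $\delta\circ(h,\frac{\pa}{\pa\lambda})\circ\delta^{-1}=(h,\frac{\pa}{\pa\lambda})-(h,d(\lambda))$, which is precisely what strips the $d(\lambda)_i$ term from \eqref{kzb} to produce the normalized operators $D_i$, whereas the right-hand side as printed in the paper omits the surviving derivative term and the factor $\frac12$.
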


Define the normalized trace function as
\[
F_{b,V}^{(\nu)}(\lambda)=\delta^{-1}  \Psi_{b,V}^{(\nu)}
\]
We proved:
\begin{theorem} Normalized trace functions satisfy differential equations
\[
H^{(N)}_2F_{b,V}^{(\nu)}(\lambda)=c_2(\nu)F_{b,V}^{(\nu_N)}(\lambda), \ \ D_iF_{b,V}^{(\nu)}(\lambda)=\frac{c_2(\nu_{i+1})-c_2(\nu_i)}{2}F_{b,V}^{(\nu)}(\lambda)
\]
where $i=1,\dots, N$ and
\[
H^{(N)}_2=\Delta +2\sum_{\alpha>0}\frac{\pi^V(f_\alpha e_\alpha)}{(1-h_\alpha)(1-h_{-\alpha})}-||\rho||^2
\]
and
\[
D_i=(h^{(i)}, \frac{\pa}{\pa\lambda})-\sum_{k=1}^{i-1}r_{ki}(\lambda)+\sum_{k=i+1}^n r_{ik}(\lambda)
\]
\end{theorem}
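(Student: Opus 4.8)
The plan is to derive both families of equations for the normalized function $F^{(\nu)}_{b,V}=\delta\,\Psi^{(\nu)}_{b,V}$ by conjugating, by the Weyl denominator $\delta$, the two systems already proved for the unnormalized trace function: the scalar eigenvalue identity \eqref{eigen} and the topological KZB system \eqref{kzb}. In each case the mechanism is identical: split the operator acting on $\Psi$ into its genuinely differential part and a collection of multiplication operators (scalar-, $\End(V)$- or $\g^{\otimes 2}$-valued functions of $\lambda$), observe that the latter commute with multiplication by $\delta$ and are therefore untouched by the conjugation, and apply the Normalization identities to the differential part alone. The power of $\delta$ is fixed by requiring a single conjugation to normalize both systems; this is consistent with the character computation $\Psi=\delta^{-1}e^{(\nu+\rho,\lambda)}$ used in Lemma \ref{delta-id}, for which $\delta\Psi=e^{(\nu+\rho,\lambda)}$ is the clean eigenfunction.

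For the quadratic Hamiltonian I would start from \eqref{eigen}. The one preliminary identity needed is $-\frac{1+h_\alpha}{1-h_\alpha}=\frac{\xi_\alpha+\xi_{-\alpha}}{\xi_\alpha-\xi_{-\alpha}}$ (immediate from $h_\alpha=\xi_\alpha^2$), which shows that the first-order part of the operator in \eqref{eigen} is exactly $\Delta+D$ with $D=\sum_{\alpha>0}\frac{\xi_\alpha+\xi_{-\alpha}}{\xi_\alpha-\xi_{-\alpha}}(\alpha,\frac{\pa}{\pa\lambda})$, while the matrix term $2\sum_{\alpha>0}\frac{\pi^V(f_\alpha e_\alpha)}{(1-h_\alpha)(1-h_{-\alpha})}$ is multiplication by an $\End(V)$-valued function. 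Conjugating by $\delta$ and invoking $\delta\circ(\Delta+D)\circ\delta^{-1}=\Delta-\|\rho\|^2$ replaces the differential part by $\Delta-\|\rho\|^2$ and leaves the potential in place, so the conjugated operator is precisely $H^{(N)}_2$. Since $\Psi^{(\nu)}_{b,V}$ has $I_\mu$-eigenvalue $c_2(\nu)$, the function $F^{(\nu)}_{b,V}$ satisfies $H^{(N)}_2F^{(\nu)}_{b,V}=c_2(\nu)F^{(\nu)}_{b,V}$.

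For the operators $D_i$ I would conjugate the KZB operator of \eqref{kzb}. The dynamical $r$-matrix contributions $r_{ki}(\lambda),r_{ik}(\lambda)$ and the Cartan term $d(\lambda)_i$ are all multiplication operators and hence invariant under conjugation by $\delta$; the only non-commuting piece is $(h^{(i)},\frac{\pa}{\pa\lambda})$. Its conjugate acquires the logarithmic-derivative correction $(h^{(i)},\frac{\pa}{\pa\lambda}\log\delta)$, and the computation $\frac{\pa}{\pa\lambda_a}\log\delta=\tfrac12\sum_{\alpha>0}\alpha_a\frac{\xi_\alpha+\xi_{-\alpha}}{\xi_\alpha-\xi_{-\alpha}}$ contracted against $h^{(i)}$ identifies this correction with $d(\lambda)_i$ (this is the content of the second Normalization identity). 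Arranged with the correct sign, the generated $d(\lambda)_i$ cancels the Cartan term already present in \eqref{kzb}, leaving exactly $D_i$; since the KZB eigenvalue of $\Psi^{(\nu)}_{b,V}$ is $\tfrac12(c_2(\nu_{i+1})-c_2(\nu_i))$, this yields $D_iF^{(\nu)}_{b,V}=\tfrac12(c_2(\nu_{i+1})-c_2(\nu_i))F^{(\nu)}_{b,V}$.

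The one genuinely delicate point — and the step I would check most carefully — is the sign and direction bookkeeping that makes a single power of $\delta$ work for both systems simultaneously: one must verify that the shift produced by conjugating $(h^{(i)},\frac{\pa}{\pa\lambda})$ \emph{cancels} rather than reinforces the $d(\lambda)_i$ term of the KZB operator, and that the same conjugation sends $\Delta+D$ to $\Delta-\|\rho\|^2$ as in the first Normalization identity. This is precisely the place where the two conjugation identities of the Normalization subsection must be stated with matching conventions; once they are, everything else reduces to the trivial observation that potentials and $r$-matrices are multiplication operators, together with the already-established equations \eqref{eigen} and \eqref{kzb}.
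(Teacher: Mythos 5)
Your proposal is the paper's own argument: the theorem is stated immediately after the Normalization lemma and proposition precisely because it is meant to follow by conjugating \eqref{eigen} and \eqref{kzb} by the Weyl denominator, with the potential and the dynamical $r$-matrix terms passing through the conjugation as multiplication operators. Your treatment of the quadratic Hamiltonian is exactly right, including the observation that $-\frac{1+h_\alpha}{1-h_\alpha}=\frac{\xi_\alpha+\xi_{-\alpha}}{\xi_\alpha-\xi_{-\alpha}}$ identifies the first-order part of \eqref{eigen} with $D$, and your choice $F=\delta\,\Psi$ (rather than the printed $F=\delta^{-1}\Psi$) is the one forced by $\delta\circ(\Delta+D)\circ\delta^{-1}=\Delta-\|\rho\|^2$ together with the stated form of $H^{(N)}_2$.

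The one step you flagged but did not actually carry out --- the sign bookkeeping for the $D_i$ equations --- is a genuine issue, not a formality. Conjugation by the \emph{same} $\delta$ that normalizes the Hamiltonian gives $\delta\circ(h^{(i)},\frac{\pa}{\pa\lambda})\circ\delta^{-1}=(h^{(i)},\frac{\pa}{\pa\lambda})-(h^{(i)},\frac{\pa}{\pa\lambda}\log\delta)=(h^{(i)},\frac{\pa}{\pa\lambda})-d(\lambda)_i$, i.e.\ the correction carries the same sign as the term $-d(\lambda)_i$ already present in \eqref{kzb}; taken at face value the conjugated operator is $D_i-2d(\lambda)_i$, not $D_i$. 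The cancellation you describe requires either that the Cartan term enter \eqref{kzb} with a plus sign or that the KZB system be normalized by $\delta^{-1}$ while the Hamiltonian is normalized by $\delta$, and which of these holds can only be settled by re-tracking the sign of $m(r(\lambda))=-\frac{1}{2}c_2+d(\lambda)$ through the proof of \eqref{kzb}. So your route is structurally identical to the paper's, but the step you yourself single out as delicate is left as an assertion, and it is exactly the point where the printed conventions (the power of $\delta$ in the definition of $F$, the garbled second conjugation identity in the Normalization proposition, and the sign of $d(\lambda)_i$) are not mutually consistent; a complete proof must resolve that sign explicitly rather than assume it works out.
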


\section{Useful identities with the Haar measure}\label{A}

\begin{figure}[t!]\label{int-2}
\begin{center}
\includegraphics[width=0.8\textwidth, angle=0.0, scale=0.6]{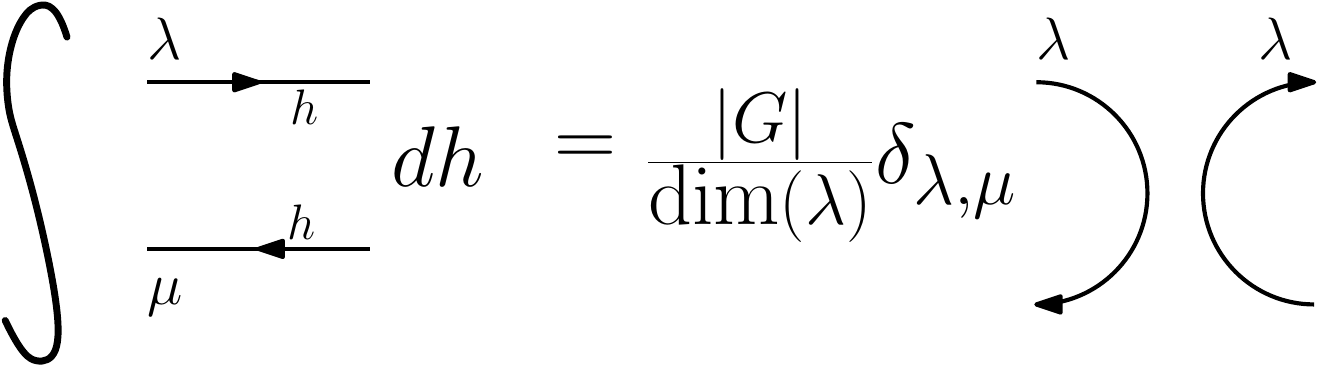}
\caption{Graphical representation of the identity (\ref{integ-2}). }
\end{center}
\end{figure}

\begin{figure}[t!]\label{int-3}
\begin{center}
\includegraphics[width=1.0\textwidth, angle=0.0, scale=0.6]{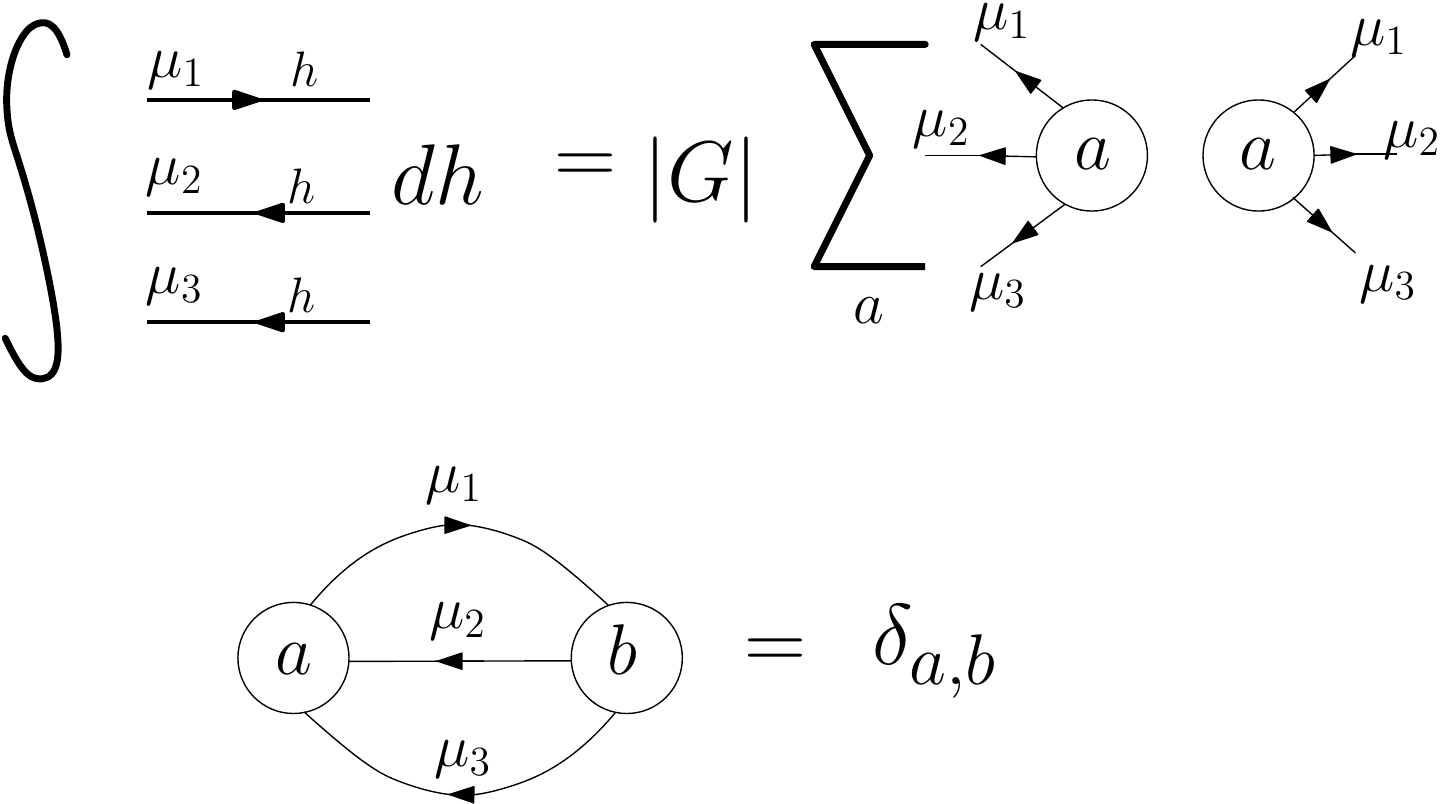}
\caption{ Graphical representation of the identity (\ref{integ-3}). }
\end{center}
\end{figure}

Choose a basis $\{e_i\}$ in the representation space $V_\lambda$ and denote by $\pi^\lambda(g)^i_j$
matrix elements of linear operator $\pi_\lambda(g)$ acting on this space $\pi_\lambda(g)e_j=\sum_i
\pi^\lambda(g)^i_j e_i$. Let $dh$ be the Haar measure on $G$, then

\begin{equation}\label{integ-2}
\int_{G^{\times 2}} \pi^{\lambda}(h^{-1})_j^i \pi^{\mu}(h)^k_ldh=\frac{1}{\dim(\lambda)}\delta_{\lambda, \mu}\delta_l^i\delta_j^k
\end{equation}
Note that if $\{e^i\}$ is the dual basis then $\pi_\lambda(g)^*(e^i)=\sum_j \pi^\lambda(g)^i_j e^j$.
In the basis free form this identity can be written as
\[
\int_{G^{\times 2}} (\pi^{\lambda})^*(h^{-1})\otimes \pi^{\mu}(h)dh=\frac{1}{\dim(\lambda)}\delta_{\lambda, \mu}\iota \circ ev
\]
where $ev: V_\lambda^*\otimes V_\lambda\to \CC$ is the evaluation map
and $\iota: \CC\to V_\lambda^*\otimes V_\lambda$ is the injection map.

\begin{equation}\label{integ-3}
\int_{G^{\times 3}} \pi^{\mu_1}(h^{-1})^i_j \pi^{\mu_2}(h)^k_l \pi^{\mu_3}(h)^s_tdh=\sum_{a\in (V_{\mu_1}^*\otimes V_{\mu_2}\otimes V_{\mu_3})^G } a^{iks}\overline{a}_{jlt}
\end{equation}
where $\{a\}$ is a basis in subspace of $G$-invariant vectors in the triple tensor product $(V_{\mu_1}^*\otimes V_{\mu_2}\otimes V_{\mu_3})^G$ and 
$\{\overline{a}\}$ is the dual basis in the dual vector space. 

These and other similar identities can be written in a basis free form  as
\[
\int_{G^{\times m}} \pi^{\mu_1}(h)\otimes \pi^{\mu_2}(h)\otimes\dots\otimes \pi^{\mu_m}(h)dh=P_0
\]
where $P_0$ is the orthogonal projector onto the subspace $(V_{\mu_1}\otimes V_{\mu_2}\otimes \dots \otimes V_{\mu_m})^G\subset V_{\mu_1}\otimes V_{\mu_2}\otimes \dots \otimes V_{\mu_m}$ of $G$-invariant vectors.


\begin{thebibliography}{99}

\bibitem{AR} S. Arthamonov, N. Reshetikhin, Superintegrable Systems on Moduli Spaces of Flat Connections, Commun. Math. Phys. 386, 1337-1381 (2021). arXiv:1909.08682.

\bibitem{dCRS} Hadewijch De Clercq, Nicolai Reshetikhin, Jasper Stokman, Graphical calculus for quantum vertex operators, I: The dynamical fusion operator,  arXiv:2205.09464; 

\bibitem{CMR} Cordes, Stefan; Moore, Gregory; Ramgoolam, Sanjaye (1997). "Large N 2D Yang-Mills theory and topological string theory". Communications in Mathematical Physics. 185 (3): 543–619.

\bibitem{DK} M. Douglas, V. Kazakov, Large N Phase Transition in Continuum $QCD_2$, Phys. Lett. {\bf 319B} (1993) 219; hep-th/9305047.

\bibitem{Et} P.I. Etingof, A.A. Kirillov Jr., On the affine analogue of Jack and Macdonald polynomials. Duke Math. J.
74 (1994), 585–614; P. Etingof, O. Schiffmann, Twisted traces of intertwiners for Kac-Moody algebras and classical dynamical
r-matrices corresponding to Belavin-Drinfeld triples, Math. Res. Lett. 6 (1999), 593–612.

\bibitem{GM} D. Gross, A. Matytsin, Some properties of large N two-dimensional Yang-Mills theory, Nucl.Phys.B 437 (1995) 541-584

\bibitem{Feh} Feher, L.; Pusztai, B. G. Twisted spin Sutherland models from quantum Hamiltonian reduction. J. Phys. A 41 (2008), no. 19, 194009

\bibitem{Felder} G. Felder, Conformal field theory and integrable systems associated to elliptic curves, Proceedings of the International Congress of Mathematicians, Vol. 1,2 (Zurich, 1994), pp. 1247--1255, Birkhauser, Basel, 1995.


\bibitem{GT} Gross, David; Taylor IV, Washington. "Two-dimensional QCD is a string theory". Nuclear Physics B. 400: 181–208.


\bibitem{Mig} Makeenko, Yuri M.; Migdal, A. A. (1980). "Self-consistent area law in QCD". Physics Letters B. 97 (2): 253--256.

\bibitem{Mn} Iraso, Riccardo, and Pavel Mnev, Two-dimensional Yang--Mills theory on surfaces with corners in Batalin–Vilkovisky formalism,  Communications in Mathematical Physics 370, no. 2 (2019): 637--702.

\bibitem{PR} M. Polyak, N. Reshetikhin, On 2D Yang--Mills theory and invariants of
links, in: Deformation Theory and Symplectic Geometry, Ascona, 1996,
pp. 223–248.

\bibitem{RT} N. Reshetikhin, V. Turaev, Invariants of 3-manifolds via link polynomials and quantum groups, Invent. Math. {\bf 103}, 547-597 (1991).

\bibitem{R1} N. Reshetikhin, Degenerate integrability of quantum spin Calogero-Moser systems. Lett. Math. Phys. 107 (2017), no. 1, 187--200.

\bibitem{R2} N. Reshetikhin, Semiclassical geometry of integrable systems. J. Phys. A 51 (2018), no. 16, 164001, 24 pp.

\bibitem{R3} N. Reshetikhin, Spin Calogero-Moser models on symmetric spaces. Integrability, quantization, and geometry. I. Integrable systems, 377--402, Proc. Sympos. Pure Math., 103.1, Amer. Math. Soc., Providence, RI, [2021]

\bibitem{R4} N. Reshetikhin, Classical spin Calogero-Moser chains, in preparation.

\bibitem{R-ICMP} N. Reshetikhin, Quantization of superintegrable 
systems on moduli spaces of flat connections over surfaces,
talk at the ICMP-22. 

\bibitem{RSQ} N. Reshetikhin, J. Stokman, Quantum superinegrable systems and graph connections. In preparation.

\bibitem{SR} Jasper Stokman, Nicolai Reshetikhin, N-point spherical functions and asymptotic boundary KZB equations, Invent. Math. 229 (2022), no. 1, 1--86, arXiv:2002.02251.

\bibitem{W} E. Witten, Quantum field theory and Jones polynomial, Commun. Math. Phys., {\bf 121}, 351--399 (1989).

\bibitem{Witt-1} E. Witten,(1991). "On quantum gauge theories in two dimensions". Communications in Mathematical Physics. 141 (1): 153--209.

\bibitem{Witt-2} E. Witten,  (1992). "Two-dimensional gauge theories revisited". Journal of Geometry and Physics. 9 (4): 303--368.

\end{thebibliography}
\end{document}